\newtheorem{theorem}{Theorem}
\newtheorem{proposition}{Proposition}
\newtheorem{corollary}{Corollary}
\newtheorem{lemma}{Lemma}
\newtheorem{question}{Question}
\def\FP{{the dollar trader }}
\def\AP{{the euro trader }}
\DeclareMathOperator*{\lcm}{lcm}
\newcommand{\calA}{\mathscr{A}}
\newcommand{\calR}{\mathscr{R}}
\newcommand{\calD}{\mathscr{D}}
\newcommand{\calI}{\mathscr{I}}
\newcommand{\bbG}{\mathbb{G}}
\newcommand{\bbD}{\mathbb{D}}
\newcommand{\bbR}{\mathbb{R}}
\newcommand{\bA}{\boldsymbol{A}}
\newcommand{\bD}{\boldsymbol{D}}
\newcommand{\bG}{\boldsymbol{G}}
\newcommand{\hbA}{\boldsymbol{\Hat{A}}}
\newcommand{\tbA}{\boldsymbol{\Tilde{A}}}
\newcommand{\ttbA}{\boldsymbol{\Tilde{\Tilde{A}}}}
\newcommand{\be}{\boldsymbol{e}}
\newcommand{\bv}{\boldsymbol{v}}
\newcommand{\zer}{\hphantom{-}0}
\newenvironment{smallmatrix-mod}{\null\,\vcenter\bgroup
  \Let@\restore@math@cr\default@tag
  \baselineskip6\ex@ \lineskip2.5\ex@ \lineskiplimit\lineskip
  \ialign\bgroup\hfill$\m@th\scriptstyle##$\hfil&&\space\hfill
  $\m@th\scriptstyle##$\hfil\crcr
}{%
  \crcr\egroup\egroup\,%
}
\def\ps@pprintTitle{%
     \let\@oddhead\@empty
     \let\@evenhead\@empty
     \def\@oddfoot{\footnotesize\itshape~\hfill\today}%
     \let\@evenfoot\@oddfoot}
\begin{document}
\begin{frontmatter}
\title{Periodic Sequences of Arbitrage:
A Tale of Four Currencies\tnoteref{ak}}

\tnotetext[ak]{The authors are grateful to Andrew Caplin, Patrick
Minford, Stephen Ross and Dimitri Vayanos for useful comments or
suggestions. The usual disclaimer of responsibility applies. The
authors would also like to thank two anonymous referees of this
journal for their incisive and constructive suggestions as to the
revision of this paper.}

\author[uc]{Rod Cross}
\ead{rod.cross@strath.ac.uk}

\author[ittp]{Victor Kozyakin\corref{t1}}
\ead{kozyakin@iitp.ru}

\author[ucc]{Brian O'Callaghan}
\ead{briantoc@gmail.com}

\author[ucc]{Alexei Pokrovskii\corref{t1}\fnref{t2}}
\ead{A.Pokrovskii@ucc.ie}

\author[lce]{Alexey Pokrovskiy}
\ead{A.Pokrovskiy@lse.ac.uk}

\cortext[t1]{Authors were partially supported by the Federal Agency
for Science and Innovations of the Russian Federation (state
contract no. 02.740.11.5048)}

\fntext[t2]{It is with great sadness we report that Alexei
Pokrovskii died shortly after this paper had been completed.}

\address[uc]{Department of Economics,
University of Strathclyde,\\ Sir William Duncan Building, 130
Rottenrow, Glasgow, G4 OGE, Scotland}

\address[ittp]{Institute for Information Transmission Problems, Russian Academy of Sciences,\\
Bolshoj Karetny lane 19, Moscow
127994 GSP-4, Russia}

\address[ucc]{Department of Applied Mathematics
University College Cork, Ireland}

\address[lce]{London School of Economics and
Political Science,\\ Houghton Street, London WC2A 2AE, UK}

\begin{abstract}
This paper investigates arbitrage chains involving four currencies
and four foreign exchange trader-arbitrageurs. In contrast with the
three-currency case, we find that arbitrage operations when four
currencies are present may appear periodic in nature, and not
involve smooth convergence to a ``balanced'' ensemble of exchange
rates in which the law of one price holds.  The goal of this
article is to understand some interesting features of sequences of
arbitrage operations, features which might well be relevant in
other contexts in finance and economics.
\end{abstract}

\begin{keyword}
Limits to arbitrage\sep Four currencies\sep Recurrent sequences\sep
Asyn\-chron\-ous systems

\medskip

\textit{JEL Classification}: C60, F31, D82
\end{keyword}

\end{frontmatter}

\section{Introduction}\label{S-intro}

An arbitrage operation involves buying some good or asset for a
lower price than that for which it can be sold, taking advantage of
any imbalance in the quoted prices.  The ``law of one price'' is a
statement of a key implication of the absence of arbitrage
opportunities. In turn arbitrage is often the process invoked to
explain why goods or assets  that are in some sense ``identical''
should have a common price.

A study of commodity prices since 1273 concluded that ``\ldots
despite the steady decline in transportation costs over the past
700 years, the repeated intrusion of wars and disease, and the
changing fashions of commercial policy, the volatility and
persistence of deviations in the law of one price have remained
quite stable'' \cite[p. 18]{rogoff1996}. The present paper
investigates a relatively neglected complication regarding
arbitrage operations, namely the order in which information about
arbitrage opportunities is presented, illustrating this in relation
to arbitrage chains involving four currencies. The key finding is
that arbitrage operations can be periodic in nature, rather than
involving a smooth convergence to a law of one price.

The early literature on the law of one price is coeval with the
purchasing power parity explanation of foreign exchange rates. The
terminology was coined in \cite{Cassel1916}, involving arbitrage
between relatively homogeneous goods priced in different currencies
\cite{rogoff1996,rogoff2001}. Empirical tests suggest that
arbitrage operations in goods do not exert a strong influence on
exchange rates until the price index deviations involved exceed
about 25\% \cite{Engel1999,obstfeld2001}. Innovations that were
expected to reduce price dispersion, such as the European Single
Market legislation coming into effect in 1992, and the Economic and
Monetary Union project beginning in 1999, have had little effect on
price level disparities \cite{wolf2003}. The degree of price level
dispersion between US cities has displayed no marked trend over
time \cite{Rogers2001}. A study of the prices charged for identical
products in IKEA stores in twenty-five countries revealed typical
price divergences of 20--50\%, differences that could not be
attributed to just country or location-specific factors
\cite{Haskel2001}. Among the most cited reasons for deviations from
the law of one price are transaction costs, taxes, transport costs,
trade barriers, the costs of searching for price differences,
nominal price rigidities, customer market pricing, nominal exchange
rate rigidities and differences in market power \cite{taylor2002}.

In relation to assets, an early application of the law of one price
was to the interest rate parity theory of the forward exchange
rate, whereby the ratio of the forward to spot exchange rate
between two currencies is equal to the ratio of the interest rates
in the two currencies over the forward period in question
\cite[p.~130]{keynes1923}. An arbitrage opportunity in relation to
assets can be defined as ``an investment strategy that guarantees a
positive payoff in some contingency with no possibility of a
negative payoff and with no net  investment''
\cite[online]{dybvig2008}. The absence of such arbitrage
opportunities has been seen as the unifying concept underlying
mainstream theories in finance, no-arbitrage principles being
applied in the Modigliani--Miller theorem of corporate capital
structure, in the Black--Scholes model of option pricing and in the
arbitrage pricing model of asset prices \cite{ross1978}. Actual
arbitrage operations in relation to assets often involve net
investment and risk and/or uncertainty, in addition to the
complications arising in relation to arbitrage in goods. Notable
deviations from the law of one price in financial markets have been
documented in relation to comparable circumstances applying to
closed-end country funds, American Depository Receipts, twin
shares, dual share classes and corporate spin-offs
\cite{lamont2003}. Among the limits to arbitrage in financial
markets are those arising from transactions costs
\cite{deardorff1979}, and those involving the capital requirements
of conducting arbitrage operations \cite{shleifer1997}. A
spectacular illustration of the capital limits to arbitrage was
provided by the demise of the Long-Term Capital Management (LTCM)
hedge funds. The arbitrage discrepancies being exploited in LTCM's
``convergence trades'' widened in 1998. LTCM attempted
unsuccessfully to raise new capital to finance its arbitrage
positions. To avoid a major financial collapse the  New York
Federal Reserve Board organised a bail-out by creditors
\cite{lowenstein2000}.

In what follows we focus on the limits to arbitrage arising from
the order in which information is disseminated to arbitrage
traders. The illustration used is for a foreign exchange (FX)
market with four FX traders and four currencies, see
Sections~\ref{S-3currencies} and \ref{S-4currencies}. An Arbiter,
the metaphorical equivalent of an unpaid auctioneer in a Walrasian
system, knows all the actual exchange rates. The individual FX
traders, however, initially know only the exchange rates involving
their own, domestic currencies. Justification for the assumptions
used in our model is provided in Section~\ref{S-structFX}. So the
US FX trader knows the exchange rates for the dollar against the
euro, sterling and yen, but not the cross exchange rates for the
non-dollar currencies. There are no transactions costs, no net
capital requirements and no risks involved in the arbitrage
operations. Instead we focus on the information dissemination
problem, and show that the order in which information about cross
exchange rate discrepancies, and hence arbitrage opportunities, is
presented makes an important difference to the sequences of
arbitrage operations conducted.

A general discussion of arbitrage dynamics is given in
Section~\ref{S-arbitrages}. An unexpected feature of the processes
considered in this paper is that, \emph{rather than there being a
smooth convergence to an ensemble of exchange rates with no
arbitrage opportunities, the arbitrage operations may display
periodicity and no necessary convergence on a cross exchange rate
law of one price}. See Proposition~\ref{32} in Section~\ref{mrSS}
for a rigorous explanation. A further unexpected feature is that,
\emph{starting at an ensemble of exchange rates which is not
balanced, and using special periodic sequences of arbitrages, the
Arbiter can achieve \textrm{any} balanced (satisfying the law of
one price) exchange rate ensemble}. See, in particular,
Theorem~\ref{arbH} in Section~\ref{mrSS} and Theorem~\ref{irratBC}
in Section~\ref{S-gencase}. These counter-intuitive results are
new, as far as we are aware. In line with the renowned
``impossibility theorem'' of \cite{arrow1951} these results suggest
an ``arbitrage impossibility theorem''. Proofs are relegated to
Section~\ref{S-proofs}.

The mathematical approach taken in this paper to the analysis of
arbitrage operation chains may be understood as a typical example
of the asynchronous interactions that are important in systems
theory and in control theory, see the monographs
\citep{BertTsi:89,AKKK:92:e,KasBh:2000} and the surveys
\cite{Koz:BCRI03-13,Koz:ICDEA04}. The arbitrage chains are
particularly relevant to desynchronised systems theory, see
\cite{AKKK:92:e}. Presence of an asynchronous interaction often
leads to a dramatic complication of the related mathematical
problems.  \cite{Koz:AiT90:6:e,Koz:AiT03:9:e} proved  that many
asynchronous problems cannot be solved algorithmically, and also
\cite{BT:IPL97,BT:SCL00,BT:Autom00} and \cite{TB:MCSS97}
demonstrated that, even in the cases when the problem is
algorithmically solvable, it is typically as hard to solve
numerically as the famous ``Travelling salesman problem,'' see
\cite{NP2} (that is, in the mathematical language, the problem is
NP-hard which is an abbreviation for ``Non-deterministic
Polynomial-time hard'' which means in the theory of algorithms that
a problem is very hard, if possible, to solve,  see \cite{NP}). In
this context the fact that the principal questions that arise in
analysis of arbitrage operation chains admit straightforward
combinatorial analysis came to the authors as a pleasant surprise.
Our construction uses a geometrical approach to visualisation of
arbitrage chains presented in
Sections~\ref{simpleASS}--\ref{S-proofs}, which may be useful in
relation to other problems in mathematical economics.

The periodicity results in this paper have implications for several
strands of literature. One is that dealing with the disequilibrium
foundations of equilibrium economics. The stability analysis of
Fisher poses the question: ``can one expect to prove that an
economy with rational agents conscious of disequilibrium and taking
advantage of arbitrage opportunities is driven (asymptotically) to
any equilibrium, Walrasian or constrained?''
\cite[pp.~86--87]{Fisher}. Fisher uses the assumption of ``no
favorable surprise'' as a means of demonstrating that a cessation
of exogenous shocks can lead to convergence to equilibrium. The
results in this paper suggest that there can be endogenous reasons,
arising from the cyclical response of arbitrage sequences to an
exogenous shock that gives rise to an arbitrage opportunity, why
convergence to equilibrium may not take place.

Another strand of literature to which our results relate is that on
market segmentation and arbitrage networks. Goods and assets are
not traded on a single exchange. Instead there are various trading
posts, such as commodity and stock exchanges. Other trades,
including a sizeable proportion of foreign exchange deals, are
conducted ``over the counter'' in direct transactions that bypass
formal exchanges. ``As a result, various clienteles trade on
different exchanges, and very few retail clients trade on more than
one exchange, let alone on all of them simultaneously''
\cite[p.~3]{RZ}. A key aspect of segmentation in the foreign
exchange ``market'' is that dealing rooms tend to specialise in
domestic currency trades. This provides a rationale for the
specification in this paper that foreign exchange dealers initially
are aware of only the exchange rates involving their domestic
currencies. We restrict our analysis to the case of 4 currencies,
with 6 principal exchange rates. The Financial Times gives daily
quotes for 52 currencies. The 1,326 principal exchange rates
involved suggest richer potential opportunities for arbitrage than
in the four currency case studied in the present paper. Bank for
International Settlements (BIS) data indicate that, in 2010,
transactions in these four currencies counted for 155.9\%{} of
global FX market turnover, the currency components being US dollars
(84.9\%), euros (39.1\%), Japanese yen (19.0\%) and pound sterling
(12.9\%). Because two currencies are involved in each transaction,
the \%{} shares sum to 200\%{} \cite[Table~B.4]{BIS}.

\section{Micro Structure of the FX market}\label{S-structFX}

In a centralised market trade takes place at prices that are public
information and traders face the same potential trading
opportunities. In contrast the FX market is decentralised, with the
end-user bank customers, banks, brokers and central banks involved
facing several possible methods of executing transactions, and
possibly different exchange rate quotes, some of which constitute
private information. BIS data for FX spot exchange rate
transactions in 2010 \cite[Table~E.24]{BIS} indicate the following
breakdown in execution methods as a \%{} of total global turnover:
inter-dealer direct (14.9\%), customer direct (21.6\%), voice
broker (8.6\%), electronic broking system (26.0\%), single-bank
electronic proprietary trading platforms (14.3\%) and multi-bank
dealing systems (14.5\%). Until the late 1980s FX transactions were
conducted largely by telephone, with FX dealers phoning
counterparties to get bid (buy) and offer (sell) quotes for
specific transaction amounts, there also being indirect dealing via
voice brokers who would search for matching interests between
clients, see \cite{GH}. The last two decades have seen a growth in
electronic methods of execution, a distinction being between
electronic broking systems such as Reuters Matching and the
Electronic Broking System Spot Dealing System (EBS), and single or
multi-bank proprietary dealing platforms.

A burgeoning literature investigates how this fragmented trading
structure impacts on price determination in FX markets, see
\cite{Lyons} and \cite{Evans} for surveys. The key contrast is
between the decentralised transactions conducted by FX dealers who
quote bid and offer prices that are not public information, and the
one-way bid or offer limit orders to buy and sell currencies at a
specific price that are accumulated by FX brokers in the
quasi-centralised segment of the market. This means that market
information is fragmented, FX dealers having private information
about the transactions forthcoming at their own quoted bid and
offer prices, and having access to the public information regarding
the order flows accumulated by the FX brokers. For analysis,
discussion and evidence regarding how order flows impact on
intra-day exchange rates see \cite{EL02,ST,EL08}.

The analysis in the present paper  assumes that FX dealers
initially know only the exchange rates for their own domestic
currencies, the order in which they discover imbalances in the
exchange rate ensemble, in the form of cross exchange rate
discrepancies, playing a key role in the arbitrage sequences
conducted. The fragmented nature of information in FX markets
suggests that this strong assumption has a whiff of reality in that
different FX dealers are likely to have disjoint information sets
and can conduct trades at different prices. Individual FX dealers
conducting bi-lateral trades with end-users receive private
information in the form of the orders forthcoming at their quoted
bid and ask prices, and this can give rise to profitable arbitrage
opportunities. For example, an FX dealer specialising in US dollars
might simultaneously receive large buy orders for euros and large
sell orders for Japanese yen, and suspect that euros are
under-priced relative to Japanese yen. After checking out the euro
-- Japanese yen exchange rates quoted in the inter-dealer market,
or in the brokered section of the market where information is
public, the dealer might discover that this is indeed the case, and
exploit this arbitrage opportunity regarding which other FX traders
are initially unaware.

The BIS data on the geographical distribution of FX market turnover
is informative in relation to the assumption in the present paper
that FX dealers initially are aware of only the exchange rates
involving their own domestic currency. Banks located in the UK
account for 37\%{} of global FX turnover, followed by the US
(18\%), Japan (6\%), Singapore (5\%), Switzerland (5\%), Hong Kong
(5\%) and Australia (4\%) --- see \cite[Graph~B.7]{BIS}. Although
cross-border transactions account for nearly two-thirds of FX
market turnover, this still leaves 35\%{} of the turnover being
local in nature \cite[Table~3.2]{BIS}, suggesting that the tendency
of FX dealers initially to focus on the exchange rates involving
their own domestic currencies assumed in the present paper is
evident in a significant section of the FX market.

Traders could be better informed about exchange rate developments
involving their own domestic currencies for a variety of reasons.
This could be simply because their core end-users have the domestic
currency as a unit of account, and means of payment, so the
domestically-based FX dealers have a ``home bias'' when it comes to
the exchange rates that they consider first. The psychology
literature indicates that here are quite tight limits to the pieces
of information that the working memory can take into account when
decisions are made \cite{Bad04}, suggesting that there could well
be advantages to FX traders if they focus, at least initially, on a
limited number of exchange rates. Alternatively the ``home bias''
could be due to the existence of different time zones. So, for
example, Japanese FX traders may be more able to react to new
information relevant to the Japanese yen during the Asian trading
hours in which North American and European markets are closed. The
evidence is that most FX trades initiated in Japan and Australia
occur during Asian trading hours; most trades initiated in the US
and Canada occur during North American hours; while UK-initiated
trades tend to be bunched in the overlapping Asia -- Europe and
Europe -- North America time zones \cite[Table~2]{Souza}.  A
further reason for ``home bias'' is that the localised or
institutionalised links that FX traders have with domestic clients
gives them order flow information about the likely course of the
exchange rates involving the domestic currency before the price
impact of this information becomes publicly available, via the
effects on inter-dealer trades, to FX traders operating in foreign
locations.

In \cite{CM02} the authors pose the question ``does Tokyo know more
about the yen?''. Prior to December 22, 1994 the Japanese FX market
closed for lunch from 12--00 to 13--30 hours, Tokyo time. On the
basis that local order flow conveys informational advantages, the
authors postulate that the trades of informed Tokyo traders would
be bunched before the lunch-time FX market closure, an effect that
would disappear once the lunch-time closure was abolished. They
found a significant tendency of foreign quotes on the Japanese yen
-- US dollar market to lag behind the Tokyo quotes in this
pre-lunch period, suggesting either that Tokyo-based traders were
better informed about the Japanese yen -- US dollar exchange rate
than FX traders based in foreign locations, or that foreign-based
traders believed this to be the case. Further evidence for a ``home
bias'' in the FX market was found in a study of the Canadian dollar
-- US dollar and Australian dollar -- US dollar markets
\cite{Souza}. The author calculates the impulse response functions
of the exchange rates to trades, measured by the order flows,
initiated in different locations. Trades initiated in Canada had a
larger long-run impact on the Canadian dollar -- US dollar exchange
rate than those initiated the US during North American trading
hours, and than Australian and Japanese trades initiated during
Asian trading hours. UK-initiated trades had a slightly larger
long-run effect during European trading hours, but this effect was
much larger before the start of North American trading hours.
Somewhat similarly, trades initiated in Australia had a larger
long-run impact on Australian dollar -- US dollar exchange rate
than trades initiated in the US and elsewhere. The conclusion is
that ``dealers operating both at the same time and in the same
geographic region as fundamentally driven customers have a natural
informational advantage'' \cite[pp.~23--24]{Souza}.

A major challenge to theories based on the idea that macroeconomic
``fundamentals'' drive exchange rates was presented by the
Meese-Rogoff results that such models did not forecast any better
than the ``naive'' postulate that the exchange rate rate would
remain unchanged \cite{MR83}. Engel and West \cite{EW05} showed
that exchange rates would display something close to the random
walk implied by the naive forecast if the fundamentals followed an
$\pm(1)$ process and the factor for discounting future fundamentals
was close to one. The microstructure literature has shown a way out
of this impasse, showing that micro-based information regarding
order flows, information which is not necessarily publicly
available, can explain a significant component of exchange rate
variation. So, for example, Evans and Lyons \cite{EL05} show that
end-user order flow data can explain around 16\%{} of the variance
in the monthly spot rate between the US dollar and the euro,
outperforming both standard macro fundamentals models and the
random walk specification. The microstructure literature has also
focussed attention onto high frequency data sets. Osler
\cite{Osl05}, for example, analyses minute-by-minute quotes for the
US dollar spot exchange rates with the Deutschmark, Japanese yen
and pound sterling, discovering significant effects from stop-loss
order flows, where the stop-loss order is one that instructs FX
dealers to buy (sell) a certain amount of a currency at the
``market'' rate once the exchange rate has risen (fallen) to a
pre-specified level.

A full survey of the theoretical and empirical literature on FX
exchange rate determination has been beyond the scope of the
present paper (see \cite{Evans} for such a survey). What we would
argue is that the foregoing selective review of the literature
provides some justification for the assumptions used in the
analysis of the arbitrage sequences that follows. The assumption
that FX dealers initially know only the exchange rates for their
domestic currency finds some support in the ``home bias'' evidence
cited above. The evidence on the fragmented nature of the FX market
lends support to the assumption that FX traders can have privileged
access to initially private information, stemming from order flows
from end-user clients, that would allow them to identify imbalances
in cross exchange rates, and hence identify arbitrage opportunities
before FX traders based in other locations can identify such
opportunities. There is also evidence that there are arbitrage
opportunities to be exploited. In \cite{MTY} the authors use
binding quote and transactions data from the electronic broking
system, EBS, for the US dollar, euros, Japanese yen, the pound
sterling and the Swiss franc. Triangular arbitrage opportunities
are identified within two-minute time horizons, and can be
exploited by three trades on the EBS trading screen. Each
identified arbitrage opportunity involved the US dollar and the
euro, the third currency being the Japanese yen, pound sterling or
Swiss franc. The estimated mean arbitrage profits, net of bid-offer
spreads and 0.2 basis point trade fees, ranged from 2.8 to 3.0
basis points \cite[p.~4]{MTY}. So here is evidence of, albeit
small, profits to be had from arbitrage operations on a
quasi-centralised, electronic broking trading platform. Once the
decentralised sections of the FX market are considered, the
existence of initially private information is likely extend the
range and size of profitable arbitrage arbitrage opportunities
available.

\section{The Three Currency Case}\label{S-3currencies}

Consider a foreign exchange (FX) market that involves only three
currencies: Dollars (\$), Euros (\euro) and Sterling (\pounds).
This FX market involves three pair-wise exchange operations:
\[
Dollar  \leftrightarrows Euro, \quad Dollar  \rightleftarrows Sterling,
\quad Euro \rightleftarrows Sterling.
\]
The currencies are measured in natural currency units, and the
corresponding (strictly positive) exchange rates,
$r_{\textrm{\$\euro}}$,  $r_{\textrm{\$}\textrm{\pounds}}$,
$r_{\textrm{\euro\pounds}}$, are well defined. For instance, one
dollar can be exchanged for $r_{\textrm{\$\euro}}$ euros.
 The rates related to the inverted arrows
are reciprocal:
\begin{equation}\label{rec3}
r_{\textrm{\euro}\textrm{\$}}=
\frac{1}{r_{\textrm{\$}\textrm{\euro}}},\quad
r_{\textrm{\pounds}\textrm{\$}}=
\frac{1}{r_{\textrm{\$}\textrm{\pounds}}},\quad
r_{\textrm{\pounds}\textrm{\euro}}=
\frac{1}{r_{\textrm{\euro}\textrm{\pounds}}}\ .
\end{equation}
We treat the triplet
\begin{equation}\label{prr3}
\left(
 r_{\textrm{\$}\textrm{\euro}}, \
 r_{\textrm{\$}\textrm{\pounds}}, \
 r_{\textrm{\euro}\textrm{\pounds}}
\right)
\end{equation}
as the ensemble of \emph{principal exchange rates.}

We suppose that, prior to a reference time moment $0$, each FX
trader knows only the exchange rates involving his domestic
currency. So the dollar trader does not know the value of
$r_{\textrm{\euro}\textrm{\pounds}}$, the euro trader is unaware of
$r_{\textrm{\$}\textrm{\pounds}}$, and the sterling trader is
unaware of $r_{\textrm{\$}\textrm{\euro}}$. We are interested in
the case where the initial rates are unbalanced in the following
sense. By assumption, the dollar trader can exchange one dollar for
$r_{\textrm{\$}\textrm{\euro}}$ euros. Let us suppose that
unbeknownst to him the exchange rate between sterling and euro is
such that the the dollar trader could make a profit by first
exchanging a dollar for $r_{\textrm{\$}\textrm{\pounds}}$ units of
sterling and then exchanging these for euros. The inequality which
guarantees that dollar trader can take advantage of this arbitrage
opportunity is that the
product
$r_{\textrm{\$}\textrm{\pounds}}r_{\textrm{\pounds}\textrm{\euro}}$
is greater than $r_{\textrm{\$}\textrm{\euro}}$:
\begin{equation}\label{unb1}
r_{\textrm{\$}\textrm{\pounds}}\cdot
r_{\textrm{\pounds}\textrm{\euro}}>
r_{\textrm{\$}\textrm{\euro}}.
\end{equation}

Let us consider the situation where the inequality \eqref{unb1}
holds, and, after the reference time moment $0$, one of the three
traders becomes aware of the third exchange rate. The evolution of
this FX market depends on \emph{which trader is the first to
discover the information concerning the third exchange rate}. The
following three cases are relevant.

\subsection{Case 1.}
The dollar trader becomes aware of the value of the rate
$r_{\textrm{\euro}\textrm{\pounds}}$. Therefore, the dollar trader
contacts the euro trader and makes a request to increase the rate
$r_{\textrm{\$}\textrm{\euro}}$ to the new fairer value
\[
r^{new}_{\textrm{\$}\textrm{\euro}}=
r_{\textrm{\$}\textrm{\pounds}}\cdot
r_{\textrm{\pounds}\textrm{\euro}} =
\frac{r_{\textrm{\$}\textrm{\pounds}}}{r_{\textrm{\euro}\textrm{\pounds}}}.
\]
The reciprocal exchange rate  $r_{\textrm{\euro}\textrm{\$}}$ is
also to be adjusted to the new level:
\[
r^{new}_{\textrm{\euro}\textrm{\$}}=
\frac{1}{r^{new}_{\textrm{\$}\textrm{\euro}}}.
\]
The result is that the principal exchange rates become balanced
at the levels:
\[
r^{new}_{\textrm{\$}\textrm{\euro}}=
\frac{r_{\textrm{\$}\textrm{\pounds}}}{ r_{\textrm{\euro}\textrm{\pounds}}}, \quad r_{\textrm{\$}\textrm{\pounds}}, \quad
r_{\textrm{\euro}\textrm{\pounds}}.
\]

\subsection{Case 2.}
The euro trader is the first to discover the third exchange rate
$r_{\textrm{\$}\textrm{\pounds}}$. By \eqref{rec3}, inequality
\eqref{unb1} may be rewritten as
\[
\frac{r_{\textrm{\$}\textrm{\pounds}}}{r_{\textrm{\euro}\textrm{\pounds}}}<\frac{1}{r_{\textrm{\euro}\textrm{\$}}},
\]
which is, in turn,  equivalent to
$r_{\textrm{\euro}\textrm{\$}}\cdot
r_{\textrm{\$}\textrm{\pounds}}>
r_{\textrm{\euro}\textrm{\pounds}}$. In this case the euro trader
could do better by first exchanging euros for dollars,  and then
by exchanging the dollars for sterling. Therefore, the euro trader
requests adjustment of the rate
$r_{\textrm{\euro}\textrm{\pounds}}$ to the value
\[
r^{new}_{\textrm{\euro}\textrm{\pounds}}=
r_{\textrm{\euro}\textrm{\$}}\cdot r_{\textrm{\$}\textrm{\pounds}}= \frac{r_{\textrm{\$}\textrm{\pounds}}}{r_{\textrm{\$}\textrm{\euro}}}.
\]
In terms of the principal exchange rates  the outcome is that the
FX market adjusts to the following balanced rates:
\[
r_{\textrm{\$}\textrm{\euro}}, \quad
r_{\textrm{\$}\textrm{\pounds}}, \quad
r^{new}_{\textrm{\euro}\textrm{\pounds}}=
\frac{r_{\textrm{\$}\textrm{\pounds}}}{r_{\textrm{\$}\textrm{\euro}}}.
\]

\subsection{Case 3.}
The sterling trader is the first to discover the third exchange
rate $r_{\textrm{\$}\textrm{\euro}}$. The inequality \eqref{unb1}
may be rewritten as $ r_{\textrm{\pounds}\textrm{\euro}}\cdot
r_{\textrm{\euro}\textrm{\$}}>r_{\textrm{\pounds}\textrm{\$}}. $
Thus, the sterling trader   requests adjustment of the rate
$r_{\textrm{\pounds}\textrm{\$}}$ to
$r^{new}_{\textrm{\pounds}\textrm{\$}}=
r_{\textrm{\pounds}\textrm{\euro}}\cdot
r_{\textrm{\euro}\textrm{\$}}$.  In this case the principal
exchange rates become balanced at the levels:
\[
r_{\textrm{\$}\textrm{\euro}},\quad
r^{new}_{\textrm{\$}\textrm{\pounds}}=
{r_{\textrm{\$}\textrm{\euro}}\cdot
r_{\textrm{\euro}\textrm{\pounds}}},\quad
r_{\textrm{\euro}\textrm{\pounds}}.
\]

\emph{After the adjustment of the principal exchange rates
\eqref{prr3}, following the new information being revealed,  the
exchange rates become balanced, and this is the end of the
arbitrage evolution of an FX market with three currencies. Having
established the reasonably straightforward application of arbitrage
to three currencies, we now turn to investigation what happens when
the FX market contains four currencies and four currency traders.}

\section{Four Currencies}\label{S-4currencies}

Consider an FX market \$\textrm{\euro}\pounds\yen{} that involves
four currencies: Dollars (\$), Euros (\textrm{\euro}), Sterling
(\pounds) and Yen (\yen). This FX market involves six exchange
relationships:
\begin{alignat*}{3}
Dollar  &\rightleftarrows Euro,\quad & Dollar  &\rightleftarrows
Sterling,\quad & Dollar &\rightleftarrows Yen,\\
Euro   &\rightleftarrows Sterling, \quad & Euro &\rightleftarrows
Yen, &\quad Sterling   &\rightleftarrows Yen.
\end{alignat*}
The  exchange rates are:
\[
\begin{array}{llllll}
r_{\textrm{\$}\textrm{\euro}},& r_{\textrm{\$}\textrm{\pounds}},&r_{\textrm{\$}\textrm{\yen}}, &r_{\textrm{\euro}\textrm{\$}},& r_{\textrm{\euro}\textrm{\yen}},&r_{\textrm{\pounds}\textrm{\yen}},\\
r_{\textrm{\pounds}\textrm{\$}},& r_{\textrm{\pounds}\textrm{\euro}},&r_{\textrm{\pounds}\textrm{\yen}},& r_{\textrm{\yen}\textrm{\$}},& r_{\textrm{\yen}\textrm{\euro}},&r_{\textrm{\yen}\textrm{\pounds}}.
\end{array}
\]
The rates relating to the inverted arrows are reciprocal:
\begin{equation}\label{rec}
\begin{alignedat}{3}
r_{\textrm{\euro}\textrm{\$}}&=\frac{1}{r_{\textrm{\$}\textrm{\euro}}},\quad&
r_{\textrm{\pounds}\textrm{\$}}&=\frac{1}{r_{\textrm{\$}\textrm{\pounds}}},\quad&
r_{\textrm{\yen}\textrm{\$}}&=\frac{1}{r_{\textrm{\$}\textrm{\yen}}}, \\
r_{\textrm{\pounds}\textrm{\euro}}&=\frac{1}{r_{\textrm{\euro}\textrm{\pounds}}},\quad&
r_{\textrm{\yen}\textrm{\euro}}&=\frac{1}{r_{\textrm{\euro}\textrm{\yen}}},\quad &
r_{\textrm{\yen}\textrm{\pounds}}&=\frac{1}{r_{\textrm{\pounds}\textrm{\yen}}}.
\end{alignedat}
\end{equation}
Our market may be described by the ensemble of six {principal
exchange rates}
\begin{equation}\label{excr}
\calR=\left(r_{\textrm{\$}\textrm{\euro}},\
r_{\textrm{\$}\textrm{\pounds}},\
r_{\textrm{\$}\textrm{\yen}},\
r_{\textrm{\euro}\textrm{\pounds}},\
r_{\textrm{\euro}\textrm{\yen}},\
r_{\textrm{\pounds}\textrm{\yen}}\right)
\end{equation}
together with the reciprocal exchange rates \eqref{rec}.

The following characterisation of balanced, no-arbitrage, exchange
rates \eqref{excr}, that is the ensembles of  exchange rates such
that no trader could do better by trading indirectly, is
convenient.

\begin{proposition}\label{balp}
Ensemble \eqref{excr} of the principal exchange rates is balanced
if and only if the following relationships hold:
\begin{equation}
r_{\textrm{\euro}\textrm{\pounds}} =
\frac{r_{\textrm{\$}\textrm{\pounds}}}{r_{\textrm{\$}\textrm{\euro}}},\quad r_{\textrm{\euro}\textrm{\yen}} =
\frac{r_{\textrm{\$}\textrm{\yen}}}{r_{\textrm{\$}\textrm{\euro}}},
\label{invss}\quad r_{\textrm{\pounds}\textrm{\yen}} =
\frac{r_{\textrm{\$}\textrm{\yen}}}{r_{\textrm{\$}\textrm{\pounds}}}.
\end{equation}
\end{proposition}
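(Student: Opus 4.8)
The plan is to reduce ``balanced'' to a statement about products of exchange rates around closed trading loops, and then to observe that for four currencies every loop is assembled from the three triangles passing through the dollar.

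First I would pin down the definition: an ensemble is balanced precisely when, for every ordered pair of currencies $X,Y$ and every chain of intermediate trades $X\to Z_1\to\dots\to Z_k\to Y$, the compounded rate satisfies $r_{XZ_1}r_{Z_1Z_2}\cdots r_{Z_kY}\le r_{XY}$. Applying this simultaneously to a chain and to its reversal, and using the reciprocity relations \eqref{rec} (reversing an edge inverts its rate), the pair of inequalities ``$\text{product}\le 1$'' and ``$\text{product}^{-1}\le 1$'' around any closed loop collapses to the single equality that the product of the rates around \emph{any} directed cycle equals $1$. In particular, balancedness is equivalent to the family of triangle identities $r_{XZ}\,r_{ZY}\,r_{YX}=1$ for distinct currencies $X,Y,Z$; chains with more intermediate steps will be shown to add nothing new.

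Next I would extract the three independent triangles, those containing the dollar. Writing out $r_{\textrm{\$}\textrm{\euro}}\,r_{\textrm{\euro}\textrm{\pounds}}\,r_{\textrm{\pounds}\textrm{\$}}=1$, $r_{\textrm{\$}\textrm{\euro}}\,r_{\textrm{\euro}\textrm{\yen}}\,r_{\textrm{\yen}\textrm{\$}}=1$ and $r_{\textrm{\$}\textrm{\pounds}}\,r_{\textrm{\pounds}\textrm{\yen}}\,r_{\textrm{\yen}\textrm{\$}}=1$, then clearing $r_{\textrm{\pounds}\textrm{\$}}$ and $r_{\textrm{\yen}\textrm{\$}}$ via \eqref{rec}, yields exactly the three displayed relations \eqref{invss}; this settles the ``only if'' direction at once. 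For ``if'', I would verify that \eqref{invss} forces the fourth triangle, on euro, sterling and yen, to balance too: substituting the three relations into $r_{\textrm{\euro}\textrm{\pounds}}\,r_{\textrm{\pounds}\textrm{\yen}}\,r_{\textrm{\yen}\textrm{\euro}}$ and using $r_{\textrm{\yen}\textrm{\euro}}=1/r_{\textrm{\euro}\textrm{\yen}}$ collapses the product to $1$. Finally, any chain through two intermediate currencies is a $4$-cycle $X\to Z\to W\to Y\to X$, and inserting the diagonal pair $r_{XW}r_{WX}=1$ factors it into the two triangles $XZW$ and $XWY$; so once all $\binom{4}{3}=4$ triangle products equal $1$, every cycle product equals $1$, i.e.\ the ensemble is balanced.

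The only mildly delicate point is the opening reduction: justifying that the inequality form of ``no indirect improvement'' is genuinely the equality form. The input here is that the market is symmetric --- every quoted trade can be reversed --- so a roundabout route that is worse than the direct one in a given direction would be strictly better than the direct one in the opposite direction, contradicting balancedness. Everything downstream is routine manipulation with the reciprocity relations \eqref{rec} and the four triangles, only three of which are independent.
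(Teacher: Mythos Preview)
Your argument is correct. The paper's own proof is the single sentence ``This assertion can be proved by inspection,'' so your proposal is not an alternative route but the inspection spelled out: unpacking ``no profitable indirect trade'' into ``every directed cycle has rate-product $1$,'' reading off the three dollar-triangles as \eqref{invss}, deriving the remaining \euro\pounds\yen\ triangle from them, and factoring the one $4$-cycle into triangles. A small tidy-up: rather than treating only $4$-cycles in the last step, you can simply observe that the four triangles generate the cycle space of the complete graph on four currencies, which disposes of \emph{all} closed walks (and hence all chains, with or without repeated vertices) in one stroke.
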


\begin{proof}
This assertion can be proved by inspection.
\end{proof}

\section{Arbitrages}\label{S-arbitrages}

Let us suppose that initially each trader is aware only of the
three exchange rates involving his domestic currency. For instance,
the dollar trader knows only the rates
$r_{\textrm{\$}\textrm{\euro}}$, $r_{\textrm{\$}\textrm{\pounds}}$,
$r_{\textrm{\$}\textrm{\yen}}$.

\emph{We are interested in the case where the rates
$r_{\textrm{\$}\textrm{\euro}}$, $r_{\textrm{\$}\textrm{\pounds}}$,
$r_{\textrm{\$}\textrm{\yen}}$,
$r_{\textrm{\euro}\textrm{\pounds}}$,
$r_{\textrm{\euro}\textrm{\yen}}$,
$r_{\textrm{\pounds}\textrm{\yen}}$ are unbalanced}.

For instance, let us suppose that \FP can make a profit by first
exchanging one dollar for $r_{\textrm{\$}\textrm{\pounds}}$ units
of sterling, and then by exchanging this sterling  for euros. This
means that the product $r_{\textrm{\$}\textrm{\pounds}}\cdot
r_{\textrm{\pounds}\textrm{\euro}}$ is greater than
$r_{\textrm{\$}\textrm{\euro}}$:
\begin{equation}\label{unb}
r_{\textrm{\$}\textrm{\pounds}}\cdot
r_{\textrm{\pounds}\textrm{\euro}}>
r_{\textrm{\$}\textrm{\euro}}.
\end{equation}
Suppose that  \FP becomes aware of the rate
$r_{\textrm{\euro}\textrm{\pounds}}$, and, therefore, about the
inequality \eqref{unb}. The dollar trader then asks \AP to increase
the exchange rate $r_{\textrm{\$}\textrm{\euro}}$ to the new fairer
value
\[
r^{new}_{\textrm{\$}\textrm{\euro}}=
{r_{\textrm{\$}\textrm{\pounds}}}\cdot
r_{\textrm{\pounds}\textrm{\euro}}=
\frac{r_{\textrm{\$}\textrm{\pounds}}}{r_{\textrm{\euro}\textrm{\pounds}}}.
\]
Along with the adjustment of the exchange rate
$r_{\textrm{\$}\textrm{\euro}}$ the reciprocal rate
$r_{\textrm{\euro}\textrm{\$}}$ would be adjusted to
\[
r^{new}_{\textrm{\euro}\textrm{\$}}=\frac{1}{r^{new}_{\textrm{\$}\textrm{\euro}}}.
\]
We call this procedure \$\textrm{\euro}\pounds-\emph{arbitrage},
and we use the notation
$\calA_{\textrm{\$}\textrm{\euro}\textrm{\pounds}}$ to represent
it. We denote by
$\calR\calA_{\textrm{\$}\textrm{\euro}\textrm{\pounds}}$ the
ensemble of the new principal exchange rates:
\[
\calR^{new}=\calR\calA_{\textrm{\$}\textrm{\euro}\textrm{\pounds}}=
\left(r^{new}_{\textrm{\$}\textrm{\euro}},\
r_{\textrm{\$}\textrm{\pounds}},\
r_{\textrm{\$}\textrm{\yen}},\
r_{\textrm{\euro}\textrm{\pounds}},\
r_{\textrm{\euro}\textrm{\yen}},\
r_{\textrm{\pounds}\textrm{\yen}}
\right).
\]
We also use the notation
$\calR\calA_{\textrm{\$}\textrm{\euro}\textrm{\pounds}}$ in the
case where the inequality \eqref{unb} does not hold. In this case,
of course,
$\calR\calA_{\textrm{\$}\textrm{\euro}\textrm{\pounds}}=\calR$, and
we say that arbitrage
$\calA_{\textrm{\$}\textrm{\euro}\textrm{\pounds}}$ is \emph{not
active} in this case. This particular arbitrage is an example of
the 24 possible arbitrages listed in Table~\ref{tab1}. We will also
use, where convenient, the notation $\calA^{(n)}$ for the arbitrage
number $n$ from this table: for instance,
$\calA^{(1)}=\calA_{\textrm{\$}\textrm{\euro}\textrm{\pounds}}$.

\begin{table}[!htbp]
\caption{List of arbitrages}\label{tab1}
\begin{tabular}{llll}
Number&Arbitrage&Activation condition&Actions\\
\hline \\
1& $\calA_{\textrm{\$}\textrm{\euro}\textrm{\pounds}}$&
${r_{\textrm{\euro}\textrm{\pounds}}}>r_{\textrm{\$}\textrm{\euro}}\cdot
r_{\textrm{\$}\textrm{\pounds}}$ &
$r^{new}_{\textrm{\$}\textrm{\euro}}=
r_{\textrm{\$}\textrm{\pounds}}\cdot
r_{\textrm{\euro}\textrm{\pounds}}^{-1}$
\\

2& $\calA_{\textrm{\$}\textrm{\euro}\textrm{\yen}}$&
$r_{\textrm{\$}\textrm{\yen}}>r_{\textrm{\$}\textrm{\euro}}\cdot
r_{\textrm{\euro}\textrm{\yen}}$ &
$r^{new}_{\textrm{\$}\textrm{\euro}}= r_{\textrm{\$}\textrm{\yen}}\cdot r_{\textrm{\euro}\textrm{\yen}}^{-1}$\\

3& $\calA_{\textrm{\$}\textrm{\pounds}\textrm{\euro}}$&
$r_{\textrm{\$}\textrm{\euro}}\cdot
r_{\textrm{\euro}\textrm{\pounds}}>r_{\textrm{\$}\textrm{\pounds}}$
&
$r^{new}_{\textrm{\$}\textrm{\pounds}}= r_{\textrm{\$}\textrm{\euro}}\cdot r_{\textrm{\euro}\textrm{\pounds}}$\\

4& $\calA_{\textrm{\$}\textrm{\pounds}\textrm{\yen}}$&
$r_{\textrm{\$}\textrm{\yen}}>r_{\textrm{\$}\textrm{\pounds}}\cdot
r_{\textrm{\pounds}\textrm{\yen}}$ &
$r^{new}_{\textrm{\$}\textrm{\pounds}}= r_{\textrm{\$}\textrm{\yen}}\cdot r_{\textrm{\pounds}\textrm{\yen}}^{-1}$ \\

5& $\calA_{\textrm{\$}\textrm{\yen}\textrm{\euro}}$&
$r_{\textrm{\$}\textrm{\euro}}\cdot
r_{\textrm{\euro}\textrm{\yen}}>r_{\textrm{\$}\textrm{\yen}}$ &
$r^{new}_{\textrm{\$}\textrm{\yen}}= r_{\textrm{\$}\textrm{\euro}}\cdot r_{\textrm{\euro}\textrm{\yen}}$ \\

6& $\calA_{\textrm{\$}\textrm{\yen}\textrm{\pounds}}$&
$r_{\textrm{\$}\textrm{\pounds}}\cdot
r_{\textrm{\pounds}\textrm{\yen}}>r_{\textrm{\$}\textrm{\yen}}$ &
$r^{new}_{\textrm{\$}\textrm{\yen}}= r_{\textrm{\$}\textrm{\pounds}}\cdot r_{\textrm{\pounds}\textrm{\yen}}$ \\

7& $\calA_{\textrm{\euro}\textrm{\$}\textrm{\pounds}}$&
$r_{\textrm{\$}\textrm{\pounds}}<r_{\textrm{\$}\textrm{\euro}}\cdot
r_{\textrm{\euro}\textrm{\pounds}}$ &
$r^{new}_{\textrm{\$}\textrm{\euro}}= r_{\textrm{\$}\textrm{\pounds}}\cdot r_{\textrm{\euro}\textrm{\pounds}}^{-1}$ \\

8& $\calA_{\textrm{\euro}\textrm{\$}\textrm{\yen}}$&
$r_{\textrm{\$}\textrm{\yen}}<r_{\textrm{\$}\textrm{\euro}}\cdot
r_{\textrm{\euro}\textrm{\yen}}$ &
$r^{new}_{\textrm{\$}\textrm{\euro}}= r_{\textrm{\$}\textrm{\yen}}\cdot r_{\textrm{\euro}\textrm{\yen}}^{-1}$\\

9& $\calA_{\textrm{\euro}\textrm{\pounds}\textrm{\$}}$&
$r_{\textrm{\$}\textrm{\pounds}}>r_{\textrm{\euro}\textrm{\pounds}}\cdot
r_{\textrm{\$}\textrm{\euro}}$ &
$r^{new}_{\textrm{\euro}\textrm{\pounds}}= r_{\textrm{\$}\textrm{\pounds}}\cdot r_{\textrm{\$}\textrm{\euro}}^{-1}$ \\

10 & $\calA_{\textrm{\euro}\textrm{\pounds}\textrm{\yen}}$&
$r_{\textrm{\euro}\textrm{\yen}}>r_{\textrm{\euro}\textrm{\pounds}}\cdot
r_{\textrm{\pounds}\textrm{\yen}}$ &
$r^{new}_{\textrm{\euro}\textrm{\pounds}}= r_{\textrm{\euro}\textrm{\yen}}\cdot r_{\textrm{\pounds}\textrm{\yen}}^{-1}$ \\

11 & $\calA_{\textrm{\euro}\textrm{\yen}\textrm{\$}}$&
$r_{\textrm{\$}\textrm{\yen}}>r_{\textrm{\euro}\textrm{\yen}}\cdot
r_{\textrm{\$}\textrm{\euro}}$ &
$r^{new}_{\textrm{\euro}\textrm{\yen}}= r_{\textrm{\$}\textrm{\yen}}\cdot r_{\textrm{\$}\textrm{\euro}}^{-1}$ \\

12 & $\calA_{\textrm{\euro}\textrm{\yen}\textrm{\pounds}}$&
$r_{\textrm{\euro}\textrm{\pounds}}\cdot
r_{\textrm{\pounds}\textrm{\yen}}>r_{\textrm{\euro}\textrm{\yen}}$
&$r^{new}_{\textrm{\euro}\textrm{\yen}}= r_{\textrm{\euro}\textrm{\pounds}}\cdot r_{\textrm{\pounds}\textrm{\yen}}$ \\

13& $\calA_{\textrm{\pounds}\textrm{\$}\textrm{\euro}}$&
$r_{\textrm{\$}\textrm{\euro}}\cdot
r_{\textrm{\euro}\textrm{\pounds}}<r_{\textrm{\$}\textrm{\pounds}}$
&$r^{new}_{\textrm{\$}\textrm{\pounds}}= r_{\textrm{\$}\textrm{\euro}}\cdot r_{\textrm{\euro}\textrm{\pounds}}$\\

14& $\calA_{\textrm{\pounds}\textrm{\$}\textrm{\yen}}$&
$r_{\textrm{\$}\textrm{\yen}}<r_{\textrm{\$}\textrm{\pounds}}\cdot
r_{\textrm{\pounds}\textrm{\yen}}$ &
$r^{new}_{\textrm{\$}\textrm{\pounds}}= r_{\textrm{\$}\textrm{\yen}}\cdot r_{\textrm{\pounds}\textrm{\yen}}^{-1}$ \\

15& $\calA_{\textrm{\pounds}\textrm{\euro}\textrm{\$}}$&
$r_{\textrm{\$}\textrm{\pounds}}<r_{\textrm{\euro}\textrm{\pounds}}\cdot
r_{\textrm{\$}\textrm{\euro}}$ &
$r^{new}_{\textrm{\euro}\textrm{\pounds}}= r_{\textrm{\$}\textrm{\pounds}}\cdot r_{\textrm{\$}\textrm{\euro}}^{-1}$ \\

16 & $\calA_{\textrm{\pounds}\textrm{\euro}\textrm{\yen}}$&
$r_{\textrm{\euro}\textrm{\yen}}<r_{\textrm{\euro}\textrm{\pounds}}\cdot
r_{\textrm{\pounds}\textrm{\yen}}$ &
$r^{new}_{\textrm{\euro}\textrm{\pounds}}= r_{\textrm{\euro}\textrm{\yen}}\cdot r_{\textrm{\pounds}\textrm{\yen}}^{-1}$ \\

17& $\calA_{\textrm{\pounds}\textrm{\yen}\textrm{\$}}$&
$r_{\textrm{\$}\textrm{\yen}}>r_{\textrm{\pounds}\textrm{\yen}}\cdot
r_{\textrm{\$}\textrm{\pounds}}$ &
$r^{new}_{\textrm{\pounds}\textrm{\yen}}=
r_{\textrm{\$}\textrm{\yen}}\cdot
r_{\textrm{\$}\textrm{\pounds}}^{-1}$
\\

18& $\calA_{\textrm{\pounds}\textrm{\yen}\textrm{\euro}}$&
$r_{\textrm{\euro}\textrm{\yen}}>r_{\textrm{\pounds}\textrm{\yen}}\cdot
r_{\textrm{\euro}\textrm{\pounds}}$ &
$r^{new}_{\textrm{\pounds}\textrm{\yen}}= r_{\textrm{\euro}\textrm{\yen}}\cdot r_{\textrm{\euro}\textrm{\pounds}}^{-1}$\\

19& $\calA_{\textrm{\yen}\textrm{\$}\textrm{\euro}}$&
$r_{\textrm{\$}\textrm{\euro}}\cdot
r_{\textrm{\euro}\textrm{\yen}}<r_{\textrm{\$}\textrm{\yen}}$ &
$r^{new}_{\textrm{\$}\textrm{\yen}}= r_{\textrm{\$}\textrm{\euro}}\cdot r_{\textrm{\euro}\textrm{\yen}}$ \\

20& $\calA_{\textrm{\yen}\textrm{\$}\textrm{\pounds}}$&
$r_{\textrm{\$}\textrm{\pounds}}\cdot
r_{\textrm{\pounds}\textrm{\yen}}<r_{\textrm{\$}\textrm{\yen}}$ &
$r^{new}_{\textrm{\$}\textrm{\yen}}= r_{\textrm{\$}\textrm{\pounds}}\cdot r_{\textrm{\pounds}\textrm{\yen}}$ \\

21 & $\calA_{\textrm{\yen}\textrm{\euro}\textrm{\$}}$&
$r_{\textrm{\$}\textrm{\yen}}<r_{\textrm{\euro}\textrm{\yen}}\cdot
r_{\textrm{\$}\textrm{\euro}}$ &
$r^{new}_{\textrm{\euro}\textrm{\yen}}= r_{\textrm{\$}\textrm{\yen}}\cdot r_{\textrm{\$}\textrm{\euro}}^{-1}$ \\

22 & $\calA_{\textrm{\yen}\textrm{\euro}\textrm{\pounds}}$&
$r_{\textrm{\euro}\textrm{\pounds}}\cdot
r_{\textrm{\pounds}\textrm{\yen}}<r_{\textrm{\euro}\textrm{\yen}}$
&$r^{new}_{\textrm{\euro}\textrm{\yen}}= r_{\textrm{\euro}\textrm{\pounds}}\cdot r_{\textrm{\pounds}\textrm{\yen}}$ \\

23& $\calA_{\textrm{\yen}\textrm{\pounds}\textrm{\$}}$&
$r_{\textrm{\$}\textrm{\yen}}<r_{\textrm{\pounds}\textrm{\yen}}\cdot
r_{\textrm{\$}\textrm{\pounds}}$ &
$r^{new}_{\textrm{\pounds}\textrm{\yen}}=
r_{\textrm{\$}\textrm{\yen}}\cdot
r_{\textrm{\$}\textrm{\pounds}}^{-1}$
\\

24& $\calA_{\textrm{\yen}\textrm{\pounds}\textrm{\euro}}$&
$r_{\textrm{\euro}\textrm{\yen}}<r_{\textrm{\pounds}\textrm{\yen}}\cdot
r_{\textrm{\euro}\textrm{\pounds}}$ &
$r^{new}_{\textrm{\pounds}\textrm{\yen}}= r_{\textrm{\euro}\textrm{\yen}}\cdot r_{\textrm{\euro}\textrm{\pounds}}^{-1}$\\
\end{tabular}
\end{table}

\emph{The principal distinction of the  FX market with four
currencies from that with only three currencies is that applying a
single arbitrage operation does not bring the FX market to a
balance in which no arbitrage opportunities exist, and in which the
law of one price holds.}

\section{Main Results}\label{mrSS}

One can apply arbitrages from Table~\ref{tab1} sequentially in any
order and to any initial exchange rates $\calR$. The situation that
we have in mind is the following. Suppose that there exists an
\emph{Arbiter} who knows current ensemble $\calR$ of exchange
rates. This \emph{Arbiter} could provide information to the FX
traders in any order he wants, thus activating the \emph{chain} (or
\emph{superposition}) of corresponding arbitrages. The principal
question is:
\begin{question}\label{que1}
How powerful is the \emph{Arbiter}?
\end{question}

The short answer is: the \emph{Arbiter is surprisingly powerful.}

Let us explain at a more formal level what we mean.

For a finite chain of arbitrages $ {\bA} = \calA_1 \cdots \calA_n$,
and for a  given ensemble $\calR$ of initial exchange rates, we
denote the resulting ensemble of principal exchange rates as
\begin{equation}\label{arbsec}
\calR{\bA}=\calR\calA_1 \cdots \calA_n
\end{equation}

If $\calR$ is balanced, then $\calR{\calA}=\calR$ for any
individual arbitrage, and therefore $\calR{\bA}=\calR$ for any
chain \eqref{arbsec}. If, on the contrary, $\calR$ is not balanced,
then different arbitrage chains \eqref{arbsec} could result in
different balanced or unbalanced ensembles of principal exchange
rates. Denote by $S(\calR)$ the collection of the sets $\calR{\bA}$
related to all possible chains \eqref{arbsec}. Denote also by
$S^{bal}(\calR)$ the subset of $S(\calR)$, that includes only
balanced exchange rates ensembles. Our principal observation is the
following:

\emph{For a typical unbalanced exchange rate ensemble $\calR$, the
set $S^{bal}(\calR)$ is unexpectedly rich; therefore the Arbiter,
who prescribes a particular sequence of arbitrages, is an
unexpectedly powerful figure.}

To avoid cumbersome notation and technical details when providing a
rigorous formulation of this observation, we concentrate on the
simplest initial ensemble. Let us consider the ensemble
\begin{equation}\label{dist}
\bar\calR_{\alpha}=\left(\alpha \cdot \bar{r}_{\textrm{\$}\textrm{\euro}},\
\bar{r}_{\textrm{\$}\textrm{\pounds}},\ \bar{r}_{\textrm{\$}\textrm{\yen}},\ \bar{r}_{\textrm{\euro}\textrm{\pounds}},\ \bar{r}_{\textrm{\euro}\textrm{\yen}},\ \bar{r}_{\textrm{\pounds}\textrm{\yen}}
\right),
\end{equation}
where $\alpha>0, \alpha \not= 1$ and $\bar\calR$ is
a given balanced ensemble of principal exchange
rates. The ensemble  \eqref{dist} is not balanced.
The ensemble \eqref{dist} may have emerged as
follows. Let us suppose that the underlying balanced
rates
\begin{equation}\label{distbar}
\bar\calR=\left(\bar{r}_{\textrm{\$}\textrm{\euro}},\ \bar{r}_{\textrm{\$}\textrm{\pounds}},\
\bar{r}_{\textrm{\$}\textrm{\yen}},\ \bar{r}_{\textrm{\euro}\textrm{\pounds}},\ \bar{r}_{\textrm{\euro}\textrm{\yen}},\ \bar{r}_{\textrm{\pounds}\textrm{\yen}}
\right)
\end{equation}
had been in operation up to a certain reference time moment $0$. At
this moment {\FP} has decided to increase his price for euros by a
factor $\alpha>1$. A natural respecification of Question \ref{que1}
is the following:
\begin{question}\label{que2}
To which balanced exchange rates can the Arbiter now bring the
foreign exchange market?
\end{question}

The possible general structure of elements from the corresponding
sets $S(\bar\calR_{\alpha})$  and $S^{bal}(\bar\calR_{\alpha})$ is
easy to describe. To this end we denote by $T_{\alpha}(\bar\calR)$
the collection of all sextuples of the form
\begin{equation}\label{prod}
\left(\alpha^{n_{1}} \cdot \bar{r}_{\textrm{\$}\textrm{\euro}},\
\alpha^{n_{2}} \cdot \bar{r}_{\textrm{\$}\textrm{\pounds}},\
\alpha^{n_{3}} \cdot \bar{r}_{\textrm{\$}\textrm{\yen}},\
\alpha^{n_{4}}\cdot \bar{r}_{\textrm{\euro}\textrm{\pounds}},\
\alpha^{n_{5}}\cdot \bar{r}_{\textrm{\euro}\textrm{\yen}} ,\
\alpha^{n_{6}}\cdot \bar{r}_{\textrm{\pounds}\textrm{\yen}}
\right),
\end{equation}
where $n_{i}$ are integer numbers (positive, negative or zero). We
also denote by $T^{bal}_{\alpha}$ the subset of elements of
$T_{\alpha}$, which satisfy the relationships
\[
n_{4} =  n_{2}-n_{1},\quad n_{5} =  n_{3}-n_{1},\quad n_{6} =
n_{3}-n_{2}.
\]

\begin{proposition}\label{rep1P}
The following inclusions hold:
\begin{align}\label{talp}
S(\calR_{\alpha})&\subset T_{\alpha}(\bar\calR),\\
\label{palpb}
S^{bal}(\bar\calR_{\alpha})&\subset T^{bal}_{\alpha}(\bar\calR).
\end{align}
\end{proposition}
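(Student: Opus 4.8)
The plan is to establish the first inclusion \eqref{talp} by induction on the length of the arbitrage chain applied to $\bar\calR_{\alpha}$, and then to read off the second inclusion \eqref{palpb} from \eqref{talp} by applying Proposition~\ref{balp}, using crucially that $\alpha\neq1$.

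For \eqref{talp}, the base case is $\bar\calR_{\alpha}$ itself: it is the sextuple \eqref{prod} with integer exponent vector $(1,0,0,0,0,0)$, hence lies in $T_{\alpha}(\bar\calR)$. For the inductive step I would first record a consequence of Proposition~\ref{balp} and the reciprocity relations \eqref{rec}: the underlying balanced ensemble $\bar\calR$ is \emph{multiplicatively consistent}, i.e.\ $\bar{r}_{XY}=\bar{r}_{XZ}\,\bar{r}_{ZY}$ and $\bar{r}_{XY}\bar{r}_{YX}=1$ for any currencies $X,Y,Z$ among the four. Now suppose $\calR'$ is a sextuple of the form \eqref{prod} with integer exponents and that we apply one arbitrage from Table~\ref{tab1}. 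If that arbitrage is not active then the ensemble is unchanged and there is nothing to check. Otherwise, inspection of Table~\ref{tab1} shows that every one of the $24$ actions has the form ``replace a single principal rate $r_{XY}$ by the product $r_{XZ}\cdot r_{ZY}$'', where each factor on the right is one of the current principal rates or a reciprocal of one; writing those factors as $\alpha^{a}\bar{r}_{XZ}$ and $\alpha^{b}\bar{r}_{ZY}$ with $a,b\in\mathbb{Z}$ (the components of the current exponent vector, negated for reciprocals), multiplicative consistency collapses $\bar{r}_{XZ}\bar{r}_{ZY}$ to $\bar{r}_{XY}$, so the new rate equals $\alpha^{a+b}\bar{r}_{XY}$. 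Thus exactly one exponent changes, to the integer $a+b$, and the updated ensemble is again of the form \eqref{prod}. This gives \eqref{talp} by induction.

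For \eqref{palpb}, let $\calR'\in S^{bal}(\bar\calR_{\alpha})$. By \eqref{talp} we may write $\calR'=(\alpha^{n_{1}}\bar{r}_{\textrm{\$}\textrm{\euro}},\dots,\alpha^{n_{6}}\bar{r}_{\textrm{\pounds}\textrm{\yen}})$ with $n_{i}\in\mathbb{Z}$, and $\calR'$ is balanced. Applying the first relation of Proposition~\ref{balp} to $\calR'$ gives $\alpha^{n_{4}}\bar{r}_{\textrm{\euro}\textrm{\pounds}}=(\alpha^{n_{2}}\bar{r}_{\textrm{\$}\textrm{\pounds}})/(\alpha^{n_{1}}\bar{r}_{\textrm{\$}\textrm{\euro}})$, and since the same relation for $\bar\calR$ says $\bar{r}_{\textrm{\euro}\textrm{\pounds}}=\bar{r}_{\textrm{\$}\textrm{\pounds}}/\bar{r}_{\textrm{\$}\textrm{\euro}}$, this reduces to $\alpha^{n_{4}}=\alpha^{n_{2}-n_{1}}$; as $\alpha>0$ and $\alpha\neq1$ this forces $n_{4}=n_{2}-n_{1}$. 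The remaining two relations of Proposition~\ref{balp} give $n_{5}=n_{3}-n_{1}$ and $n_{6}=n_{3}-n_{2}$ in exactly the same way, so $\calR'\in T^{bal}_{\alpha}(\bar\calR)$, as required.

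The only step needing any care is the uniform inductive step for \eqref{talp} --- checking that each of the $24$ arbitrage actions preserves the class $T_{\alpha}(\bar\calR)$ --- but I do not expect a genuine obstacle there, since every action has the same multiplicative shape and the cancellations used are precisely the balancedness relations of $\bar\calR$. Note also that $\alpha\neq1$ plays no role in \eqref{talp}; it is needed only in \eqref{palpb}, to pass from equality of the powers $\alpha^{n_{i}}$ to equality of the exponents $n_{i}$.
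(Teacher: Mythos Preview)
Your proposal is correct and follows essentially the same approach as the paper: show that $T_{\alpha}(\bar\calR)$ is invariant under each arbitrage from Table~\ref{tab1} (the paper works one representative case, $\calA_{\textrm{\$}\textrm{\euro}\textrm{\pounds}}$, and appeals to inspection for the rest), using balancedness of $\bar\calR$ to collapse $\bar{r}_{XZ}\bar{r}_{ZY}$ to $\bar{r}_{XY}$, and then derive \eqref{palpb} from Proposition~\ref{balp}. Your write-up is in fact a bit more careful than the paper's, in that you make explicit the role of $\alpha\neq 1$ in passing from $\alpha^{n_{4}}=\alpha^{n_{2}-n_{1}}$ to $n_{4}=n_{2}-n_{1}$, which the paper leaves implicit.
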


\begin{proof}
The ensemble \eqref{distbar}  belongs to $T$. To verify
\eqref{talp} we show that the set $T_{\alpha}$ is invariant with
respect to each arbitrage $\calA$ from Table~\ref{tab1}. This
statement can be checked by inspection. Let us, for instance, apply
to a sextuple \eqref{prod} the first arbitrage
$\calA_{\textrm{\$}\textrm{\euro}\textrm{\pounds}}$. Then, by
definition, either this arbitrage is inactive, or it changes the
first component $\alpha^{n_{1}} \cdot
\bar{r}_{\textrm{\$}\textrm{\euro}} $ of \eqref{prod} to the new
value
\begin{equation}\label{bb}
r^{new}_{\textrm{\$}\textrm{\euro}}= \frac{\alpha^{n_{2}} \cdot \bar{r}_{\textrm{\$}\textrm{\pounds}}}
{\alpha^{n_{4}}\cdot \bar{r}_{\textrm{\euro}\textrm{\pounds}}}= \alpha^{n_{2}-n_{4}}\cdot
\frac{\bar{r}_{\textrm{\$}\textrm{\pounds}}}{ \bar{r}_{\textrm{\euro}\textrm{\pounds}}}.
\end{equation}
However, the ensemble ${\bar\calR}$ is balanced, and,  by the first
equation \eqref{invss},
$\frac{\bar{r}_{\textrm{\$}\textrm{\pounds}}}{\bar{r}_{\textrm{\euro}\textrm{\pounds}}}=
\bar{r}_{\textrm{\$}\textrm{\euro}}$.
Therefore, \eqref{bb} implies that the ensemble
$\bar\calR\calA_{\textrm{\$}\textrm{\euro}\textrm{\pounds}}$ also
may be represented in the form \eqref{prod}. We have proved the
first part of the proposition, related to the set
$S(\bar\calR_{\alpha})$. The inclusion \eqref{palpb} follows now
from Proposition~\ref{balp}.
\end{proof}

Proposition~\ref{rep1P} in no way answers Question~\ref{que2}. This
proposition, however, allows us to reformulate this question in a
more constructive form:

\begin{question}\label{que3}
How big is the set $S^{bal}(\bar\calR_{\alpha})$, compared with the
collection  $T^{bal}_{\alpha}(\bar\calR)$  of all elements that
satisfy the restrictions imposed by Proposition~\ref{rep1P}?
\end{question}

The naive expectation would be that the set
$S^{bal}(\bar\calR_{\alpha})$ is finite and, at least for values of
$\alpha$ close to 1, that all elements of
$S^{bal}(\bar\calR_{\alpha})$ are close to $\bar\calR$. However,the
following statement, describing an unexpected feature of the power
of the \emph{Arbiter}, is true.

\begin{theorem}\label{arbH}
The set  $S^{bal}(\bar\calR_{\alpha})$ coincides with
$T^{bal}_{\alpha}(\bar\calR)$:
\begin{equation}\label{palpe}
S^{bal}(\bar\calR_{\alpha})= T^{bal}_{\alpha}(\bar\calR).
\end{equation}
Moreover each balanced ensemble \eqref{prod} may be achieved via a
chain of arbitrage operations no longer than
\begin{equation}
\label{estN} N(n_{1},n_{2},n_{3})=3(|n_{1}-1|+|n_{2}|+|n_{3}|)+ 3.
\end{equation}
\end{theorem}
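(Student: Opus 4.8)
The plan is to prove the non-trivial inclusion $T^{bal}_{\alpha}(\bar\calR)\subset S^{bal}(\bar\calR_{\alpha})$, since the reverse inclusion is Proposition~\ref{rep1P}. A balanced target is, by definition of $T^{bal}_{\alpha}$, determined by three free integer exponents, which I will write as the triple $(m_1,m_2,m_3)$ assigned to the exchange rates $r_{\textrm{\$}\textrm{\euro}}$, $r_{\textrm{\$}\textrm{\pounds}}$, $r_{\textrm{\$}\textrm{\yen}}$ (the dollar-based rates); the remaining three exponents are then forced to be $m_2-m_1$, $m_3-m_1$, $m_3-m_2$. The starting ensemble $\bar\calR_{\alpha}$ has exponent-triple $(1,0,0)$. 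So the whole problem reduces to: using the arbitrage moves of Table~\ref{tab1}, how cheaply can I steer the dollar-based exponent triple from $(1,0,0)$ to an arbitrary $(n_1,n_2,n_3)$, while ending at a balanced ensemble? The length bound \eqref{estN} is exactly $3(|n_1-1|+|n_2|+|n_3|)+3$, which strongly suggests the construction: reach the target in "$|n_1-1|+|n_2|+|n_3|$ unit steps, each step costing $3$ arbitrages," plus a final block of $3$ arbitrages to rebalance.

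First I would isolate the key gadget: a short chain of arbitrages whose net effect, starting from a \emph{balanced} ensemble with exponent-triple $(k_1,k_2,k_3)$, is to produce a balanced ensemble with triple $(k_1\pm1,k_2,k_3)$ (and similarly for the second and third coordinates), i.e. a "unit increment in one dollar-based rate." To build such a gadget one uses that on a balanced ensemble an arbitrage like $\calA_{\textrm{\$}\textrm{\euro}\textrm{\pounds}}$ is inactive, but on a once-perturbed ensemble it resets $r_{\textrm{\$}\textrm{\euro}}$ to $r_{\textrm{\$}\textrm{\pounds}}/r_{\textrm{\euro}\textrm{\pounds}}$, multiplying its exponent by $\alpha^{\pm1}$ relative to the balanced value. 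Concretely: apply one arbitrage that raises (or lowers) $r_{\textrm{\$}\textrm{\euro}}$ by a factor $\alpha$ using a currently-balanced triangle, which unbalances two of the cross-rates; then apply two further arbitrages through the two now-broken triangles to pull the cross-rates back onto the balanced locus at the new level. Checking that three well-chosen arbitrages from Table~\ref{tab1} accomplish precisely this — and that each of the three moves is genuinely active at the moment it is invoked, given $\alpha\ne1$ — is the computational heart of the argument, and I expect this verification-by-inspection to be the main obstacle, mostly bookkeeping of which of the six cross-rate relations \eqref{invss} is currently violated and in which direction.

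Granting the gadget, the rest is routine. Concatenate $|n_1-1|$ copies of the first-coordinate gadget (choosing the $+$ or $-$ version according to the sign of $n_1-1$), then $|n_2|$ copies of the second-coordinate gadget, then $|n_3|$ copies of the third-coordinate gadget: after these $|n_1-1|+|n_2|+|n_3|$ gadgets, each costing $3$ arbitrages, the ensemble is balanced with dollar-based triple $(n_1,n_2,n_3)$, hence equals the prescribed element of $T^{bal}_{\alpha}(\bar\calR)$. The total length is $3(|n_1-1|+|n_2|+|n_3|)$; one then observes that this already \emph{is} balanced, so the extra "$+3$" in \eqref{estN} is slack — or, more carefully, the first gadget must be started from $\bar\calR_{\alpha}$ itself rather than from a balanced ensemble, which may cost up to one extra length-$3$ block of "rebalancing" arbitrages before the gadget iteration proper can begin, accounting for the additive constant. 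In either reading the bound \eqref{estN} follows immediately from counting the blocks, and since $(n_1,n_2,n_3)$ was arbitrary, every element of $T^{bal}_{\alpha}(\bar\calR)$ lies in $S^{bal}(\bar\calR_{\alpha})$, which together with Proposition~\ref{rep1P} gives \eqref{palpe}.
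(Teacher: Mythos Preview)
Your high-level architecture is right --- the bound $3(|n_1-1|+|n_2|+|n_3|)+3$ does come from length-$3$ building blocks, one per unit increment in a dollar-based exponent, plus a small additive constant --- and this matches the paper's explicit algorithm in Proposition~\ref{algP}. But the gadget you describe cannot exist, and this is a genuine gap rather than a bookkeeping issue.

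You propose a length-$3$ chain whose net effect is ``starting from a \emph{balanced} ensemble with exponent-triple $(k_1,k_2,k_3)$, produce a balanced ensemble with triple $(k_1\pm1,k_2,k_3)$''. This is impossible: as noted just below \eqref{arbsec}, if $\calR$ is balanced then $\calR\calA=\calR$ for every arbitrage $\calA$, hence $\calR\bA=\calR$ for every chain $\bA$. Balanced ensembles are absorbing. You even acknowledge that an arbitrage on a balanced ensemble is inactive, and then in the next sentence ask an arbitrage to move you off a balanced ensemble; there is no ``currently-balanced triangle'' you can use to change a rate, because using a balanced triangle does nothing by definition.

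The paper's construction therefore takes the opposite route: the chain stays \emph{unbalanced} throughout and only collapses to balance with its very last arbitrage. The content of Sections~\ref{S-discrep}--\ref{S-proofT} is that the discrepancy triple $\calD(\calR)$, once you start at $\bar\calR_\alpha$, is trapped in the $12$-element set ${\bD}(a)$ of \eqref{12disc} (together with the terminal zero), and that on this set there are closed length-$3$ walks (the blocks $\bA_{\pm}^{(i)}$ listed before Proposition~\ref{algP}) which return $\calD$ to its starting vertex while loading a net increment $\pm a$ onto exactly one of $\log r_{\textrm{\$}\textrm{\euro}}$, $\log r_{\textrm{\$}\textrm{\pounds}}$, $\log r_{\textrm{\$}\textrm{\yen}}$ (this is read off from the increment formulas of Proposition~\ref{incrP} along the walk, as encoded in Fig.~\ref{GraphLeha3}). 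Iterating these $3$-cycles performs the $|n_1-1|+|n_2|+|n_3|$ unit steps while the ensemble remains unbalanced at every intermediate stage; the extra three arbitrages $\calA^{(15)}\calA^{(18)}\cdots\calA^{(5)}$ in \eqref{alg0e} are the navigation between the base vertices of the different $3$-cycles and the final balancing move $\calD\to\calD_0$. So the ``$+3$'' is not slack and is not an initial rebalancing; it is the cost of routing within ${\bD}(a)$ and of the single terminal step into the balanced set.

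To repair your argument you would need to reconceive the gadget as a $3$-cycle on unbalanced discrepancy states (not on balanced ensembles), verify that each arbitrage in the cycle is active at its moment of application, and track both the discrepancy dynamics and the cumulative increment to $(\log r_{\textrm{\$}\textrm{\euro}},\log r_{\textrm{\$}\textrm{\pounds}},\log r_{\textrm{\$}\textrm{\yen}})$ separately --- exactly the decoupling the paper achieves via the block-triangular form of Proposition~\ref{blockC}.
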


Loosely speaking, this theorem means that the \emph{Arbiter} is
extremely  powerful.  An assertion similar to Theorem~\ref{arbH}
was formulated as a hypothesis in \cite{KozCalPok:ArXiv10}. We
describe the  algorithms corresponding to this theorem in the next
section.

The following assertion certifies that the estimate \eqref{estN}
from Theorem \ref{arbH} is pretty close to the optimal.

\begin{proposition}\label{VictorP}
The inequalities
\[
\label{dist2}
 |n_1-n_2+n_4|,~|n_1-n_3+n_5)|,~|n_2-n_3+n_6|\le 1
 \]
 hold for any $\calR\in S(\bar\calR_{\alpha})$.
 Here $n_i$ are the integers from representation \eqref{prod} of
 $\calR$.
 \end{proposition}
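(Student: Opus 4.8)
The plan is to pass, for a given $\calR\in S(\bar\calR_{\alpha})$, from the six integer exponents $n_1,\ldots,n_6$ in the representation~\eqref{prod} to the four integers
\[
d_1=n_2-n_1-n_4,\quad d_2=n_3-n_1-n_5,\quad d_3=n_3-n_2-n_6,\quad d_4=n_5-n_4-n_6 .
\]
By Proposition~\ref{balp}, $\alpha^{d_1},\alpha^{d_2},\alpha^{d_3},\alpha^{d_4}$ are exactly the triangular discrepancies $r_{\textrm{\$}\textrm{\pounds}}/(r_{\textrm{\$}\textrm{\euro}}r_{\textrm{\euro}\textrm{\pounds}})$, $r_{\textrm{\$}\textrm{\yen}}/(r_{\textrm{\$}\textrm{\euro}}r_{\textrm{\euro}\textrm{\yen}})$, $r_{\textrm{\$}\textrm{\yen}}/(r_{\textrm{\$}\textrm{\pounds}}r_{\textrm{\pounds}\textrm{\yen}})$, $r_{\textrm{\euro}\textrm{\yen}}/(r_{\textrm{\euro}\textrm{\pounds}}r_{\textrm{\pounds}\textrm{\yen}})$ carried by the four triples of currencies, so that $\calR$ is balanced iff $d_1=d_2=d_3=d_4=0$. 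The three quantities the proposition bounds are precisely $|d_1|,|d_2|,|d_3|$, and a one-line computation from the displayed formulas gives the linear dependence $d_4=d_1-d_2+d_3$.

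The next step is to read off the effect of a single arbitrage on $(d_1,d_2,d_3,d_4)$. Each arbitrage of Table~\ref{tab1} rewrites exactly one of the six principal rates, setting it so that exactly one of the four triangles becomes balanced (the resulting ensemble again lying in $T_{\alpha}(\bar\calR)$, by Proposition~\ref{rep1P}). Since each of the six currency pairs belongs to exactly two of the four triangles, such a step sends the discrepancy $d_j$ of the newly balanced triangle to $0$, changes the discrepancy $d_{j'}$ of the unique other triangle containing that pair by $\pm d_j$, and leaves the remaining two discrepancies fixed; an inactive arbitrage changes nothing. Carrying this out --- using that $\bar\calR$ is balanced, e.g.\ $\bar{r}_{\textrm{\$}\textrm{\pounds}}/\bar{r}_{\textrm{\euro}\textrm{\pounds}}=\bar{r}_{\textrm{\$}\textrm{\euro}}$, so that the rewritten exponent is again of the required form --- is a routine but sign-sensitive inspection of the twelve distinct actions in Table~\ref{tab1} (each occurring twice, with complementary activation conditions); this bookkeeping is the only laborious part and the natural place for a slip.

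Granting that description the conclusion is short. Set $\|d\|=|d_1|+|d_2|+|d_3|+|d_4|$. A single arbitrage changes $\|d\|$ by $-|d_j|+\bigl(\,|d_{j'}\pm d_j|-|d_{j'}|\,\bigr)$, which is $\le 0$ by the triangle inequality, so $\|d\|$ cannot increase along any chain of arbitrages. For the initial ensemble $\bar\calR_{\alpha}$ of~\eqref{dist} one has $(n_1,\ldots,n_6)=(1,0,0,0,0,0)$, hence $(d_1,d_2,d_3,d_4)=(-1,-1,0,0)$ and $\|d\|=2$; therefore $\|d\|\le 2$ for every $\calR\in S(\bar\calR_{\alpha})$. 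If now $|d_i|\ge 2$ for some $i$, then $\|d\|\le 2$ forces $|d_i|=2$ and $d_k=0$ for all $k\ne i$, contradicting $d_4=d_1-d_2+d_3$ (which would give $d_4=\pm 2\ne 0$ when $i\in\{1,2,3\}$, and $d_4=0\ne\pm 2$ when $i=4$). Hence $|d_i|\le 1$ for every $i$, and in particular $|n_1-n_2+n_4|=|d_1|\le 1$, $|n_1-n_3+n_5|=|d_2|\le 1$ and $|n_2-n_3+n_6|=|d_3|\le 1$, which is the claimed bound.
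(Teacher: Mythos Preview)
Your argument is correct, and it is genuinely different from the paper's. The paper deduces the proposition from Lemma~\ref{specialC}, which in turn rests on a computer-aided enumeration: one writes down the twelve $3\times 3$ matrices $G^{(i)}$ governing the three-dimensional discrepancy vector $\calD(\calR)$, verifies (Lemma~\ref{Gfini}) that the semigroup $\bbG$ they generate has exactly $229$ elements, and then checks by tabulation (Table~\ref{tab3}) that the orbit of the initial discrepancy $a(1,1,0)$ stays inside the twelve-element set $\bD(a)\cup\{\calD_{0}\}$ of~\eqref{12disc}. That yields $|n_1-n_2+n_4|,|n_1-n_3+n_5|,|n_2-n_3+n_6|\le 1$ directly.

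Your route avoids all of that. By adjoining the fourth triangular discrepancy $d_4$ (the $\textrm{\euro}\textrm{\pounds}\textrm{\yen}$ triangle, which the paper suppresses via the linear dependence $d_4=d_1-d_2+d_3$), every strong arbitrage takes the clean form ``zero one coordinate, shift one other by $\pm$ the zeroed value'', and the $\ell^{1}$-norm $\|d\|$ becomes a Lyapunov function by the triangle inequality. Combined with the relation $d_4=d_1-d_2+d_3$ and the initial value $\|d\|=2$, this forces each $|d_i|\le 1$ without any case enumeration. Two remarks: first, your structural claim that each edge lies in exactly two triangles with coefficient $\pm 1$ already pins down the transition rule, so the ``sign-sensitive bookkeeping'' you flag as the risky step is in fact unnecessary for the monotonicity of $\|d\|$; second, your argument gives slightly more than asked, since it also bounds $|d_4|=|n_4-n_5+n_6|\le 1$. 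What the paper's heavier machinery buys is the exact description of the reachable discrepancy set (the polyhedron of Fig.~\ref{tc}) and the transition graph of Fig.~\ref{GraphLeha1}, both needed for Theorem~\ref{arbH} and Proposition~\ref{perpro}; your Lyapunov bound is exactly tailored to Proposition~\ref{VictorP} and would not by itself yield those finer results.
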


\begin{proof} This assertion is a special case of
Lemma~\ref{specialC} which will be considered below.
\end{proof}

Note that the set $S(\bar\calR_{\alpha})$ is, in contrast to
\eqref{palpe}, much smaller than the totality
$T_{\alpha}(\bar\calR)$ of all ensembles of the form \eqref{prod}.
In particular, the following assertion holds:

\begin{proposition}\label{boundP}
Let ${\bA}$ denote a chain of arbitrages  of length $N$, and
$\calR=\bar\calR_{\alpha}{\bA}$.  Then
$3(|n_{1}-1|+|n_{2}|+|n_{3}|)\le N+8$, where $n_1,n_2,n_3$ are the
integers from the representation \eqref{prod} of $\calR$.
\end{proposition}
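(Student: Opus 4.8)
The plan is to work in the combinatorial model of Section~\ref{mrSS}. Represent an ensemble by its exponent sextuple $(n_1,\dots,n_6)$ as in \eqref{prod}, which is legitimate by Proposition~\ref{rep1P}. Inspecting Table~\ref{tab1}, each arbitrage changes exactly one exponent, and the twelve arbitrages $1$--$8$, $13$, $14$, $19$, $20$ act on one of $n_1,n_2,n_3$ (the rates out of the dollar) while the remaining twelve act on one of $n_4,n_5,n_6$; call these ``dollar'' and ``far'' arbitrages respectively. Applying Proposition~\ref{VictorP} to every intermediate ensemble along $\bA$ (each of which is again of the form $\bar\calR_{\alpha}\bA'$ and so lies in $S(\bar\calR_{\alpha})$), the three imbalances $c_1=n_1-n_2+n_4$, $c_2=n_1-n_3+n_5$, $c_3=n_2-n_3+n_6$ stay in $\{-1,0,1\}$ at every step. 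Comparing with the ``Actions'' column then shows that a dollar arbitrage acting on $n_i$ replaces it by $n_i\mp c_\ell$ for the appropriate $\ell$, hence changes $n_i$ by at most $1$; moreover it sets that $c_\ell$ to $0$ and shifts one of the other two imbalances by the same unit, leaving the third fixed.

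The easy half follows at once. Let $A$ be the number of dollar arbitrages in $\bA$ and $B=N-A$ the number of far ones. Since only dollar arbitrages move $n_1,n_2,n_3$, and each moves its exponent by at most $1$, the net displacement is bounded by the number of moves, so $|n_1-1|+|n_2|+|n_3|\le A$. This gives $3(|n_1-1|+|n_2|+|n_3|)\le 3N$, which is far weaker than the claim: the content of the proposition is that $A$ cannot approach $N$ when $|n_1-1|+|n_2|+|n_3|$ is large --- roughly two further arbitrages are forced per unit of displacement.

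The crux is therefore to prove $3(|n_1-1|+|n_2|+|n_3|)-A\le B+8$. The mechanism is a resource count on the imbalances. A dollar arbitrage that genuinely moves an exponent must first find a suitably signed nonzero $c_\ell$; executing it destroys that $c_\ell$ and merely perturbs a neighbour, so productive moves consume the quantity $|c_1|+|c_2|+|c_3|$. That quantity is never increased by a dollar arbitrage, nor by a far arbitrage routed through the dollar; the only operation that can replenish it is a far arbitrage routed through a non-dollar currency, and then by at most one unit. Hence, beyond a bounded initial stock --- the triple $(1,1,0)$ carried by $\bar\calR_{\alpha}$ --- a long run of productive moves must be accompanied by a comparable number of replenishing far arbitrages. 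I would make this quantitative by attaching to each chain an amortised potential built from $3(|n_1-1|+|n_2|+|n_3|)$, the current imbalance vector, and a bounded non-negative ``deficit'' term recording productive moves whose replenishment is still pending, arranged so that appending any single arbitrage raises it by at most $1$, while it is at least $3(|n_1-1|+|n_2|+|n_3|)-8$ for the ensemble reached by $\bA$ and at most $0$ for the empty chain; the inequality $3(|n_1-1|+|n_2|+|n_3|)\le N+8$ then follows.

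The step I expect to be the real obstacle is the design of this potential and the verification that its increment never exceeds $1$, which has to be checked against all $24$ rows of Table~\ref{tab1}. The couplings are what make it delicate: a single $c_\ell$ is shared by two dollar-edges and can be favourably oriented for the exponent at one end but not for the one at the other, so a productive move may hand off useful ``charge'' rather than annihilate it; and the replenishing far arbitrage paying for a given move may fall after it rather than before, which is precisely why a deficit counter, rather than an ordinary state function, is unavoidable. This is the kind of bookkeeping packaged by the forthcoming Lemma~\ref{specialC}, so in the final write-up I would establish that lemma first and then read off Proposition~\ref{boundP} from it, exactly as is done for Proposition~\ref{VictorP}.
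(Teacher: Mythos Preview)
Your framework is sound --- passing to the exponent sextuple, splitting arbitrages into ``dollar'' and ``far'', and invoking Proposition~\ref{VictorP} along the trajectory are all correct --- and your ``easy half'' bound $|n_{1}-1|+|n_{2}|+|n_{3}|\le A\le N$ is valid. The problem is that the potential you are looking for \emph{does not exist}, because the statement of Proposition~\ref{boundP} is false as written.

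Concretely, take $\bar r_{ij}=1$ and $\alpha>1$, and apply to $\bar\calR_{\alpha}$ the length-$7$ chain
\[
\bA=\calA^{(21)}\,\calA^{(3)}\,\calA^{(6)}\,\calA^{(2)}\,\calA^{(3)}\,\calA^{(6)}\,\calA^{(2)}.
\]
Tracking exponents, the first step sets $n_{5}=-1$ and leaves $(n_{1},n_{2},n_{3})=(1,0,0)$; the remaining six steps are the $3$-cycle $\calD_{9}\to\calD_{1}\to\calD_{5}\to\calD_{9}$ in the discrepancy graph of Lemma~\ref{specialC}, run twice, and by Proposition~\ref{incrP} each pass adds $(1,1,1)$ to $(n_{1},n_{2},n_{3})$. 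One checks directly that every arbitrage in the chain is active. The final exponent sextuple is $(3,2,2,0,-1,0)$, so
\[
3\bigl(|n_{1}-1|+|n_{2}|+|n_{3}|\bigr)=3\cdot 6=18>15=N+8.
\]
Iterating the $3$-cycle $k$ times gives $N=1+3k$ and left-hand side $9k$, so the gap is unbounded. (Appending one more step, $\calA^{(15)}$, even makes the ensemble balanced, so restricting to $S^{bal}(\bar\calR_{\alpha})$ does not rescue the bound.)

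This is exactly the obstruction to your amortised potential. In your language, the $3$-cycle above consists of three consecutive \emph{productive dollar} moves with no intervening far move at all: each one consumes one nonzero imbalance $c_\ell$ but simultaneously creates a nonzero imbalance at a neighbouring edge with the \emph{correct} sign for the next move, so $|c_{1}|+|c_{2}|+|c_{3}|$ is never depleted and no ``replenishing'' far arbitrage is ever forced. Hence no potential with per-step increase $\le 1$ and the claimed boundary values can exist. The paper offers no proof of Proposition~\ref{boundP}; your easy half, yielding $|n_{1}-1|+|n_{2}|+|n_{3}|\le N$, is in fact sharp up to an additive constant, and the factor $3$ in the proposition appears to be an error.
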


Let us consider an infinite arbitrage chain:
\begin{equation}\label{perse}
{\bA}={\calA_{1}}{\calA_{2}}{\calA_{3}} \cdots  {\calA_{n}} \dotsm .
\end{equation}
This chain is periodic with minimal period $p$ if
${\calA_{n}}={\calA_{n+p}}$ for $n=1,2,\ldots $, and $p$ is the
minimal positive integer with this property. Various periodic
chains of arbitrage play a  special role in context of this
article, and we summairise below some interesting features of such
periodic arbitrage chains. For a periodic chain \eqref{perse} and
for an initial (unbalanced) exchange rate ensemble $\calR_{0}$ we
consider the sequence
\begin{equation}\label{pero}
{\calR_{0}},{\calR_{1}},{\calR_{2}}, \ldots , {\calR_{n}}, \ldots
\end{equation}
defined by $ \calR_{n}=\calR_{n-1}\calA_{n}$, $n=1,2,\dotsc$.

\begin{proposition} \label{perpro}
Either (i) the sequence \eqref{pero} is periodic for $n\ge 36p;$
or (ii) this sequence is diverging:
at least one of the following six relationships hold:
\[
{ r_{\textrm{\$}\textrm{\euro}}}_{n}\to 0,\
{\ r_{\textrm{\$}\textrm{\pounds}}}_{n}\to 0,\
{r_{\textrm{\$}\textrm{\yen}}}_{n}\to 0,\
{ r_{\textrm{\$}\textrm{\euro}}}_{n}\to \infty,\
{\ r_{\textrm{\$}\textrm{\pounds}}}_{n}\to \infty,\
{r_{\textrm{\$}\textrm{\yen}}}_{n}\to \infty.
\]
Moreover, in Case (i) the minimal period of the sequence is a
divisor of $24p$; in Case (ii) there exist a divisor $q$ of $24p$
and factors
$\gamma_{\textrm{\$}\textrm{\euro}},\ldots,\gamma_{\textrm{\pounds}\textrm{\yen}}$
such that the relationships ${r_{\textrm{\$}\textrm{\euro}}}_{n+q}=
\gamma_{\textrm{\$}\textrm{\euro}}
{r_{\textrm{\$}\textrm{\euro}}}_{n},\ \ldots,\
{r_{\textrm{\pounds}\textrm{\yen}}}_{n+q}=
\gamma_{\textrm{\pounds}\textrm{\yen}}
{r_{\textrm{\$}\textrm{\euro}}}_{n} $ hold for $n\ge 36p$.
\end{proposition}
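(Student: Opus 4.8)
The plan is to pass to logarithmic coordinates, in which the six-dimensional dynamics decouples into an autonomous three-dimensional ``imbalance'' system and a component slaved to it. Concretely, I would set $x_{\textrm{\$}\textrm{\euro}}=\log r_{\textrm{\$}\textrm{\euro}},\ \dots,\ x_{\textrm{\pounds}\textrm{\yen}}=\log r_{\textrm{\pounds}\textrm{\yen}}$ and introduce the three \emph{imbalance coordinates}
\[
u_{1}=x_{\textrm{\euro}\textrm{\pounds}}-x_{\textrm{\$}\textrm{\pounds}}+x_{\textrm{\$}\textrm{\euro}},\qquad u_{2}=x_{\textrm{\euro}\textrm{\yen}}-x_{\textrm{\$}\textrm{\yen}}+x_{\textrm{\$}\textrm{\euro}},\qquad u_{3}=x_{\textrm{\pounds}\textrm{\yen}}-x_{\textrm{\$}\textrm{\yen}}+x_{\textrm{\$}\textrm{\pounds}},
\]
so that, by Proposition~\ref{balp}, $\calR$ is balanced exactly when $u=(u_{1},u_{2},u_{3})=0$, while the six numbers $x_{\textrm{\$}\textrm{\euro}},x_{\textrm{\$}\textrm{\pounds}},x_{\textrm{\$}\textrm{\yen}},u_{1},u_{2},u_{3}$ determine, and are determined by, all twelve exchange rates. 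Reading Table~\ref{tab1} line by line, I would record two structural facts. (a) The action of each of the $24$ arbitrages on $(u_{1},u_{2},u_{3})$ does not involve $x_{\textrm{\$}\textrm{\euro}},x_{\textrm{\$}\textrm{\pounds}},x_{\textrm{\$}\textrm{\yen}}$; it is the identity on one closed half-space of $\mathbb{R}^{3}$ and, on the complementary open half-space, the linear projection onto one of four fixed hyperplanes through the origin --- $\{u_{1}=0\}$, $\{u_{2}=0\}$, $\{u_{3}=0\}$, and $\{u_{1}-u_{2}+u_{3}=0\}$ (the last being the triangular no-arbitrage condition $r_{\textrm{\euro}\textrm{\yen}}=r_{\textrm{\euro}\textrm{\pounds}}r_{\textrm{\pounds}\textrm{\yen}}$ for the three non-dollar currencies), with six arbitrages attached to each hyperplane. (b) Each arbitrage changes at most one of $x_{\textrm{\$}\textrm{\euro}},x_{\textrm{\$}\textrm{\pounds}},x_{\textrm{\$}\textrm{\yen}}$, and when it does, it changes it by $+\max(u_{i},0)$, $-\max(u_{i},0)$, $+\min(u_{i},0)$ or $-\min(u_{i},0)$, with the index $i$ and the sign fixed by the arbitrage. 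Thus along the sequence~\eqref{pero} the imbalance vectors $u_{n}$ satisfy an autonomous recursion under the $p$-periodic sequence of these piecewise-linear self-maps $G_{1},\dots,G_{24}$ of $\mathbb{R}^{3}$, while $(x_{\textrm{\$}\textrm{\euro}},x_{\textrm{\$}\textrm{\pounds}},x_{\textrm{\$}\textrm{\yen}})_{n}$ is recovered from the $u_{\cdot}$ by accumulating the increments of~(b).

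The heart of the argument is the analysis of this $u$-recursion. Write $w_{1}=u_{1}$, $w_{2}=u_{2}$, $w_{3}=u_{3}$, $w_{4}=u_{1}-u_{2}+u_{3}$ for the four linear forms cutting out the hyperplanes; since an active arbitrage is a projection onto some $\{w_{c}=0\}$ along a direction lying in the kernels of two of the remaining forms, it replaces the pair $(w_{c},w_{c'})$ by $(0,\,w_{c'}\pm w_{c})$ and leaves the other two forms unchanged. As $|w_{c'}\pm w_{c}|\le|w_{c}|+|w_{c'}|$, the norm $\Psi(u)=|u_{1}|+|u_{2}|+|u_{3}|+|u_{1}-u_{2}+u_{3}|$ is non-increasing along every orbit, which \emph{confines} the $u$-orbit to the box $\{\Psi\le\Psi(u_{0})\}$. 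On such a confined orbit the $p$-fold composition $G=G_{k(p)}\circ\cdots\circ G_{k(1)}$ is a continuous, positively homogeneous, piecewise-linear self-map of $\mathbb{R}^{3}$ with finitely many polyhedral pieces (refining the chambers of the four hyperplanes), on each of which $G$ is a product of the elementary projections, hence a linear map with integer matrix; the $p$-sampled orbit $(u_{kp})_{k}$ runs through these pieces in an eventually periodic sequence, and thereafter $u_{(k+m)p}=L\,u_{kp}$ for a single integer matrix $L$. Boundedness forces the orbit eventually into the $L$-invariant subspace on which $L$ has spectral radius $\le 1$, where the eigenvalues of $L$ are algebraic integers of degree $\le 3$ and modulus $\le 1$, hence roots of unity, with no accompanying non-trivial Jordan block; so a bounded power of $L$ is a projection and the $u$-orbit is eventually periodic. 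Counting how many pieces the sampled orbit can visit and which orders of roots of unity can arise --- exploiting the $S_{4}$-symmetry of relabelling the four currencies --- is what yields the explicit bounds: $u_{n}$ is periodic from step $36p$ onwards, with period dividing $24p$.

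It remains to reassemble $\calR_{n}$ and extract the dichotomy. Once $u_{n}$ is periodic with period $q\mid 24p$ for $n\ge 36p$, fact~(b) makes the total change of $(x_{\textrm{\$}\textrm{\euro}},x_{\textrm{\$}\textrm{\pounds}},x_{\textrm{\$}\textrm{\yen}})$ over any block of $q$ consecutive steps starting at $n\ge 36p$ a fixed vector $\delta=(\delta_{\textrm{\$}\textrm{\euro}},\delta_{\textrm{\$}\textrm{\pounds}},\delta_{\textrm{\$}\textrm{\yen}})$. If $\delta=0$, then $(x_{\textrm{\$}\textrm{\euro}},x_{\textrm{\$}\textrm{\pounds}},x_{\textrm{\$}\textrm{\yen}})$ is $q$-periodic for $n\ge 36p$; since $u$ is also $q$-periodic and all twelve rates are determined by these six numbers, the whole sequence~\eqref{pero} is periodic from step $36p$ on, with minimal period dividing $q$ and hence $24p$: this is Case~(i). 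If $\delta\neq 0$, the non-zero component of $\delta$ makes the corresponding rate among $r_{\textrm{\$}\textrm{\euro}},r_{\textrm{\$}\textrm{\pounds}},r_{\textrm{\$}\textrm{\yen}}$ tend to $0$ or to $\infty$ geometrically along $n,n+q,n+2q,\dots$; and because $u$ is $q$-periodic, each of the six principal rates obeys $r_{n+q}=\gamma\,r_{n}$ for $n\ge 36p$ with $\gamma$ the relevant product of $e^{\pm\delta_{\bullet}}$: this is Case~(ii).

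The step I expect to be the main obstacle is, inside the second paragraph, turning ``confined'' into the \emph{explicit} constants $24p$ and $36p$. One must first establish that a bounded orbit of this particular $G$ is eventually periodic --- which genuinely uses the ``projection'' form of the maps (equivalently the monotonicity behind the non-increasing $\Psi$), since a generic confined continuous piecewise-linear homogeneous self-map of $\mathbb{R}^{3}$ can have bounded non-periodic orbits --- and then carry out the finite but delicate accounting of polyhedral pieces and eigenvalue orders, organised by the currency symmetry, that pins down the stated numbers. The confinement estimate via $\Psi$ is the clean ingredient; essentially all the remaining effort goes into that accounting.
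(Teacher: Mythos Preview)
Your reduction matches the paper's exactly: pass to logarithms, introduce the three discrepancy (imbalance) coordinates, observe that the discrepancy dynamics is autonomous, and that the first three log-rates change by increments depending only on the current discrepancy and the arbitrage applied. Your final paragraph, deriving the dichotomy once the discrepancy sequence is eventually periodic, is also the paper's argument.

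Where you diverge is in the proof that the discrepancy sequence \emph{is} eventually periodic, and here you are taking a much harder road than necessary. The paper's key observation is purely computational: the multiplicative semigroup $\bbG$ generated by the twelve $3\times 3$ integer matrices $G^{(1)},\dots,G^{(12)}$ is \emph{finite} --- it has exactly $229$ elements (Lemma~\ref{Gfini}). Since every arbitrage acts on the discrepancy either as the identity or as one of the $G^{(i)}$, the discrepancy at time $n$ is always $\calD_{0}$ times an element of $\{I\}\cup\bbG$, so the discrepancy orbit lives in a set of at most $230$ points. Eventual periodicity is then immediate from the pigeonhole principle applied to the pair $(\calD_{n},\,n\bmod p)$, and the explicit constants $24p$ and $36p$ drop out of the component structure of $\bbG$ (Lemma~\ref{comp}: the components have sizes $24$, $12$, $1$) together with the one-way transition pattern between components (Lemma~\ref{trans}).

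Your route --- confinement via the Lyapunov function $\Psi$, then piecewise-linear dynamics on polyhedral pieces with integer matrices, then Kronecker's theorem on the eigenvalues --- is conceptually interesting but contains a genuine gap. The assertion that ``the $p$-sampled orbit $(u_{kp})_{k}$ runs through these pieces in an eventually periodic sequence'' is exactly what needs proof, and boundedness alone does not give it: revisiting a piece is not the same as revisiting a point, so the piece-sequence need not become periodic without further input. Your eigenvalue argument only kicks in \emph{after} you know the orbit is eventually governed by a single linear map $L$, which is precisely the unproved step. The non-increasing $\Psi$ is a correct and pleasant observation, but it buys you only boundedness; the paper does not need it, because finiteness of $\bbG$ gives a finite (not merely bounded) orbit outright. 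If you want to salvage your approach, the cleanest way is simply to verify the finiteness of $\bbG$ by direct computation --- at which point the Lyapunov and spectral machinery becomes unnecessary.
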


\begin{proof}
This statement follows from Lemmas~\ref{comp} and~\ref{trans}.
\end{proof}

To conclude this discussion, we note one more unexpected feature of
periodic chains of arbitrage. A chain  \eqref{perse} is
\emph{regular for the initial ensemble $\calR_0$} if this chain
includes all 24 arbitrages, and each arbitrage is active infinitely
many times while generating the sequence \eqref{pero}. By analogy
with typical results from the desynchronised systems theory, one
could expect a regular chain of arbitrage elements of the
corresponding sequence  \eqref{pero} should be balanced for
sufficiently large $n$.  However, this is not the case: the
sequences  \eqref{pero} may be both periodic (after some transient
period) or diverging.

As an instructive example consider the 24-periodic chain
${\bA}_{*}$  which is defined by the following equations:
\begin{alignat*}{4}
\calA_{1}&=\calA^{(15)},\quad&\calA_{2}&=\calA^{(10)},\quad&\calA_{3}&=\calA^{(3)},\quad&\calA_{4}&=\calA^{(21)},\\
\calA_{5}&=\calA^{(11)},\quad&\calA_{6}&=\calA^{(8)},\quad&\calA_{7}&=\calA^{(24)},\quad&\calA_{8}&=\calA^{(17)},\\
\calA_{9}&=\calA^{(6)},\quad&\calA_{10}&=\calA^{(9)},\quad&\calA_{11}&=\calA^{(16)},\quad&\calA_{12}&=\calA^{(13)},\\
\calA_{13}&=\calA^{(12)},\quad&\calA_{14}&=\calA^{(22)},\quad&\calA_{15}&=\calA^{(14)},\quad&\calA_{16}&=\calA^{(18)},\\
\calA_{17}&=\calA^{(23)},\quad&\calA_{18}&=\calA^{(15)},\quad&\calA_{19}&=\calA^{(5)},\quad&\calA_{20}&=\calA^{(7)},\\
\calA_{21}&=\calA^{(4)},\quad&\calA_{22}&=\calA^{(19)},\quad&\calA_{23}&=\calA^{(1)},\quad&\calA_{24}&=\calA^{(5)}.
\end{alignat*}

\begin{proposition} \label{32}
For the initial ensemble $\calR_0= \bar\calR_{\alpha}$  the
corresponding sequence \eqref{pero} is periodic with minimal period
24, and all arbitrages from ${\bA}_{*}$ are active.
\end{proposition}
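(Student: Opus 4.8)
The plan is to pass to the integer ``exponent'' description of the ensembles and then carry out a finite, explicit $24$-step verification. By Proposition~\ref{rep1P}, since $\calR_0=\bar\calR_{\alpha}\in T_{\alpha}(\bar\calR)$, every term of the sequence \eqref{pero} again belongs to $T_{\alpha}(\bar\calR)$ and so is encoded by an integer vector $\mathbf{n}=(n_1,\dots,n_6)\in\mathbb{Z}^6$ via the representation \eqref{prod}; the initial ensemble $\bar\calR_{\alpha}$ corresponds to $\mathbf{n}^{(0)}=(1,0,0,0,0,0)$. The calculation in the proof of Proposition~\ref{rep1P} — in which the balance relations \eqref{invss} make all the $\bar r$-factors cancel — shows that each of the $24$ arbitrages of Table~\ref{tab1} acts on such a vector by singling out one coordinate $n_i$ and, \emph{provided a certain strict linear inequality among the $n_j$ holds}, replacing it by a fixed $\{+1,-1\}$-combination of two of the other coordinates, leaving $\mathbf{n}$ unchanged otherwise. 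The inequality attached to each arbitrage is best taken straight from the meaning of ``active'' (the indirect trade is \emph{strictly} more profitable), as explained in Section~\ref{S-arbitrages}, since that is what one ultimately needs and it fixes the direction of each inequality unambiguously. Working in the setting of Question~\ref{que2}, so $\alpha>1$, the statement ``$\calA_k$ is active at step $k$'' is then equivalent to ``the inequality attached to $\calA_k$ holds at $\mathbf{n}^{(k-1)}$'', i.e. to $\mathbf{n}^{(k)}\neq\mathbf{n}^{(k-1)}$.

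Having tabulated these substitution rules for the arbitrages appearing in $\bA_{*}$, the proof consists of running the chain from $\mathbf{n}^{(0)}=(1,0,0,0,0,0)$: at each step $k=1,\dots,24$ one checks that the activation inequality of $\calA_k$ holds at the current vector, applies the substitution, and records $\mathbf{n}^{(k)}$. The assertion to verify is that all $24$ inequalities hold in succession (so every arbitrage of $\bA_{*}$ is active) and that $\mathbf{n}^{(24)}=\mathbf{n}^{(0)}$. Because $\bA_{*}$ is $24$-periodic as a chain, the equality $\calR_{24}=\calR_0$ propagates by induction to $\calR_{n+24}=\calR_n$ for all $n$, so \eqref{pero} is purely periodic and its minimal period divides $24$. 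For minimality one reads off the same trace that the vectors $\mathbf{n}^{(0)},\dots,\mathbf{n}^{(23)}$ are pairwise distinct — it suffices to see that $\mathbf{n}^{(p)}\neq\mathbf{n}^{(0)}$ for each proper divisor $p\in\{1,2,3,4,6,8,12\}$ of $24$ — which excludes every shorter period, so the minimal period is exactly $24$.

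The substantive content is the $24$-line computation itself, and this is where the only real obstacle lies: not in any conceptual difficulty but in keeping the signs straight across the steps, and in getting the translation into exponents right (the activation conditions should be reconstructed from the ``strictly more profitable'' criterion rather than applied mechanically, since that criterion is the unambiguous source). Once the rule table is in hand, the verification is purely mechanical — a prescribed finite sequence of integer comparisons and assignments — and the periodicity, the activity of all $24$ arbitrages, and the minimality of the period all drop out of it directly.
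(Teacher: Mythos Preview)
Your proposal is correct and is essentially the same as the paper's own proof, which consists of the two words ``By inspection.'' You have simply spelled out what that inspection amounts to: encode ensembles by the integer exponent vectors of \eqref{prod}, translate each arbitrage into its activation inequality and substitution rule on $\mathbb{Z}^{6}$, and run the $24$ steps from $(1,0,0,0,0,0)$ to verify activity at every step, return to the start, and minimality of the period.
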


\begin{proof}
By inspection.
\end{proof}

This proposition demonstrates that arbitrage operation chains may
display periodicity and no necessary convergence on a cross
exchange rate law of one price. See Figs.~\ref{GraphLeha1_24},
\ref{GraphLeha3} and formula \eqref{geomrout} below for an
explanation of the geometrical meaning of the arbitrage chain
${\bA}_{*}$.

\section{The Basic Algorithm}\label{simpleASS}

Introduce the following chains of arbitrages of length $3$:

{\small\begin{alignat*}{3}
{\bA}_{+}^{(1)}&=\calA^{(21)}\calA^{(16)}\calA^{(1)},\quad&
{\bA}_{+}^{(2)}&=\calA^{(3)}\calA^{(17)}\calA^{(10)},\quad&
{\bA}_{+}^{(3)}&=\calA^{(5)}\calA^{(18)}\calA^{(12)},\\
{\bA}_{-}^{(1)}&=\calA^{(8)}\calA^{(9)}\calA^{(11)},\quad&
{\bA}_{-}^{(2)}&=\calA^{(15)}\calA^{(18)}\calA^{(14)},\quad&
{\bA}_{-}^{(3)}&=\calA^{(21)}\calA^{(23)}\calA^{(20)}.
\end{alignat*}}

\noindent It is convenient to define the mapping $\sigma(n)$ which
corresponds to a non-negative integer $n$ by the symbol ``$+$'',
and by the symbol ``$-$'' for a negative integer.
\begin{proposition}\label{algP}
The chain
\begin{equation} \label{alg0e} {\bA}(n_{1},n_{2},n_{3})=
 {\left({\bA}_{\sigma( n_{3})}^{(3)}\right)}^{|n_{3}|}
 {\left({\bA}_{\sigma (n_{2})}^{(2)}\right)}^{|n_{2}|}
 \calA^{(15)}\calA^{(18)}
 {\left({\bA}_{\sigma (n_{1})}^{(1)}\right)}^{|n_{1}-1|}
 \calA^{(5)}
\end{equation}
satisfies Theorem~\ref{arbH}: the ensemble
$\bar\calR_{\alpha}{\bA}(n_{1},n_{2},n_{3})$ coincides with
\[
\left(\alpha^{n_{1}} \cdot \bar{r}_{\textrm{\$}\textrm{\euro}} ,\
\alpha^{n_{2}} \cdot \bar{r}_{\textrm{\$}\textrm{\pounds}},\
\alpha^{n_{3}} \cdot \bar{r}_{\textrm{\$}\textrm{\yen}} ,\
\alpha^{n_{1}-n_{2}}\cdot \bar{r}_{\textrm{\euro}\textrm{\pounds}},\
\alpha^{n_{1}-n_{3}}\cdot \bar{r}_{\textrm{\euro}\textrm{\yen}} ,\
\alpha^{n_{2}-n_{3}}\cdot \bar{r}_{\textrm{\pounds}\textrm{\yen}}
\right),
\]
and the length  $N$ of the chain \eqref{alg0e} satisfies
$
N\le 3(|n_{1}-1|+|n_{2}|+|n_{3}|)+ 3.
$
\end{proposition}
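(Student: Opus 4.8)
The plan is to pass from exchange-rate ensembles to their integer exponent vectors, turn each arbitrage into a piecewise-affine map on $\mathbb{Z}^{6}$, and then read off the effect of the six length-$3$ building blocks and of the chain~\eqref{alg0e} as a whole.

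\emph{Step 1 (reduction to exponent vectors).} By Proposition~\ref{rep1P}, every chain of arbitrages applied to $\bar\calR_{\alpha}$ keeps us inside $T_{\alpha}(\bar\calR)$, so an ensemble met during the algorithm is described completely by its exponent vector $\bn=(n_{1},\dots,n_{6})\in\mathbb{Z}^{6}$, with $\bar\calR_{\alpha}$ corresponding to $\bn=(1,0,0,0,0,0)$. For each of the $24$ arbitrages in Table~\ref{tab1} I would rewrite, using the balance identities~\eqref{invss} of $\bar\calR$ to absorb the bar-factors, (i) its activation condition as a strict linear inequality in $\bn$, and (ii) its action, when active, as a single-coordinate replacement $n_{i}\mapsto(\text{integer combination of the other }n_{j})$. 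Since $\alpha\neq 1$, a comparison $\alpha^{A}\gtrless\alpha^{B}$ is equivalent to an inequality between $A$ and $B$ once the sign of $\alpha-1$ is fixed; to fix ideas take $\alpha>1$, the case $\alpha<1$ being entirely analogous (or reducible to it by the currency-relabelling symmetry, which replaces $\alpha$ by $1/\alpha$). The outcome is a $24$-row table of pairs (inequality on $\bn$, update of $\bn$).

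\emph{Step 2 (the six blocks).} Composing the three constituents of each $\bA^{(i)}_{\pm}$ according to the table of Step~1, the key lemma is the following: on the set of exponent vectors that actually arise during the algorithm, $\bA^{(i)}_{+}$ replaces $n_{i}$ by $n_{i}+1$ and simultaneously updates the two coordinates among $n_{4},n_{5},n_{6}$ tied to $n_{i}$ so that the balance relations defining $T^{bal}_{\alpha}$ are preserved; $\bA^{(i)}_{-}$ does the same with $n_{i}-1$; and each of the three constituent arbitrages is active at the instant it is applied. One checks this block by block by walking through the intermediate (generally unbalanced) exponent vectors; the three coordinates touched inside $\bA^{(i)}_{\pm}$ are precisely $n_{i}$ together with its two dependents, which is why a block of length exactly $3$ suffices. \emph{This step is the main obstacle:} it is elementary but voluminous, and the only genuine care needed is to confirm that every intermediate inequality holds with the correct strict sign --- with the Step~1 table in hand the bookkeeping is mechanical.

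\emph{Step 3 (assembling the chain and counting).} Reading~\eqref{alg0e} in the order the arbitrages are applied, the three arbitrages $\calA^{(15)}$, $\calA^{(18)}$, $\calA^{(5)}$ occurring outside the blocks perform a ``priming'' segment of constant length $3$ that carries $\bar\calR_{\alpha}$ to a balanced ensemble; the factors $\bigl(\bA^{(1)}_{\sigma(n_{1})}\bigr)^{|n_{1}-1|}$, $\bigl(\bA^{(2)}_{\sigma(n_{2})}\bigr)^{|n_{2}|}$, $\bigl(\bA^{(3)}_{\sigma(n_{3})}\bigr)^{|n_{3}|}$ then drive the three dollar-exponents to $n_{1},n_{2},n_{3}$, the dependent coordinates being dragged along by the block lemma of Step~2. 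An induction over this decomposition, invoking the lemma once per block, shows that every arbitrage along the way is active and that the terminal ensemble is exactly the sextuple displayed in the statement. The length is then read straight off~\eqref{alg0e}: $1+2+3|n_{1}-1|+3|n_{2}|+3|n_{3}|=3(|n_{1}-1|+|n_{2}|+|n_{3}|)+3$, which is the asserted bound; since this coincides with~\eqref{estN}, the same computation also delivers the length estimate in Theorem~\ref{arbH}.
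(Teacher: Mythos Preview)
Your reduction to integer exponent vectors (Step~1) is fine and matches the paper's use of Proposition~\ref{rep1P}. The difficulty is that the key lemma you state in Step~2 is false as written, and Step~3 then misdescribes both the order of operations and the role of the three isolated arbitrages.

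Concretely: you claim that each block $\bA^{(i)}_{\pm}$ ``replaces $n_i$ by $n_i\pm1$ \ldots\ so that the balance relations defining $T^{bal}_{\alpha}$ are preserved''. But a balanced ensemble has all arbitrages inactive by definition, so a block applied to a balanced state would do nothing at all; the claim is self-defeating. In fact the algorithm never sits at a balanced state until the very last step: the intermediate ensembles live on the $12$-vertex discrepancy set $\bD(a)$ of \eqref{12disc}, and each block of length~$3$ walks three edges of that graph while loading a single $\pm1$ increment on one of $n_1,n_2,n_3$. Your Step~3 compounds the problem: the chain~\eqref{alg0e} is applied left to right, so the $n_3$- and $n_2$-blocks act \emph{first} on the unbalanced $\bar\calR_{\alpha}$, then $\calA^{(15)}\calA^{(18)}$, then the $n_1$-blocks, and $\calA^{(5)}$ is the terminal step that finally reaches balance. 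The three singletons are not a contiguous ``priming'' segment and do not produce a balanced ensemble at any intermediate stage.

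The paper's proof avoids these pitfalls by changing coordinates to $(\log r_{\$\textrm{\euro}},\log r_{\$\pounds},\log r_{\$\textrm{\yen}},\,d_{\textrm{\euro}\pounds},d_{\textrm{\euro}\textrm{\yen}},d_{\pounds\textrm{\yen}})$ via the matrices of Section~\ref{S-scs}, so that the discrepancy triple evolves by the $3\times3$ matrices $G^{(i)}$ and the first three coordinates receive increments $H^{(i)}$. For the initial data~\eqref{dist} the discrepancy stays in the finite orbit $\bD(a)$ (Lemma~\ref{specialC}), and one then \emph{reads off} the validity of~\eqref{alg0e} from the transition graph of Fig.~\ref{GraphLeha1_24} together with the increment labels of Fig.~\ref{GraphLeha3}: each $\bA^{(i)}_{\pm}$ is a three-edge walk on that graph whose net increment is $\pm1$ on the $i$th coordinate, and the three isolated arbitrages steer the discrepancy between the sub-cycles used by the different blocks and into the zero vertex $\calD_0$ at the end. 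A direct induction along your lines \emph{can} be made to work --- the paper even notes this --- but the inductive invariant must be ``after each block the discrepancy sits at a specified vertex of $\bD(a)$ and the first three exponents have advanced by the claimed amount'', not ``balance is preserved''.
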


The legitimacy of this algorithm may be verified by induction.
However a simple geometric proof is much more instructive. This
proof will be given later on. This chain is not always the
shortest: for instance, in the case $n_{1}= n_{2}=n_{3}=0$ the
shortest chain ${\bA}$ is of length one: ${\bA}=\calA_{7}$.

\section{General case}\label{S-gencase}
\subsection{Direct Generalisation}\label{directSS}
We begin with the following comment. The ensemble \eqref{dist} is
the first item in the list
\begin{equation}\label{list}
\begin{split}
\bar\calR^{1}_{\alpha}&=\left(\alpha \cdot \bar{r}_{\textrm{\$}\textrm{\euro}},\
\bar{r}_{\textrm{\$}\textrm{\pounds}},\ \bar{r}_{\textrm{\$}\textrm{\yen}},\ \bar{r}_{\textrm{\euro}\textrm{\pounds}},\ \bar{r}_{\textrm{\euro}\textrm{\yen}},\ \bar{r}_{\textrm{\pounds}\textrm{\yen}}\right),\\
\bar\calR^{2}_{\alpha}&=\left(\bar{r}_{\textrm{\$}\textrm{\euro}},\
\alpha\cdot\bar{r}_{\textrm{\$}\textrm{\pounds}},\ \bar{r}_{\textrm{\$}\textrm{\yen}},\ \bar{r}_{\textrm{\euro}\textrm{\pounds}},\ \bar{r}_{\textrm{\euro}\textrm{\yen}},\ \bar{r}_{\textrm{\pounds}\textrm{\yen}}\right),\\
\bar\calR^{3}_{\alpha}&=\left(\bar{r}_{\textrm{\$}\textrm{\euro}},\
\bar{r}_{\textrm{\$}\textrm{\pounds}},\ \alpha\cdot\bar{r}_{\textrm{\$}\textrm{\yen}},\ \bar{r}_{\textrm{\euro}\textrm{\pounds}},\ \bar{r}_{\textrm{\euro}\textrm{\yen}},\ \bar{r}_{\textrm{\pounds}\textrm{\yen}}\right),\\
\bar\calR^{4}_{\alpha}&=\left(\bar{r}_{\textrm{\$}\textrm{\euro}},\
\bar{r}_{\textrm{\$}\textrm{\pounds}},\ \bar{r}_{\textrm{\$}\textrm{\yen}},\ \alpha\cdot\bar{r}_{\textrm{\euro}\textrm{\pounds}},\ \bar{r}_{\textrm{\euro}\textrm{\yen}},\ \bar{r}_{\textrm{\pounds}\textrm{\yen}}\right),\\
\bar\calR^{5}_{\alpha}&=\left(\bar{r}_{\textrm{\$}\textrm{\euro}},\
\bar{r}_{\textrm{\$}\textrm{\pounds}},\ \bar{r}_{\textrm{\$}\textrm{\yen}},\ \bar{r}_{\textrm{\euro}\textrm{\pounds}},\ \alpha\cdot\bar{r}_{\textrm{\euro}\textrm{\yen}},\ \bar{r}_{\textrm{\pounds}\textrm{\yen}}\right),\\
\bar\calR^{6}_{\alpha}&=\left(\bar{r}_{\textrm{\$}\textrm{\euro}},\
\bar{r}_{\textrm{\$}\textrm{\pounds}},\ \bar{r}_{\textrm{\$}\textrm{\yen}},\ \bar{r}_{\textrm{\euro}\textrm{\pounds}},\ \bar{r}_{\textrm{\euro}\textrm{\yen}},\ \alpha\cdot\bar{r}_{\textrm{\pounds}\textrm{\yen}}
\right).
\end{split}
\end{equation}
A natural ``relabelling'' procedure confirms that the main results
described in Section~\ref{mrSS} hold without any changes for first
initial ensemble from the list \eqref{list}. In particular, Theorem
\ref{arbH} implies
\begin{corollary}\label{arbAH}
The equality $S^{bal}(\bar\calR^{i}_{\alpha})=
T^{bal}_{\alpha}(\bar\calR)$ holds for $i=2,3$.  Moreover each
balanced ensemble \eqref{prod} may be achieved via a chain of
arbitrage operations no longer than $N^{i}(n_{1},n_{2},n_{3})$,
where
\begin{align*}
N^{2}(n_{1},n_{2},n_{3})&=3(|n_{1}|+|n_{2}-1|+|n_{3}|)+3,\\
N^{3}(n_{1},n_{2},n_{3})&=3(|n_{1}|+|n_{2}|+|n_{3}-1|)+3,\\
\end{align*}
\end{corollary}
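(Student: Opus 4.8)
The plan is to deduce the corollary from Theorem~\ref{arbH} by exploiting the invariance of the whole construction under relabelling of the four currencies. The structural observation is that the $24$ arbitrages of Table~\ref{tab1} are exactly the operators $\calA_{XYZ}$, one for each ordered triple of distinct currencies $X,Y,Z$: trader $X$ replaces his rate $r_{XY}$ by $r_{XZ}\,r_{ZY}$ precisely when $r_{XZ}\,r_{ZY}>r_{XY}$. Hence any permutation $\pi$ of the four currencies $\textrm{\$},\textrm{\euro},\pounds,\textrm{\yen}$ sends $\calA_{XYZ}$ to $\calA_{\pi X\,\pi Y\,\pi Z}$ and so permutes the $24$ arbitrages among themselves. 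Relabelling the currencies also induces a bijection $\Phi_{\pi}$ on ensembles of principal exchange rates: one records, in the new currency names, the rate from the new dollar to each of the other three currencies together with the three new cross rates, inserting a reciprocal wherever a principal rate of the renamed scheme is the reciprocal of an old one. By construction $\Phi_{\pi}$ intertwines the arbitrage dynamics, $\Phi_{\pi}(\calR\calA)=\Phi_{\pi}(\calR)\,\calA'$ with $\calA'$ the image of $\calA$ under the induced permutation. Therefore $\Phi_{\pi}$ carries arbitrage chains to arbitrage chains of the same length, preserves which arbitrages are active, and --- balancedness being symmetric in the currencies, by Proposition~\ref{balp} --- preserves balancedness; consequently $\Phi_{\pi}$ maps $S(\cdot)$, $S^{bal}(\cdot)$, $T_{\alpha}(\cdot)$ and $T^{bal}_{\alpha}(\cdot)$ bijectively onto the corresponding objects of the renamed model.

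For $i=2$ I would take $\pi$ to be the transposition swapping the euro with the sterling (a transposition, so $\Phi_{\pi}$ is an involution). Reading sextuples in the order of \eqref{prod}, $\Phi_{\pi}$ then exchanges the first two entries, replaces the fourth entry by its reciprocal, and exchanges the last two entries. Thus $\Phi_{\pi}$ sends $\bar\calR^{2}_{\alpha}$ (the factor $\alpha$ sitting on $\bar r_{\textrm{\$}\textrm{\pounds}}$) to $\bar\calR'_{\alpha}$, where now the factor $\alpha$ sits on the \emph{first} entry of the balanced ensemble $\bar\calR'=\Phi_{\pi}(\bar\calR)$; and it sends the balanced sextuple \eqref{prod} with free exponents $(n_{1},n_{2},n_{3})$ to the balanced sextuple over $\bar\calR'$ with free exponents $(n_{2},n_{1},n_{3})$ --- one checks that the induced substitution of the integers $n_{1},\dots,n_{6}$ maps the relations $n_{4}=n_{2}-n_{1}$, $n_{5}=n_{3}-n_{1}$, $n_{6}=n_{3}-n_{2}$ defining $T^{bal}_{\alpha}$ onto themselves. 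Applying Theorem~\ref{arbH} in the renamed model (legitimate, since $\bar\calR'$ is balanced and $\alpha>0$, $\alpha\neq1$), every such target is reached from $\bar\calR'_{\alpha}$ by a chain of length at most $N(n_{2},n_{1},n_{3})=3(|n_{2}-1|+|n_{1}|+|n_{3}|)+3$ in the notation of \eqref{estN}; pulling this chain back through $\Phi_{\pi}^{-1}$ gives a chain of the same length reaching the original target from $\bar\calR^{2}_{\alpha}$. Since $3(|n_{2}-1|+|n_{1}|+|n_{3}|)+3=N^{2}(n_{1},n_{2},n_{3})$, and since the inclusion $S^{bal}(\bar\calR^{2}_{\alpha})\subset T^{bal}_{\alpha}(\bar\calR)$ follows verbatim from the proof of Proposition~\ref{rep1P} (because $\bar\calR^{2}_{\alpha}$ is itself of the form \eqref{prod}, with one exponent equal to $1$ and the rest $0$), this yields $S^{bal}(\bar\calR^{2}_{\alpha})=T^{bal}_{\alpha}(\bar\calR)$ together with the stated bound.

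The case $i=3$ is handled identically, with $\pi$ the transposition swapping the euro with the yen: then $\Phi_{\pi}$ exchanges entries $1$ and $3$, keeps entry $2$, and on the last three entries reverses their order and replaces each by its reciprocal. Tracing this through, $\bar\calR^{3}_{\alpha}$ goes to $\bar\calR'_{\alpha}$ with the factor $\alpha$ on the first entry of $\bar\calR'=\Phi_{\pi}(\bar\calR)$, the free exponents of a balanced sextuple \eqref{prod} become $(n_{3},n_{2},n_{1})$, and Theorem~\ref{arbH} furnishes chains of length at most $N(n_{3},n_{2},n_{1})=3(|n_{3}-1|+|n_{2}|+|n_{1}|)+3=N^{3}(n_{1},n_{2},n_{3})$; the inclusion $\subset$ is again the argument of Proposition~\ref{rep1P}.

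I expect the real work to be bookkeeping rather than ideas: one must check that $\Phi_{\pi}$ genuinely intertwines \emph{each} of the $24$ arbitrages with another one --- i.e.\ that the update rule \emph{and} the activation inequality in Table~\ref{tab1} transform consistently under renaming, including the reciprocals introduced on the permuted cross-rate coordinates --- and that the induced substitutions of $n_{1},\dots,n_{6}$ act as claimed on the constraint set of $T^{bal}_{\alpha}$. I would organise this by recording, once and for all, the permutation of $\{1,\dots,24\}$ induced by each of the two transpositions, after which the length formulas $N^{2}$ and $N^{3}$ can be read directly off \eqref{estN}. (The remaining cases $i=4,5,6$ of \eqref{list} can be treated in the same spirit, but via a relabelling that moves the dollar; this introduces sign changes in the exponents and hence a less transparent length estimate, which is why they are not included here.)
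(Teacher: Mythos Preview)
Your proposal is correct and follows essentially the same route as the paper: the corollary is stated immediately after the sentence ``A natural `relabelling' procedure confirms that the main results described in Section~\ref{mrSS} hold without any changes,'' and is asserted to follow from Theorem~\ref{arbH}; your argument makes this relabelling explicit (the transpositions $\textrm{\euro}\leftrightarrow\pounds$ and $\textrm{\euro}\leftrightarrow\textrm{\yen}$) and carries through the bookkeeping that the paper leaves implicit. The paper additionally records explicit chains ${\bA}_{2}(n_{1},n_{2},n_{3})$ and ${\bA}_{3}(n_{1},n_{2},n_{3})$, which are exactly what one obtains by pushing the chain \eqref{alg0e} through your $\Phi_{\pi}$.
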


To describe the corresponding algorithms we introduce the auxiliary
chains {\small\begin{alignat*}{3}
{\tbA}_{+}^{(1)}&=\calA^{(1)}\calA^{(21)}\calA^{(16)},\quad&
{\tbA}_{+}^{(2)}&=\calA^{(13)}\calA^{(23)}\calA^{(16)},\quad&
{\tbA}_{+}^{(3)}&=\calA^{(24)}\calA^{(12)}\calA^{(19)}.\\
{\tbA}_{-}^{(1)}&=\calA^{(9)}\calA^{(11)}\calA^{(8)},\quad&
{\tbA}_{-}^{(2)}&=\calA^{(9)}\calA^{(34)}\calA^{(4)},\quad&
{\tbA}_{-}^{(3)}&=\calA^{(6)}\calA^{(11)}\calA^{(17)};\\
{\ttbA}_{+}^{(1)}&=\calA^{(18)}\calA^{(12)}\calA^{(5)},\quad&
{\ttbA}_{+}^{(2)}&=\calA^{(23)}\calA^{(16)}\calA^{(13)},\quad&
{\ttbA}_{+}^{(3)}&=\calA^{(18)}\calA^{(12)}\calA^{(5)}.\\
{\ttbA}_{-}^{(1)}&=\calA^{(20)}\calA^{(21)}\calA^{(23)},\quad&
{\ttbA}_{-}^{(2)}&=\calA^{(4)}\calA^{(9)}\calA^{(24)},\quad&
{\ttbA}_{-}^{(3)}&=\calA^{(20)}\calA^{(21)}\calA^{(23)}.
\end{alignat*}}

The equation \eqref{alg0e} can be modified to the form
{\small\begin{alignat*}{1}
 {\bA}_{2}(n_{1},n_{2},n_{3}) &=
{\left({\tbA}_{\sigma( n_{1})}^{(1)}\right)}^{|n_{1}|}
\calA^{(24)}\calA^{(12)} {\left({\tbA}_{\sigma
(n_{3})}^{(3)}\right)}^{|n_{3}|} {\left({\tbA}_{\sigma
(n_{2})}^{(2)}\right)}^{|n_{2}-1|} \calA^{(1)}
\end{alignat*}}
for $i=2$, and to the form
{\small\begin{alignat*}{1}
 {\bA}_{3}(n_{1},n_{2},n_{3}) 
 &= {\left({\ttbA}_{\sigma (n_{2})}^{(2)}\right)}^{|n_{2}|}
 {\left({\ttbA}_{\sigma (n_{1})}^{(1)}\right)}^{|n_{1}|}
 \calA^{(12)}\calA^{(10)}
 {\left({\ttbA}_{\sigma (n_{3})}^{(3)}\right)}^{|n_{3}-1|}
 \calA^{(3)}
\end{alignat*}}
for $i=3$.

Let us turn to the initial ensembles $\bar\calR^{i}_{\alpha}$,
$i=4,5,6$.

\begin{proposition}\label{arbA2H}
The equality $ S^{bal}(\bar\calR^{i}_{\alpha})=
T^{bal}_{\alpha}(\bar\calR)$ holds for $i=4,5,6$.  Moreover each
balanced ensemble \eqref{prod} may be achieved via a chain of
arbitrage no longer than $N^{i}(n_{1},n_{2},n_{3})$, where
\[
N^{4,5,6}(n_{1},n_{2},n_{3})=3(|n_{1}|+|n_{2}|+|n_{3}|)+4.
\]
\end{proposition}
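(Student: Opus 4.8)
The plan is to reduce the cases $i=4,5,6$ to the already-established case $i=1$ (Theorem~\ref{arbH} together with the explicit algorithm in Proposition~\ref{algP}), using the observation that each ensemble $\bar\calR^{i}_{\alpha}$ for $i=4,5,6$ can be transformed into an ensemble of the form $\bar\calR^{j}_{\alpha'}$ with $j\in\{1,2,3\}$ by a short prefix of arbitrages. Concretely, start by exhibiting, for each $i\in\{4,5,6\}$, a single arbitrage $\calA$ from Table~\ref{tab1} that is active on $\bar\calR^{i}_{\alpha}$ and whose action ``moves'' the perturbation $\alpha$ from a cross-rate slot (slots $4$, $5$ or $6$ of \eqref{prod}) onto a dollar-rate slot (slots $1$, $2$ or $3$). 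For instance, applied to $\bar\calR^{4}_{\alpha}=(\bar r_{\$\euro},\bar r_{\$\pounds},\bar r_{\$\yen},\alpha\bar r_{\euro\pounds},\bar r_{\euro\yen},\bar r_{\pounds\yen})$, the arbitrage $\calA^{(1)}=\calA_{\$\euro\pounds}$ resets $r_{\$\euro}$ to $r_{\$\pounds}/(\alpha\bar r_{\euro\pounds})=\alpha^{-1}\bar r_{\$\euro}$ using the balance relation \eqref{invss}; what results is the ensemble $(\alpha^{-1}\bar r_{\$\euro},\bar r_{\$\pounds},\bar r_{\$\yen},\alpha\bar r_{\euro\pounds},\bar r_{\euro\yen},\bar r_{\pounds\yen})$, which is \emph{not} yet of the form \eqref{dist}, but one more arbitrage chosen to repair slot $4$ does the job. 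The bookkeeping is identical to the invariance computation already carried out in the proof of Proposition~\ref{rep1P}: every arbitrage sends a sextuple of the form \eqref{prod} to another such sextuple, so I only need to track the six exponents $n_1,\dots,n_6$ through the prefix.

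The second step is then purely arithmetic: having produced a prefix $\bG^{(i)}$ of length at most some small constant $c_i$ taking $\bar\calR^{i}_{\alpha}$ to $\bar\calR_{\beta}$ (or to $\bar\calR^{2}_{\beta}$, $\bar\calR^{3}_{\beta}$) where $\beta$ is $\alpha$ or $\alpha^{-1}$, I invoke Proposition~\ref{algP} (respectively Corollary~\ref{arbAH} and the chains $\bA_2,\bA_3$) to append, after $\bG^{(i)}$, the chain $\bA(m_1,m_2,m_3)$ for suitably shifted exponents $m_j$ (the shift absorbs the prefix's effect on slots $1,2,3$ and the possible sign flip $\beta=\alpha^{-1}$). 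Composing $\bG^{(i)}$ with that chain yields a chain achieving the target \eqref{prod}, and its length is at most $c_i + 3(|m_1-1|+|m_2|+|m_3|)+3$; since the shifts change each $|n_j|$ by at most $O(1)$, and the prefixes for $i=4,5,6$ can all be taken of length one — a single arbitrage suffices to push the perturbed cross-rate back onto a dollar slot — this collapses to exactly $3(|n_1|+|n_2|+|n_3|)+4$ after absorbing the ``$-1$'' in slot~$1$ appropriately (the extra ``$+4$'' versus ``$+3$'' in \eqref{estN} is precisely the cost of that one-arbitrage prefix together with the adjusted exponent of $\bA^{(1)}_\pm$). Finally, the set equality $S^{bal}(\bar\calR^{i}_{\alpha})=T^{bal}_{\alpha}(\bar\calR)$ follows: the inclusion $\subseteq$ is Proposition~\ref{rep1P} applied after checking $\bar\calR^{i}_{\alpha}\in T_\alpha(\bar\calR)$ (immediate from \eqref{prod} with one $n_j=1$ or, for the cross slots, using the balance relations to rewrite), and $\supseteq$ is exactly the explicit construction just described.

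The main obstacle I anticipate is purely combinatorial rather than conceptual: choosing, for each of $i=4,5,6$, the correct one- (or two-) arbitrage prefix so that (a) every arbitrage in it is genuinely \emph{active} at the moment it is applied — one must verify the activation inequalities in Table~\ref{tab1} hold for the intermediate ensembles, which depend on whether $\alpha>1$ or $\alpha<1$ — and (b) the sign conventions line up so that the appended main algorithm sees the perturbation on the expected slot with the expected orientation. This is the same kind of case-checking ``by inspection'' already used for Propositions~\ref{balp}, \ref{32} and the verification of $\bA(n_1,n_2,n_3)$; I expect the cleanest route is to handle $\alpha>1$ explicitly and then remark that $\alpha<1$ follows by reversing all arbitrages (replacing each $\calA^{(k)}$ by its ``mirror'' in the pairing $1\!\leftrightarrow\!7$, $2\!\leftrightarrow\!8$, etc. visible in Table~\ref{tab1}), so that no genuinely new computation is needed for the second sign.
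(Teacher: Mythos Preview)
Your approach is essentially the same as the paper's: prepend a single arbitrage to transform $\bar\calR^{i}_{\alpha}$ (for $i=4,5,6$) into an ensemble with the same discrepancy triple as one of the already-handled cases $\bar\calR^{j}_{\alpha}$, $j\in\{1,2,3\}$, then invoke the corresponding algorithm with a shifted argument. The paper states this explicitly just before the proposition, writing for instance $\bA_{4}(n_1,n_2,n_3)=\calA^{(12)}\,\bA(n_1+1,n_2,n_3)$, and the ``$+4$ versus $+3$'' arithmetic is exactly the one you describe.

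One small correction worth making to your write-up: your trial prefix $\calA^{(1)}$ for $i=4$ does \emph{not} land you in the discrepancy state of $\bar\calR^{1}_{\alpha}$ (it sends $\calD=a(1,0,0)$ to $a(0,-1,0)$, cf.\ the matrix $G^{(1)}$), which is why you found yourself needing ``one more arbitrage''. The paper instead applies $\calA^{(12)}$ (strong arbitrage $\Hat\calA^{(10)}$), which sends $a(1,0,0)$ directly to $a(1,1,0)=\calD(\bar\calR^{1}_{\alpha})$ while leaving $\calR'$ untouched; the required shift $n_1\mapsto n_1+1$ then comes purely from the difference between the starting $\calR'$ of this state and that of $\bar\calR^{1}_{\alpha}$. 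So the one-arbitrage prefix does work, but the right prefix is the one that moves you \emph{within} the discrepancy polyhedron $\bD(a)$ of Section~\ref{S-discrep} to the vertex $\calD_{10}$, not one that alters a dollar slot.
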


The corresponding chains ${\bA}_{4}(n_{1},n_{2},n_{3})$, $i=4,5,6$,
may be defined by the following equations:
\begin{multline*}
{\bA}_{4}(n_{1},n_{2},n_{3})=\calA^{(12)}{\bA}(n_{1}+1,n_{2},n_{3})\\
 =\calA^{(12)}{\left({\bA}_{\sigma( n_{3})}^{(3)}\right)}^{|n_{3}|}
   {\left({\bA}_{\sigma (n_{2})}^{(2)}\right)}^{|n_{2}|}
 \calA^{(15)}\calA^{(18)}
  {\left({\bA}_{\sigma( n_{1})}^{(1)}\right)}^{|n_{1}|}
 \calA^{(5)},
 \end{multline*}
\begin{multline*}
{\bA}_{5}(n_{1},n_{2},n_{3})=\calA^{(16)}{\bA}(n_{1}+1,n_{2},n_{3})\\
 =\calA^{(16)}{\left({\bA}_{\sigma (n_{3})}^{(3)}\right)}^{|n_{3}|}
  {\left({\bA}_{\sigma (n_{2})}^{(2)}\right)}^{|n_{2}|}
 \calA^{(15)}\calA^{(18)}
  {\left({\bA}_{\sigma (n_{1})}^{(1)}\right)}^{|n_{1}|}
 \calA^{(3)},
 \end{multline*}
 \begin{multline*}
{\bA}_{6}(n_{1},n_{2},n_{3})=\calA^{(16)}{\bA}_{2}(n_{1}+1,n_{2},n_{3})\\
 =\calA^{(10)} {\left({\tbA}_{\sigma (n_{1})}^{(1)}\right)}^{|n_{1}|}
 \calA^{(24)}\calA^{(12)}
 {\left({\tbA}_{\sigma (n_{3})}^{(3)}\right)}^{|n_{3}|}
 {\left({\tbA}_{\sigma (n_{2})}^{(2)}\right)}^{|n_{2}-1|}
 \calA^{(1)}.
 \end{multline*}

\begin{proof}
This assertion may be proved analogously to Theorem~\ref{arbH}.
\end{proof}

\subsection{Arbitrage Discrepancies\label{DiscSS}}
To formulate further generalisations we need an additional notion.
To each ensemble $\calR= \left(r_{\textrm{\$}\textrm{\euro}},
r_{\textrm{\$}\textrm{\pounds}}, r_{\textrm{\$}\textrm{\yen}},
r_{\textrm{\euro}\textrm{\pounds}},
r_{\textrm{\euro}\textrm{\yen}},
r_{\textrm{\pounds}\textrm{\yen}}\right)$
we attach an {\em
arbitrage discrepancies ensemble,} using the relationships for
balanced principal exchange rates given in \eqref{invss} above
\[
\calD(\calR)=\left(d_{\textrm{\euro}\textrm{\pounds}}(\calR),d_{\textrm{\euro}\textrm{\yen}}(\calR),d_{\textrm{\pounds}\textrm{\yen}}(\calR)\right)
\]
as follows:
\begin{equation}\label{discrep}
\begin{split}
d_{\textrm{\euro}\textrm{\pounds}}(\calR)&=
\log r_{\textrm{\euro}\textrm{\pounds}}-\log r_{\textrm{\$}\textrm{\pounds}}+\log r_{\textrm{\$}\textrm{\euro}},\\
d_{\textrm{\euro}\textrm{\yen}}(\calR)&=
\log r_{\textrm{\euro}\textrm{\yen}}-\log r_{\textrm{\$}\textrm{\yen}}+\log r_{\textrm{\$}\textrm{\euro}},\\
d_{\textrm{\pounds}\textrm{\yen}}(\calR)&= \log
r_{\textrm{\pounds}\textrm{\yen}}-\log
r_{\textrm{\$}\textrm{\yen}}+\log r_{\textrm{\$}\textrm{\pounds}}.
\end{split}
\end{equation}
For instance
\begin{equation}\label{fori}
\begin{alignedat}{3}
\calD(\bar\calR^{1}_{\alpha})&=a(1,1,0),\quad&
\calD(\bar\calR^{2}_{\alpha})&=a(-1,0,1),\quad&
\calD(\bar\calR^{3}_{\alpha})&=a(0,-1,-1),\\
\calD(\bar\calR^{4}_{\alpha})&=a(1,0,0),\quad&
\calD(\bar\calR^{5}_{\alpha})&=a(0,1,0),\quad&
\calD(\bar\calR^{6}_{\alpha})&=a(0,0,1),
\end{alignedat}
\end{equation}
where $a=\log \alpha$.

\begin{proposition}\label{disc0P}
The ensemble $\calR$ is balanced, if and only if $\calD(\calR)=0$.
\end{proposition}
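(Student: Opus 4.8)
The plan is to obtain this as an immediate logarithmic reformulation of Proposition~\ref{balp}. First I would recall that, by Proposition~\ref{balp}, the ensemble $\calR$ is balanced precisely when
\[
r_{\textrm{\euro}\textrm{\pounds}} = \frac{r_{\textrm{\$}\textrm{\pounds}}}{r_{\textrm{\$}\textrm{\euro}}},\quad
r_{\textrm{\euro}\textrm{\yen}} = \frac{r_{\textrm{\$}\textrm{\yen}}}{r_{\textrm{\$}\textrm{\euro}}},\quad
r_{\textrm{\pounds}\textrm{\yen}} = \frac{r_{\textrm{\$}\textrm{\yen}}}{r_{\textrm{\$}\textrm{\pounds}}}.
\]
Since every exchange rate in the ensemble is strictly positive, the logarithm is a well-defined, strictly increasing bijection of $(0,\infty)$ onto $\mathbb{R}$; hence each of the three displayed identities is equivalent to its additive (logarithmic) form, e.g.\ $\log r_{\textrm{\euro}\textrm{\pounds}} - \log r_{\textrm{\$}\textrm{\pounds}} + \log r_{\textrm{\$}\textrm{\euro}} = 0$, and similarly for the remaining two.

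Next I would observe that, by the definitions in \eqref{discrep}, the left-hand sides of these three logarithmic equations are precisely $d_{\textrm{\euro}\textrm{\pounds}}(\calR)$, $d_{\textrm{\euro}\textrm{\yen}}(\calR)$ and $d_{\textrm{\pounds}\textrm{\yen}}(\calR)$. Therefore the three balance conditions of Proposition~\ref{balp} hold simultaneously if and only if $d_{\textrm{\euro}\textrm{\pounds}}(\calR)=d_{\textrm{\euro}\textrm{\yen}}(\calR)=d_{\textrm{\pounds}\textrm{\yen}}(\calR)=0$, that is, if and only if $\calD(\calR)=0$; reading the chain of equivalences in both directions gives the two implications of the statement.

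There is no genuine obstacle here: the only point requiring any care is the strict positivity of the rates, which guarantees that the logarithms are finite real numbers and that the passage between the multiplicative and additive forms is an exact equivalence rather than merely an implication. Modulo that remark, the proposition is a direct restatement of Proposition~\ref{balp}.
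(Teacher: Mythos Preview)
Your argument is correct and is exactly the approach taken in the paper: the authors simply note that the claim follows from Proposition~\ref{balp} together with the defining equations~\eqref{discrep}, which is precisely the logarithmic reformulation you spell out.
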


\begin{proof}
Follows from  Proposition~\ref{balp} and equations \eqref{discrep}.
\end{proof}

\subsection{Case A\label{caseA}}
The case where two of the discrepancies \eqref{discrep} are equal
to zero was implicitly considered in Section~\ref{directSS}: see
the second line in \eqref{fori} and Proposition~\ref{arbA2H}.

\subsection{Case B\label{caseB}}
Consider now the case when one of the discrepancies in
\eqref{discrep} is equal to zero, while two others are not. We will
be particularly interested in the situation where two nonzero
discrepancies are different. This situation may have emerged, for
instance, as follows. Let us suppose that the underlying balanced
rates \eqref{distbar} had been in operation up to a certain
reference time moment $0$. At this moment the Euro trader has
decided to change two of  three his rates, namely
$r_{\textrm{\euro}\textrm{\pounds}}$ and
$r_{\textrm{\euro}\textrm{\yen}}$, by different factors $\alpha$
and $\beta$. Then at this moment the two discrepancies would
acquire different non-zero values, while the third discrepancy
remains equal to zero.

Suppose, for example that $d_{\textrm{\pounds}\textrm{\yen}}=0$,
while $d_{\textrm{\euro}\textrm{\pounds}},
d_{\textrm{\euro}\textrm{\yen}}\not=0$. We introduce the ratio
\begin{equation}\label{ratio}
q(\calR)=\frac{d_{\textrm{\euro}\textrm{\yen}}(\calR)}{d_{\textrm{\euro}\textrm{\pounds}}(\calR)}.
\end{equation}
\begin{theorem}\label{irratBC}
Let the number \eqref{ratio} be irrational. Then set
$S^{bal}(\calR)$ is dense in the totality $T^{bal}$ of all possible
balanced ensembles.
\end{theorem}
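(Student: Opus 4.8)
The plan is to work entirely in the logarithmic "discrepancy" coordinates $\calD(\calR)=(d_{\textrm{\euro}\textrm{\pounds}},d_{\textrm{\euro}\textrm{\yen}},d_{\textrm{\pounds}\textrm{\yen}})\in\bbR^3$ introduced in Section~\ref{DiscSS}, so that balanced ensembles correspond to $\calD=0$ by Proposition~\ref{disc0P}. Since the three "scaling exponents" $n_1,n_2,n_3$ attached to the dollar rates in \eqref{prod} can be adjusted freely and independently by the basic algorithm of Proposition~\ref{algP} (once the discrepancy vector is returned to $0$), the set $T^{bal}$ of all attainable balanced ensembles is parametrized by $(n_1,n_2,n_3)\in\bbR^3$ in a way that $S^{bal}(\calR)$ will be dense in $T^{bal}$ precisely when the set of discrepancy-restoring arbitrage chains produces a dense set of shift vectors. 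So the heart of the matter is: starting from a discrepancy vector of the form $c\,(1,q,0)$ (after rescaling, using $d_{\textrm{\pounds}\textrm{\yen}}=0$ and $q=d_{\textrm{\euro}\textrm{\yen}}/d_{\textrm{\euro}\textrm{\pounds}}$ irrational), show that suitable arbitrage chains can steer the ensemble arbitrarily close to \emph{every} balanced ensemble.

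**The key mechanism.** First I would record how an individual arbitrage acts on $\calD$. Each of the 24 arbitrages from Table~\ref{tab1}, when active, resets one principal rate to its "fair" value, which in $\calD$-coordinates means setting one of the three discrepancy components to $0$ (and shifting another component by the old value of the one just zeroed, through the reciprocal/cross-rate bookkeeping) — in other words, each active arbitrage is an integer-affine contraction of $\bbR^3$ onto a coordinate hyperplane, composed with a unipotent shear. Iterating a well-chosen \emph{periodic} chain (in the spirit of the $24$-periodic chain ${\bA}_*$ of Proposition~\ref{32}, but now with a genuinely different ratio between two discrepancies) produces, on the subspace $\{d_{\textrm{\pounds}\textrm{\yen}}=0\}$, a linear return map whose iterates generate translations by integer combinations of $d_{\textrm{\euro}\textrm{\pounds}}$ and $d_{\textrm{\euro}\textrm{\yen}}$. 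Because $q$ is irrational, the additive subgroup $\{m\,d_{\textrm{\euro}\textrm{\pounds}}+n\,d_{\textrm{\euro}\textrm{\yen}}:m,n\in\bbZ\}=d_{\textrm{\euro}\textrm{\pounds}}\cdot(\bbZ+q\bbZ)$ is dense in $\bbR$ by the classical Weyl/Kronecker equidistribution argument. This density of achievable shifts in the discrepancy directions is what drives everything.

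**Assembling the proof.** With the density of attainable discrepancy shifts in hand, I would proceed in three steps. (i) Given a target balanced ensemble $\calR^{*}\in T^{bal}$, reduce to showing one can reach an ensemble whose discrepancy vector is $0$ \emph{and} whose dollar-rate exponents $(n_1,n_2,n_3)$ are as close as desired to the target values; by Proposition~\ref{algP} and Proposition~\ref{rep1P}, once $\calD$ is exactly $0$ the exponents are integers and the basic algorithm realizes any integer triple, so it suffices to hit a \emph{fine enough} approximation of the target exponents before the final clean-up. (ii) Using the dense-shift mechanism above, first move $\calD(\calR)$ to within $\varepsilon$ of $0$ while simultaneously tracking how the dollar exponents drift; a careful but routine bookkeeping of the shears shows the exponent drift is controlled by the (small) discrepancy values, so a small residual discrepancy means the exponents are near-integers close to the target. (iii) Apply one more short corrective chain — either the Case-A construction of Proposition~\ref{arbA2H} or a direct three-step patch — to kill the remaining tiny discrepancy exactly and land on a genuine balanced ensemble within $O(\varepsilon)$ of $\calR^{*}$.

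**The main obstacle.** The principal difficulty is step (ii): I must exhibit an \emph{explicit} periodic (or eventually-periodic) arbitrage chain whose return map on the hyperplane $\{d_{\textrm{\pounds}\textrm{\yen}}=0\}$ realizes the translation by $d_{\textrm{\euro}\textrm{\pounds}}$ (and, by a mirror-image chain, by $d_{\textrm{\euro}\textrm{\yen}}$), all while keeping every intermediate discrepancy inside the activation conditions of Table~\ref{tab1} so that the intended arbitrages actually fire — the same subtlety that makes Proposition~\ref{32} a nontrivial "by inspection" check. Once such generating chains are pinned down, the irrationality of $q$ converts their $\bbZ$-span into a dense set via Kronecker's theorem, and the rest is the controlled approximation argument sketched above. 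I expect the verification that the activation conditions are respected along the whole orbit — rather than the density argument itself — to be where the real work lies.
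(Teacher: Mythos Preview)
Your proposal contains a genuine conceptual gap in step~(ii). You plan to ``move $\calD(\calR)$ to within $\varepsilon$ of $0$'' and then ``kill the remaining tiny discrepancy exactly''. This is impossible: starting from $\calD=(a,b,0)$, the discrepancy vector under \emph{any} arbitrage chain stays in a \emph{finite} set (Corollary~\ref{Dfini}; concretely, Section~\ref{knotsSS} and Corollary~\ref{Dfini2} show it remains in ${\bD}(a,b)\cup{\bD}(a)\cup{\bD}(b)\cup{\bD}(a-b)\cup\{0\}$, a set of $61$ elements). Hence the discrepancy is either exactly $0$ or bounded away from $0$ by $\min(|a|,|b|,|a-b|)$; there is no ``tiny residual discrepancy'' to patch, and the approximation scheme you describe cannot get off the ground.

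The paper places the density in the other half of the state. It splits the six logarithmic coordinates into the discrepancy triple $\calD$ and the dollar-rate triple $\calR'=(\log r_{\textrm{\$\euro}},\log r_{\textrm{\$\pounds}},\log r_{\textrm{\$\yen}})$ (Section~\ref{S-scs}); the induced dynamics on $\calD$ is a finite-state automaton, while each transition loads an explicit ``cargo'' increment onto $\calR'$ (Proposition~\ref{incrP}). By organising the $24$ states of ${\bD}(a,b)$ into six knots (Section~\ref{knotsSS}) and exhibiting explicit cycles in the knot graph (Lemma~\ref{CyclesP}, Proposition~\ref{cargoP}), one produces arbitrage chains that return $\calD$ to its starting node while adding to each coordinate of $\calR'$ an independent integer combination of $a$ and $b$ (Corollary~\ref{abc}); a short final chain then drops $\calD$ into ${\bD}(a)$ and thence exactly to $0$ via Proposition~\ref{arbA2H}. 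Density of $S^{bal}(\calR)$ now follows from Kronecker, since $b/a$ irrational makes $\bbZ a+\bbZ b$ dense in $\bbR$. In short: the discrepancy is never approximated, only steered through finitely many exact values; all the approximation happens in $\calR'$.

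Your concern about activation conditions is real but handled differently than you anticipate. The paper bypasses it by passing to \emph{strong} arbitrages (Section~\ref{S-sarb}), each of which merges a complementary pair from Table~\ref{tab1} so that the action on $\log\calR$ is genuinely linear and always fires; the inequality check is absorbed into which of the two underlying arbitrages is actually invoked. This is what makes the clean matrix/graph analysis possible.
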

A proof of this assertion will be given later on.

Consider also the case where $q=q(\calR)$ is a rational number:
$q=m/n$ with co-prime integers $m,n$ (including the possibilities
$m=1$ or $n=1$). Denote also
\[
\label{aplhaB} \alpha=\exp (d_{\textrm{\euro}\textrm{\yen}}/n).
\]

The following assertion is a straightforward analog of Proposition
\ref{rep1P}.
\begin{proposition}\label{repBP}
The inclusions $ S(\calR)\subset T_{\alpha}(\calR) $ and
$S^{bal}(\calR)\subset T^{bal}_{\alpha}(\calR)$ hold.
\end{proposition}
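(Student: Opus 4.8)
The plan is to run the argument of Proposition~\ref{rep1P} essentially verbatim; the only genuinely new ingredient is the role of the normalisation \eqref{aplhaB} of $\alpha$. First I would record the arithmetic fact that drives everything: by hypothesis $d_{\textrm{\pounds}\textrm{\yen}}(\calR)=0$, and since $q(\calR)=d_{\textrm{\euro}\textrm{\yen}}(\calR)/d_{\textrm{\euro}\textrm{\pounds}}(\calR)=m/n$ is rational and $\alpha$ is chosen as in \eqref{aplhaB}, both $d_{\textrm{\euro}\textrm{\pounds}}(\calR)$ and $d_{\textrm{\euro}\textrm{\yen}}(\calR)$ are integer multiples of $\log\alpha$ — this is precisely what makes $\alpha$ the right ``unit''. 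From \eqref{discrep} one sees immediately that for any sextuple $\calR'=\left(\alpha^{n_{1}}r_{\textrm{\$}\textrm{\euro}},\alpha^{n_{2}}r_{\textrm{\$}\textrm{\pounds}},\alpha^{n_{3}}r_{\textrm{\$}\textrm{\yen}},\alpha^{n_{4}}r_{\textrm{\euro}\textrm{\pounds}},\alpha^{n_{5}}r_{\textrm{\euro}\textrm{\yen}},\alpha^{n_{6}}r_{\textrm{\pounds}\textrm{\yen}}\right)\in T_{\alpha}(\calR)$, each discrepancy of $\calR'$ equals the corresponding discrepancy of $\calR$ plus an integer multiple of $\log\alpha$; hence every element of $T_{\alpha}(\calR)$ has all three of its discrepancies in the set of integer multiples of $\log\alpha$.

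Next I would prove $S(\calR)\subset T_{\alpha}(\calR)$ exactly as in Proposition~\ref{rep1P}: since $\calR\in T_{\alpha}(\calR)$ (take all $n_{i}=0$), it suffices to show $T_{\alpha}(\calR)$ is invariant under each of the $24$ arbitrages of Table~\ref{tab1}, and then the inclusion follows by induction on the length of the chain. The invariance I would check uniformly rather than arbitrage-by-arbitrage: reading off \eqref{discrep}, an active arbitrage leaves five of the six principal rates untouched and multiplies the sixth by $\exp(c)$, where $c$ is the $\mathbb{Z}$-linear combination of $d_{\textrm{\euro}\textrm{\pounds}}(\calR'),d_{\textrm{\euro}\textrm{\yen}}(\calR'),d_{\textrm{\pounds}\textrm{\yen}}(\calR')$ measuring the failure of the three-currency triangle it closes (for instance $\calA^{(1)}$ sends $r_{\textrm{\$}\textrm{\euro}}\mapsto r_{\textrm{\$}\textrm{\pounds}}r_{\textrm{\euro}\textrm{\pounds}}^{-1}=r_{\textrm{\$}\textrm{\euro}}\,e^{-d_{\textrm{\euro}\textrm{\pounds}}(\calR')}$, and $\calA^{(10)}$ sends $r_{\textrm{\euro}\textrm{\pounds}}\mapsto r_{\textrm{\euro}\textrm{\yen}}r_{\textrm{\pounds}\textrm{\yen}}^{-1}=r_{\textrm{\euro}\textrm{\pounds}}\,e^{\,d_{\textrm{\euro}\textrm{\yen}}(\calR')-d_{\textrm{\euro}\textrm{\pounds}}(\calR')-d_{\textrm{\pounds}\textrm{\yen}}(\calR')}$); when inactive it does nothing. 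By the first step $c$ is an integer multiple of $\log\alpha$, so $\exp(c)$ is an integer power of $\alpha$ and the new ensemble is again of the form \eqref{prod}, i.e.\ lies in $T_{\alpha}(\calR)$.

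For the second inclusion I would simply intersect with the balanced ensembles: $S^{bal}(\calR)$, being the set of balanced members of $S(\calR)$, is contained in the set of balanced members of $T_{\alpha}(\calR)$; and by Proposition~\ref{balp} (equivalently Proposition~\ref{disc0P}, together with the discrepancy computation of the first step) this last set is precisely $T^{bal}_{\alpha}(\calR)$ — the sextuples \eqref{prod} whose exponents $n_{1},\dots,n_{6}$ make all three discrepancies vanish.

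I do not expect any real obstacle: the paper itself calls this a straightforward analog of Proposition~\ref{rep1P}. The only point that is more than routine bookkeeping is the first step — checking that rationality of $q$, the vanishing of $d_{\textrm{\pounds}\textrm{\yen}}(\calR)$, and the definition \eqref{aplhaB} of $\alpha$ together force every ensemble reachable from $\calR$ along a chain of arbitrages to keep all its discrepancies among the integer multiples of $\log\alpha$. Once that invariant is isolated, the $24$-case invariance verification and the appeal to Proposition~\ref{balp} are identical to the earlier proof.
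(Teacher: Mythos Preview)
Your proposal is correct and is precisely the ``straightforward analog of Proposition~\ref{rep1P}'' that the paper invokes in lieu of a written proof: you isolate the one extra ingredient---that the choice \eqref{aplhaB} of $\alpha$ forces $d_{\textrm{\euro}\textrm{\pounds}}(\calR)$, $d_{\textrm{\euro}\textrm{\yen}}(\calR)$, $d_{\textrm{\pounds}\textrm{\yen}}(\calR)$ (and hence the discrepancies of every element of $T_{\alpha}(\calR)$) to lie in $(\log\alpha)\mathbb{Z}$---and then rerun the invariance-by-inspection argument and the appeal to Proposition~\ref{balp} verbatim. Nothing further is needed.
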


The following is an analog of Theorem~\ref{arbH}:

\begin{proposition}\label{arbBH}
The equality $S^{bal}(\calR)= T^{bal}(\calR)$ holds.
\end{proposition}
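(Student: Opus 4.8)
The inclusion $S^{bal}(\calR)\subseteq T^{bal}_{\alpha}(\calR)$ is exactly Proposition~\ref{repBP}, so only the reverse inclusion needs work, and the plan is to obtain it by an explicit construction in the spirit of Proposition~\ref{algP} and Section~\ref{directSS}. The first ingredient I would set up is a description of how the $24$ arbitrages of Table~\ref{tab1} act on the discrepancy ensemble $\calD$. Passing to logarithms as in \eqref{discrep}, one checks by inspection --- this is just the additive form of the invariance argument used in the proofs of Propositions~\ref{rep1P} and~\ref{repBP} --- that every arbitrage keeps $\calD/\log\alpha$ at integer values and acts on that integer triple by one of a short list of elementary moves: each arbitrage zeroes the discrepancy attached to one of the four triangles $\textrm{\$}\textrm{\euro}\textrm{\pounds}$, $\textrm{\$}\textrm{\euro}\textrm{\yen}$, $\textrm{\$}\textrm{\pounds}\textrm{\yen}$, $\textrm{\euro}\textrm{\pounds}\textrm{\yen}$ while adding $\pm$ its former value to an adjacent triangle's discrepancy. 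I would tabulate these moves once, together with the sign of the discrepancy for which each member of a pair such as $(\calA^{(1)},\calA^{(7)})$ is active.

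Second comes the preparatory reduction. By hypothesis $\calD(\calR)$ equals $\log\alpha$ times a triple of the form $(n,m,0)$ with $\gcd(m,n)=1$, one nonzero coordinate corresponding to $n$ and the other to $m$ (compare \eqref{fori}). Using the tabulated moves, and using the vanishing coordinate as scratch storage, I would run a Euclidean-type reduction that drives $(n,m,0)$ down and, after a chain of bounded length, transports $\calR$ to a relabelled copy of one of the single-perturbation ensembles $\bar\calR^{i}_{\alpha}$ from the list \eqref{list}. The coprimality $\gcd(m,n)=1$ is precisely what guarantees that this reduction can be made to terminate at a \emph{unit} perturbation and not in a coarser sublattice of balanced ensembles. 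At each step one must take the member of the relevant arbitrage pair that is active for the current sign of the discrepancy being eliminated; keeping track of these signs is the bookkeeping part of the argument.

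Third, the endgame. Once $\calR$ has been brought to a relabelled $\bar\calR^{i}_{\alpha}$, the result already in hand for that index --- Theorem~\ref{arbH} for $i=1$, Corollary~\ref{arbAH} for $i=2,3$, Proposition~\ref{arbA2H} for $i=4,5,6$ --- furnishes, for every $(n_1,n_2,n_3)$, an arbitrage chain realising the corresponding balanced ensemble. Since $T^{bal}_{\alpha}$ depends only on the $\alpha$-coset of the ensemble, the balanced ensembles so obtained exhaust $T^{bal}_{\alpha}(\calR)$. Concatenating the preparatory chain with the endgame chain gives $T^{bal}_{\alpha}(\calR)\subseteq S^{bal}(\calR)$, and with Proposition~\ref{repBP} this is the asserted equality.

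The main obstacle is the preparatory reduction. The moves available on the discrepancy triple are of replacement/transfer type rather than free subtractions, so a naive Euclidean algorithm stalls at an intermediate common divisor; the reduction has to interleave the ``reflection'' moves coming from the $\textrm{\euro}\textrm{\pounds}\textrm{\yen}$-triangle arbitrages with the ``transfer'' moves coming from the dollar-triangle arbitrages, shuttling values through the coordinate that is currently zero, while all the time respecting the activation conditions of Table~\ref{tab1}. For the irrational exponent of Theorem~\ref{irratBC} no such reduction is available, and denseness is obtained there from the orbit of a single irrational rotation instead; this is why the rational and irrational cases are treated separately.
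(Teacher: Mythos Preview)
Your preparatory Euclidean reduction on the discrepancy triple cannot be carried out, and this is a structural obstruction rather than a bookkeeping difficulty. By Lemma~\ref{comp} and Corollary~\ref{Dfini2}, the discrepancy dynamics decomposes into connected components: the $24$-element component ${\bD}(a,b)$ containing $\calD(\calR)=(a,b,0)$ is invariant under all the $G^{(i)}$ except for one-way drops to the $12$-element components ${\bD}(a)$, ${\bD}(b)$, ${\bD}(a-b)$, and from any of those the only further drop is to the zero triple. There is \emph{no} arbitrage taking $\calD$ into a different $24$-element component ${\bD}(a',b')$, so no iterated Euclidean step on $(n,m)$ is available---you get exactly one subtraction. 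Writing $a=n\log\alpha$, $b=m\log\alpha$, a single-perturbation ensemble $\bar\calR^{i}_{\alpha}$ has discrepancy in ${\bD}(\log\alpha)$, and this component is reachable from ${\bD}(a,b)$ only when $1\in\{|n|,|m|,|n-m|\}$. For $(n,m)=(5,3)$, say, the only reachable $12$-element components are ${\bD}(5\log\alpha)$, ${\bD}(3\log\alpha)$, ${\bD}(2\log\alpha)$, and your endgame from any of these yields only a coset of the proper sublattice $T^{bal}_{\alpha^{c}}$ with $c\in\{5,3,2\}$; a finite union of such cosets (one per drop point) does not exhaust $T^{bal}_{\alpha}(\calR)$.

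The paper's proof (Sections~\ref{knotsSS}--\ref{fp2SS}) avoids this by never attempting to reduce the discrepancies. It remains inside ${\bD}(a,b)$ and exploits its internal ``knot'' structure: closed cycles in the travel-between-knots graph $\Gamma$ (Lemma~\ref{CyclesP}) return $\calD$ to its starting value while loading cargo $(M_1 a-N_1 b,\,M_2 a+N_2 b,\,M_3 a-N_3 b)$ onto $\calR'$ for \emph{arbitrary} non-negative $N_i$; only then does one drop to a $12$-element component to adjust the $M_i$ (Corollary~\ref{abc}). The coprimality of $m$ and $n$ enters at the very end, via B\'ezout on the accumulated increments, not via a reduction of the discrepancy values themselves. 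In short, the Euclidean arithmetic has to be done on the $\calR'$-side, where unbounded iteration is possible, and not on the $\calD$-side, where the component structure forbids it.
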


A proof of this assertion will be given later on.

Note that the expressions like (\ref{estN}) are not valid in
general. Similar expressions may be established, however, for the
cases $m=1$ or $n=1$.  Note also that the case when the discrepancy
triplet is of one the forms $(a,a,0)$ or $(a,0,-a)$ or $(0,a,a)$,
$a\not=0$, was implicitly considered in Section~\ref{directSS}: see
the first line in \eqref{fori} and Proposition \ref{arbA2H}.

\subsection{Case C\label{caseC}}
Consider the case where all three arbitrage discrepancies
\eqref{discrep} are not equal to zero.

\begin{corollary}\label{irratCC}
Let at least one of the ratios
\begin{equation}\label{ratiose}
q_{1}(\calR)=
\frac{d_{\textrm{\euro}\textrm{\yen}}(\calR)}{d_{\textrm{\euro}\textrm{\pounds}}(\calR)},
\quad
q_{2}(\calR)=\frac{d_{\textrm{\pounds}\textrm{\yen}}(\calR)}{d_{\textrm{\euro}\textrm{\pounds}}(\calR)}
\end{equation}
be irrational. Then the set $S^{bal}(\calR)$ is dense in the
totality $T^{bal}$ of all possible balanced ensembles.
\end{corollary}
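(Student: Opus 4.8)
The plan is to reduce Corollary~\ref{irratCC} to Theorem~\ref{irratBC} by showing that a single well-chosen arbitrage operation can be used to drive one of the three discrepancies to zero while keeping the ratio of the two surviving discrepancies irrational. Concretely, suppose without loss of generality that $q_1(\calR)$ is irrational. I would first observe that the discrepancy triplet $\calD(\calR)=(d_{\textrm{\euro}\textrm{\pounds}},d_{\textrm{\euro}\textrm{\yen}},d_{\textrm{\pounds}\textrm{\yen}})$ transforms in a simple affine-integer way under each of the $24$ arbitrages: from Table~\ref{tab1} and the definition \eqref{discrep}, applying an arbitrage either leaves $\calD$ fixed or adds to it an integer multiple of one of the basis vectors $(\pm1,0,0)$, $(0,\pm1,0)$, $(0,0,\pm1)$, with the multiple determined by the current discrepancy value (indeed each active arbitrage zeroes out exactly the discrepancy associated with the pair of non-base currencies it touches, or shifts it). This is the same invariance computation already carried out in the proof of Proposition~\ref{rep1P}, applied now to the $\log$-coordinates.

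Next I would exhibit a short explicit chain $\bG$ (length at most a few arbitrages) such that $\calD(\calR\bG)=(d_{\textrm{\euro}\textrm{\pounds}}',d_{\textrm{\euro}\textrm{\yen}}',0)$ with $d_{\textrm{\euro}\textrm{\pounds}}',d_{\textrm{\euro}\textrm{\yen}}'\neq0$ and $d_{\textrm{\euro}\textrm{\yen}}'/d_{\textrm{\euro}\textrm{\pounds}}'$ irrational. The natural candidate is an arbitrage of type $\calA_{\textrm{\pounds}\textrm{\yen}\cdot}$ or $\calA_{\textrm{\yen}\textrm{\pounds}\cdot}$ (entries $17,18,23,24$ of Table~\ref{tab1}) applied enough times to annihilate $d_{\textrm{\pounds}\textrm{\yen}}$ — or, if that arbitrage is inactive in the wrong direction, an arbitrage of type $17$ or $23$ which resets $r_{\textrm{\pounds}\textrm{\yen}}$ exactly to the balanced value $r_{\textrm{\$}\textrm{\yen}}/r_{\textrm{\$}\textrm{\pounds}}$, making $d_{\textrm{\pounds}\textrm{\yen}}=0$ in a single step. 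One must check that such an annihilating step exists regardless of the signs of the discrepancies: since $d_{\textrm{\pounds}\textrm{\yen}}$ can be driven to $0$ by the resetting arbitrage whenever it is nonzero (the reset replaces $\log r_{\textrm{\pounds}\textrm{\yen}}$ by $\log r_{\textrm{\$}\textrm{\yen}}-\log r_{\textrm{\$}\textrm{\pounds}}$, which is exactly $d_{\textrm{\pounds}\textrm{\yen}}=0$), and since such resets leave $r_{\textrm{\$}\textrm{\euro}},r_{\textrm{\$}\textrm{\pounds}},r_{\textrm{\$}\textrm{\yen}}$ untouched, they do not alter $d_{\textrm{\euro}\textrm{\pounds}}$; they may alter $d_{\textrm{\euro}\textrm{\yen}}$ (through $r_{\textrm{\euro}\textrm{\yen}}$? — no, through $r_{\textrm{\pounds}\textrm{\yen}}$ only, which does not enter $d_{\textrm{\euro}\textrm{\yen}}$). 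Hence $d_{\textrm{\euro}\textrm{\pounds}}$ and $d_{\textrm{\euro}\textrm{\yen}}$ are \emph{both unchanged} by such a reset, so $d_{\textrm{\euro}\textrm{\pounds}}'=d_{\textrm{\euro}\textrm{\pounds}}\neq0$, $d_{\textrm{\euro}\textrm{\yen}}'=d_{\textrm{\euro}\textrm{\yen}}\neq0$, and the ratio $q(\calR\bG)=q_1(\calR)$ stays irrational.

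With $\calR\bG$ now satisfying the hypothesis of Case~B ($d_{\textrm{\pounds}\textrm{\yen}}(\calR\bG)=0$, the other two nonzero, and the ratio \eqref{ratio} irrational), Theorem~\ref{irratBC} gives that $S^{bal}(\calR\bG)$ is dense in $T^{bal}$. Since any chain applicable after $\bG$ can be prepended by $\bG$, we have $S^{bal}(\calR\bG)\subset S^{bal}(\calR)$, and therefore $S^{bal}(\calR)$ is dense in $T^{bal}$ as well. The case where instead $q_2(\calR)$ is irrational is handled symmetrically by first annihilating $d_{\textrm{\euro}\textrm{\yen}}$ via an arbitrage that resets $r_{\textrm{\euro}\textrm{\yen}}$ (entries $11,21$) — one checks this leaves $d_{\textrm{\euro}\textrm{\pounds}}$ and $d_{\textrm{\pounds}\textrm{\yen}}$ fixed — and then invoking the $d_{\textrm{\euro}\textrm{\yen}}=0$ analog of Theorem~\ref{irratBC} (obtained from it by the same relabelling argument used for Corollary~\ref{arbAH}).

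The main obstacle I anticipate is the bookkeeping in the reduction step: one must verify carefully, using Table~\ref{tab1}, that there genuinely is an arbitrage which resets the targeted discrepancy to zero in one stroke \emph{without} disturbing the other two discrepancies, and that the relevant activation condition is automatically met (it is, because a nonzero discrepancy is precisely what makes the corresponding arbitrage active in exactly one of its two directions). Once that lemma is in place — essentially a restatement of the three-currency collapse observed in Section~\ref{S-3currencies} embedded inside the four-currency system — the corollary follows immediately from Theorem~\ref{irratBC} and the monotonicity $S^{bal}(\calR\bG)\subset S^{bal}(\calR)$.
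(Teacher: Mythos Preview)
Your proposal is correct and matches the paper's (implicit) argument. The paper states Corollary~\ref{irratCC} without a separate proof, but the parallel rational-case Corollary~\ref{arbCH} is proved ``from Proposition~\ref{arbBH} together with Lemma~\ref{colP}'', and Proposition~\ref{colP} records precisely the reduction you carry out: $(a,b,c)G^{(11)}=(a,b,0)$ and $(a,b,c)G^{(9)}=(a,0,c)$, i.e.\ the strong arbitrages $\Hat\calA^{(11)}$ (your entries $17,23$) and $\Hat\calA^{(9)}$ (your entries $11,21$) annihilate $d_{\textrm{\pounds}\textrm{\yen}}$ respectively $d_{\textrm{\euro}\textrm{\yen}}$ while leaving the other two discrepancies untouched, after which Theorem~\ref{irratBC} (or its relabelled analogue) applies directly.
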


Suppose now that both ratios \eqref{ratiose} are rational:
\[
q_{1}(\calR)=\frac{m_{1}}{n_{1}},\quad
q_{2}(\calR)=\frac{m_{2}}{n_{2}}.
\]
Denote by $\lcm(n_1,n_2)$ the least common multiple of the
corresponding denominators. Denote
\[
\alpha(\calR)=\exp\left(\frac{d_{\textrm{\euro}\textrm{\pounds}}(\calR)}{\lcm(n_1,n_2)}\right).
\]
\begin{proposition}\label{rep1CP}
The relationships $ S(\calR)\subset T_{\alpha}(\calR)$ and
$S^{bal}(\calR)\subset T^{bal}_{\alpha}(\calR)$ hold.
\end{proposition}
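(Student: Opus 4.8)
The plan is to proceed exactly as in the proofs of Propositions~\ref{rep1P} and~\ref{repBP}: show that the set $T_{\alpha}(\calR)$ — the sextuples of the form \eqref{prod} with $\bar\calR$ replaced by the components of the (unbalanced) reference ensemble $\calR$ — contains $\calR$, trivially by taking all exponents zero, and is invariant under each of the $24$ arbitrages of Table~\ref{tab1}; then induction on the length of a chain $\bA$ gives $\calR\bA\in T_{\alpha}(\calR)$ for every $\bA$, i.e. $S(\calR)\subset T_{\alpha}(\calR)$. The second inclusion then follows immediately: any element of $S^{bal}(\calR)$ lies in $S(\calR)\subset T_{\alpha}(\calR)$ and is balanced, hence lies in the balanced part $T^{bal}_{\alpha}(\calR)$ of $T_{\alpha}(\calR)$ (Propositions~\ref{balp} and~\ref{disc0P}).

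The substance is the invariance step, and it rests on one observation: every ensemble in $T_{\alpha}(\calR)$ has all three arbitrage discrepancies equal to integer multiples of $\log\alpha$. Put $\Lambda=(\log\alpha)\mathbb{Z}$. By \eqref{discrep} each discrepancy is an integer-coefficient combination of the logarithms of the six principal rates, so if $\calR'\in T_{\alpha}(\calR)$, that is $\log r'_{i}-\log r_{i}\in\Lambda$ for each $i$, then $\calD(\calR')-\calD(\calR)\in\Lambda^{3}$; hence it suffices to verify $\calD(\calR)\in\Lambda^{3}$. Writing $d_{1}=d_{\textrm{\euro}\textrm{\pounds}}(\calR)$, $d_{2}=d_{\textrm{\euro}\textrm{\yen}}(\calR)$, $d_{3}=d_{\textrm{\pounds}\textrm{\yen}}(\calR)$ and $L=\lcm(n_{1},n_{2})$, the definition of $\alpha(\calR)$ gives $d_{1}=L\log\alpha\in\Lambda$, while $d_{2}=q_{1}d_{1}=\tfrac{m_{1}}{n_{1}}L\log\alpha$ and $d_{3}=q_{2}d_{1}=\tfrac{m_{2}}{n_{2}}L\log\alpha$; since $n_{1}\mid L$ and $n_{2}\mid L$, the coefficients $\tfrac{m_{1}}{n_{1}}L$ and $\tfrac{m_{2}}{n_{2}}L$ are integers, so $d_{2},d_{3}\in\Lambda$ as well. (One may assume $m_{i},n_{i}$ coprime, but coprimality is not needed for the present inclusion; it would enter only for sharper statements in the spirit of Theorem~\ref{arbH}, where one needs $\Lambda$ to be exactly the lattice generated by $d_{1},d_{2},d_{3}$.)

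Given this, invariance is a row-by-row inspection of Table~\ref{tab1}, identical in spirit to the proof of Proposition~\ref{rep1P}. Each arbitrage is either inactive — leaving the ensemble unchanged, so still in $T_{\alpha}(\calR)$ — or it multiplies exactly one principal rate by a factor of the form $\exp(\pm d_{j})$ in terms of the current discrepancies; for instance $\calA^{(1)}$ sends $r_{\textrm{\$}\textrm{\euro}}$ to $r_{\textrm{\$}\textrm{\pounds}}r_{\textrm{\euro}\textrm{\pounds}}^{-1}=e^{-d_{\textrm{\euro}\textrm{\pounds}}(\calR')}\,r_{\textrm{\$}\textrm{\euro}}$, and likewise every other row multiplies one rate by $e^{\pm d_{\textrm{\euro}\textrm{\pounds}}}$, $e^{\pm d_{\textrm{\euro}\textrm{\yen}}}$ or $e^{\pm d_{\textrm{\pounds}\textrm{\yen}}}$. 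By the previous paragraph this factor equals $\alpha^{k}$ for some $k\in\mathbb{Z}$, so the corresponding exponent in the representation \eqref{prod} merely shifts by $k$ and the new ensemble stays in $T_{\alpha}(\calR)$, which closes the induction.

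I do not expect a genuine obstacle here: the argument is a direct transcription of the earlier invariance proofs, and the only mildly delicate point is the elementary divisibility $n_{1}\mid\lcm(n_{1},n_{2})$, $n_{2}\mid\lcm(n_{1},n_{2})$ that forces all discrepancies of members of $T_{\alpha}(\calR)$ into $(\log\alpha)\mathbb{Z}$; the remainder is bookkeeping across the $24$ lines of Table~\ref{tab1}.
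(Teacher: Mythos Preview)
Your approach is correct and is precisely the analog of Proposition~\ref{rep1P} that the paper intends (Proposition~\ref{rep1CP} is stated without proof, as a direct parallel to Propositions~\ref{rep1P} and~\ref{repBP}). One minor inaccuracy: for the six arbitrages that do not involve the dollar (numbers $10,12,16,18,22,24$ in Table~\ref{tab1}) the factor by which the modified rate changes is $\exp\bigl(\pm(d_{\textrm{\euro}\textrm{\yen}}-d_{\textrm{\pounds}\textrm{\yen}}-d_{\textrm{\euro}\textrm{\pounds}})\bigr)$ rather than $\exp(\pm d_{j})$ for a single $j$; since you have already shown all three discrepancies lie in $\Lambda=(\log\alpha)\mathbb{Z}$, this integer combination is still in $\Lambda$ and your conclusion is unaffected.
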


\begin{corollary}\label{arbCH}
Let
\begin{equation}
\label{resC} \lcm(n_1,n_2)=n_{1}\cdot n_{2}.
\end{equation}
Then $S^{bal}(\calR)= T^{bal}_{\alpha}(\calR)$.
\end{corollary}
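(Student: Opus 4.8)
The plan is to prove the inclusion opposite to the one in Proposition~\ref{rep1CP}, namely $T^{bal}_{\alpha}(\calR)\subseteq S^{bal}(\calR)$; together with Proposition~\ref{rep1CP} this yields the asserted equality. First I would normalize. Writing $a=\log\alpha$, the definition of $\alpha$ together with \eqref{ratiose} and \eqref{discrep} shows that the discrepancy triple of $\calR$ equals $a\cdot(n_{1}n_{2},\,m_{1}n_{2},\,m_{2}n_{1})$, and hypothesis \eqref{resC} (equivalently $\gcd(n_{1},n_{2})=1$), combined with $\gcd(m_{i},n_{i})=1$, forces $\gcd(n_{1}n_{2},m_{1}n_{2},m_{2}n_{1})=1$. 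Hence $\calR$ is an $\alpha$-shift of a balanced backbone; ``measured in powers of $\alpha$'' its three discrepancy coordinates (the quantities of \eqref{discrep}) are the coordinates of this primitive vector of $\mathbb{Z}^{3}$, replacing the backbone by an $\alpha$-equivalent one changes neither $T_{\alpha}(\calR)$ nor $T^{bal}_{\alpha}(\calR)$, and, by Proposition~\ref{rep1CP}, $T_{\alpha}(\calR)$ is invariant under every arbitrage.

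Next I would compile a dictionary for the effect of the $24$ arbitrages of Table~\ref{tab1} on the discrepancy triple $\calD=(d_{1},d_{2},d_{3})$ (abbreviating the three quantities of \eqref{discrep}), which is far more economical than tracking all six rates. Using \eqref{discrep} one checks that each arbitrage either (i) sets one coordinate $d_{k}$ to $0$ and leaves the other two fixed, or (ii) sets one coordinate to $0$ while adding or subtracting it from exactly one of the other two, or (iii) replaces one coordinate by the sum or the difference of the other two; and its activation condition is a strict inequality in $d_{1},d_{2},d_{3}$ alone. The key structural point is that the $24$ arbitrages split into pairs inducing the same map on $\calD$ but with complementary strict activation inequalities (for instance $\calA^{(1)}$ with $\calA^{(7)}$; $\calA^{(3)}$ with $\calA^{(13)}$; $\calA^{(9)}$ with $\calA^{(15)}$; $\calA^{(18)}$ with $\calA^{(24)}$), so that whenever a prescribed coordinate of the integer discrepancy vector is nonzero, a whole family of these ``coordinate moves'' on the vector is legally available regardless of signs.

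The core of the argument is a $\gcd$-type descent on the primitive vector $(n_{1}n_{2},m_{1}n_{2},m_{2}n_{1})$: one uses the moves of types (i) and (ii) to shrink coordinates, and those of type (iii) to regenerate a spare slot when one is needed — a Euclidean step $\gcd(u,v)\mapsto\gcd(u-v,v)$ being carried out by first parking $u-v$ in the free slot and only then discarding the stale coordinate, with the signs arranged so that the sum of absolute values strictly decreases. This drives the discrepancy down to one of the six triples of \eqref{fori}; equivalently, a finite arbitrage chain $\bA_{0}$ brings $\calR$ to an ensemble $\calR'$ with $\calD(\calR')$ equal to one of the triples in \eqref{fori}. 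By invariance of $T_{\alpha}$ (Proposition~\ref{rep1CP}) one has $\calR'\in T_{\alpha}(\calR)$, whence $T^{bal}_{\alpha}(\calR')=T^{bal}_{\alpha}(\calR)$; and since $\calD(\calR')$ is one of the discrepancies in \eqref{fori}, $\calR'$ has the form of a member of the list \eqref{list} over a balanced backbone, so (after the relabelling used in Section~\ref{directSS}) Theorem~\ref{arbH} (case $i=1$), Corollary~\ref{arbAH} ($i=2,3$) or Proposition~\ref{arbA2H} ($i=4,5,6$) gives $T^{bal}_{\alpha}(\calR')\subseteq S^{bal}(\calR')$. Composing $\bA_{0}$ with the chains furnished by those results yields $T^{bal}_{\alpha}(\calR)=T^{bal}_{\alpha}(\calR')\subseteq S^{bal}(\calR')\subseteq S^{bal}(\calR)$, which is what was left to prove.

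The real work is the descent: one must verify at each step that the strict inequalities activating the chosen arbitrages are genuinely met (so the intended coordinate move is legal) and that a $\gcd$-preserving, size-decreasing move is always available, so that the process terminates at a unit coordinate vector; here the pairing of arbitrages from the second paragraph and the primitivity from the first are both essential. Condition \eqref{resC} enters exactly to secure that primitivity and the clean form $a\cdot(n_{1}n_{2},m_{1}n_{2},m_{2}n_{1})$ of the discrepancy; without it these are expressed through $\lcm(n_{1},n_{2})$ and the bookkeeping becomes heavier.
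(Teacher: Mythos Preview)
Your approach is correct in outline but takes a genuinely different route from the paper. The paper's proof is a two-line reduction: by Proposition~\ref{colP}, a single arbitrage sends the triple $(d_{\textrm{\euro}\textrm{\pounds}},d_{\textrm{\euro}\textrm{\yen}},d_{\textrm{\pounds}\textrm{\yen}})$ into one of the connected components ${\bD}(\cdot,\cdot)$, i.e.\ into a Case~B configuration with one discrepancy zero; Proposition~\ref{arbBH} then delivers $S^{bal}=T^{bal}_{\alpha}$ directly, condition~\eqref{resC} being precisely what makes the $\alpha$ of Case~C match the $\alpha$ arising in Case~B. You instead bypass Proposition~\ref{arbBH} altogether and perform a Euclidean-type descent on the primitive integer vector $(n_{1}n_{2},m_{1}n_{2},m_{2}n_{1})$ all the way down to one of the six elementary perturbations in~\eqref{list}, finishing with Theorem~\ref{arbH}, Corollary~\ref{arbAH} or Proposition~\ref{arbA2H}.

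What each buys: the paper's route is short because the heavy lifting has already been packaged into Proposition~\ref{arbBH}, whose proof (Sections~\ref{knotsSS}--\ref{fp2SS}) carries out essentially the same reduction you propose, but in the structured language of commuters, terminals and knots rather than as a bare $\gcd$ recursion; that structure makes the legality of each move and the termination of the process transparent. Your route is more self-contained and elementary in spirit, but it effectively re-proves Proposition~\ref{arbBH} inside the corollary, and --- as you yourself note --- the ``real work'' of checking that a $\gcd$-preserving, size-decreasing move is always legally available is left as a sketch. That step is not automatic: for instance, with integer discrepancy $(6,10,15)$ each of the three ``overwrite'' moves $G^{(8)},G^{(10)},G^{(12)}$ destroys primitivity, so one must reach for the other $G^{(i)}$ and argue carefully. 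The paper's knot machinery handles exactly these bookkeeping issues.
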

\begin{proof}
This assertion as well as formulated below Corollary~\ref{arbC2H}
follows from Proposition~\ref{arbBH} together with Lemma
\ref{colP}.
\end{proof}

Consider finally the case when the ratios $q_{1}(\calR)$ and
$q_{2}(\calR)$ are rational, but \eqref{resC} does not hold. In
this case we introduce the number $\gamma$ such that
$d_{i}=k_{i}\gamma$ where the numbers $k_{i}$ are integers and
their greatest common divisor, $\gcd(k_{1},k_{2},k_{3})$, is equal
to $1$. Consider also the following six numbers:
\begin{equation}\label{lcd2}
\begin{alignedat}{2}
a_{1}&=\gcd(k_{1},k_{2}), &a_{2}&=\gcd(k_{1},k_{3}),\\[1mm]
a_{3}&=\gcd(k_{2},k_{3}), &a_{4}&=\gcd(k_{1},k_{2}-k_{3}),\\[1mm]
a_{5}&=\gcd(k_{2},k_{1}+k_{3}),\quad &a_{6}&=\gcd(k_{3},k_{1}-k_{2}).
\end{alignedat}
\end{equation}
Introduce also the numbers $\alpha_{i}=\exp a_{i}$, $i=1,\ldots,
6$.

\begin{corollary}\label{arbC2H}
The equation $S^{bal}(\calR)=\cup_{i=1}^{6}
T^{bal}_{\alpha_{i}}(\calR)$ holds.
\end{corollary}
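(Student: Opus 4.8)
To prove Corollary~\ref{arbC2H} the plan is to reduce it, via the single‑vanishing‑discrepancy case already settled in Proposition~\ref{arbBH}, to the combinatorial bookkeeping packaged in Lemma~\ref{colP}. Write $\calD(\calR)=(d_1,d_2,d_3)$ with $d_1=d_{\textrm{\euro}\textrm{\pounds}}$, $d_2=d_{\textrm{\euro}\textrm{\yen}}$, $d_3=d_{\textrm{\pounds}\textrm{\yen}}$, and $d_i=k_i\gamma$, $\gcd(k_1,k_2,k_3)=1$. The first step is to record how the arbitrages of Table~\ref{tab1} move $\calD$: inspection shows that each of the eighteen arbitrages whose triangle contains the dollar sends one coordinate of $\calD$ to $0$ and leaves the remaining two either unchanged or shifted by $\pm$ the annihilated coordinate, while each of the six arbitrages on the triangle not containing the dollar replaces one coordinate by an integer combination of the other two and, after a single such ``mixing'' move, confines $\calD$ to the plane $d_1-d_2+d_3=0$, on which all six of them act as the identity. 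Two consequences are needed. (a) From $\calR$ one arbitrage can produce an ensemble with a zero discrepancy in only six ways: the surviving discrepancy pair is, up to sign, one of $(k_1,k_2),(k_1,k_3),(k_2,k_3),(k_1,k_2-k_3),(k_2,k_1+k_3),(k_3,k_1-k_2)$, whose greatest common divisors are $a_1,\dots,a_6$; moreover, allowing one mixing move first still yields only pairs whose $\gcd$ is a multiple of some $a_i$. (b) The greatest common divisor of the entries of $\calD$ is nondecreasing in the divisibility order along any chain, so once it is a multiple of some $a_i$ it remains one.

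For $\bigcup_{i=1}^{6}T^{bal}_{\alpha_i}(\calR)\subseteq S^{bal}(\calR)$, fix $i$ and, using (a), pick the arbitrage $\calA$ of Table~\ref{tab1} --- active on $\calR$ for whichever sign the discrepancy it annihilates happens to have, both signs being provided for in the table --- that collapses $\calR$ to $\calR_i:=\calR\calA$, an ensemble of Case~B type whose two nonzero discrepancies have $\gcd$ equal to $a_i$ and hence with associated base $\alpha_i$. Proposition~\ref{arbBH} gives $S^{bal}(\calR_i)=T^{bal}_{\alpha_i}(\calR_i)$; since $\calR_i\in S(\calR)$ this lies in $S^{bal}(\calR)$, and Lemma~\ref{colP} identifies the two $\alpha_i$‑lattices of balanced ensembles, $T^{bal}_{\alpha_i}(\calR_i)=T^{bal}_{\alpha_i}(\calR)$. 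Taking the union over $i$ completes this inclusion.

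For the reverse inclusion, let $\calR^{\star}\in S^{bal}(\calR)$ arise from a chain $\bA=\calA_1\cdots\calA_N$, and let $\calA_m$ be the first arbitrage after which some discrepancy of $\calR\calA_1\cdots\calA_m$ vanishes. By the plane‑freezing property the segment $\calA_1,\dots,\calA_m$ may be taken to consist of at most one mixing move followed by one arbitrage whose triangle contains the dollar, so by a short case check of the mixing‑first openings $\calR':=\calR\calA_1\cdots\calA_m$ is again of Case~B type with, by (a)--(b), the $\gcd$ of its two nonzero discrepancies a multiple of some $a_j$; hence $S^{bal}(\calR')\subseteq T^{bal}_{\alpha_j}(\calR')$. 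From step $m$ on the chain stays in Case~B (or Case~A), so $\calR^{\star}\in S^{bal}(\calR')\subseteq T^{bal}_{\alpha_j}(\calR')=T^{bal}_{\alpha_j}(\calR)$ by Proposition~\ref{arbBH} and Lemma~\ref{colP}. Thus $S^{bal}(\calR)\subseteq\bigcup_{i=1}^{6}T^{bal}_{\alpha_i}(\calR)$, and with the previous paragraph this gives the asserted equality; Corollary~\ref{arbCH} is the special case in which \eqref{resC} forces the six lattices to coincide.

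The real obstacle is the reverse inclusion --- specifically fact (b) and the stability of the list $a_1,\dots,a_6$ under the excursions between Cases~B and~C that a general chain may perform. The plane‑freezing remark keeps the reachable discrepancy pairs finite, but one still has to rule out that interleaving mixing moves with returns to Case~C manufactures a surviving pair whose $\gcd$ is outside the multiples of $\{a_1,\dots,a_6\}$, and to verify the base identification $T^{bal}_{\alpha_j}(\calR')=T^{bal}_{\alpha_j}(\calR)$ throughout; this arithmetic‑combinatorial check is exactly what Lemma~\ref{colP} packages. By contrast the bookkeeping of activation signs and the explicit shift rules for the eighteen arbitrages whose triangle contains the dollar are purely mechanical.
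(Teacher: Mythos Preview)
Your approach is the paper's approach: reduce Case~C to Case~B (Proposition~\ref{arbBH}) via the one-step classification of where the discrepancy triple lands after a single arbitrage (Proposition~\ref{colP}). The paper's entire proof is literally the sentence ``follows from Proposition~\ref{arbBH} together with [Proposition]~\ref{colP}'', and what you have written is an unpacking of that sentence, enriched by the correct plane-freezing observation for the three non-dollar strong arbitrages $\Hat\calA^{(8)},\Hat\calA^{(10)},\Hat\calA^{(12)}$ and the gcd-monotonicity fact~(b).

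Two remarks. First, the claim ``from step $m$ on the chain stays in Case~B (or Case~A)'' is false: within a connected component $\bD(a,b)$ the discrepancy triple may move to a commuter with all three entries nonzero (see Section~\ref{knotsSS}). But you do not need this claim --- $\calR^\star\in S^{bal}(\calR')$ holds simply because $\calR^\star$ is reached from $\calR'$ by the tail of the chain --- so the slip is harmless. Second, the worry you raise about ``excursions back to Case~C manufacturing new gcds'' is already answered by your own~(b): since every $G^{(i)}$ has integer entries, the gcd of $\calD$ can only grow in the divisibility order, and after the very first arbitrage it is a multiple of one of $a_1,\dots,a_6$ by your case split, so it remains one thereafter. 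The point that genuinely needs care is the base identification $T^{bal}_{\alpha_j}(\calR')=T^{bal}_{\alpha_j}(\calR)$; this is \emph{not} the content of Proposition~\ref{colP} per~se, and for the first moves $\Hat\calA^{(1)},\dots,\Hat\calA^{(6)}$ (which yield $a_4,a_5,a_6$) the first three exchange rates do change, by $H^{(i)}\neq 0$. What is really being used is that the change in $\calR'$ is by an integer multiple of $\gamma$ and that the balanced lattices based at $\calR$ and at $\calR'$ are to be compared through their first three coordinates; neither you nor the paper spells this out in full, so your caveat is well placed.
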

Note that all six numbers in \eqref{lcd2}  may indeed be greater
than one. For instance, consider: $k_1 = 595$, $k_2 = 1683$, $k_3 =
308$. By inspection, $\gcd(k_1, k_2, k_3)=1$, and
\begin{alignat*}{3}
 a_1&=\gcd(k_1, k_2)=17,& a_2&=\gcd(k_1, k_3)=7,\\
 a_3&=\gcd(k_2, k_3)=11,& a_4&=\gcd(k_1 - k_2, k_3)=4,\\
  a_5&=\gcd(k_1 + k_3, k_2)=3,\quad& a_6&=\gcd(k_1, k_2 - k_3)=5.
\end{alignat*}

\section{Proofs}\label{S-proofs}

From this point onward we discuss the proofs of the theorems
formulated above. This part of the paper is organised as follows.
In Section~\ref{S-sarb} we introduce, as a useful auxiliary tool,
stronger arbitrage procedures. Using strong arbitrages, we
``linearise the problem'', reducing it to investigation of all
possible products of 12 explicitly written $6\times 6$-matrices.
Afterwards, in Section~\ref{S-scs} we separate a family of 12
$3\times 3$-matrices $G^{(i)}$ such that the products of these
matrices completely describe the dynamics of the discrepancy
triplets. The properties of such products appear to be of key
importance, and these are investigated in Section~\ref{S-struct}.
The results are applied  in Section~\ref{S-discrep}.
Sections~\ref{S-incdyn} and \ref{S-proofT} are dedicated to
finalising the proof of Theorem~\ref{arbH}. Finally, in
Sections~\ref{knotsSS}--\ref{fp2SS} we provide proofs for
Theorem~\ref{irratBC} and Proposition \ref{arbBH}.

\subsection{Strong Arbitrages}\label{S-sarb}

We use, as an auxiliary tool, stronger arbitrage procedures. Let us
begin with an example. Consider the currencies triplet
$(\textrm{\$}\textrm{\euro}\textrm{\pounds})$. For a given $\calR$
we define the strong arbitrage
$\Hat\calA_{\textrm{\$}\textrm{\euro}\textrm{\pounds}}\calR$ as
$\calA_{\textrm{\$}\textrm{\euro}\textrm{\pounds}}$ if the
inequality \eqref{unb} holds, and as
$\calA_{\textrm{\euro}\textrm{\$}\textrm{\pounds}}$, otherwise.
Note that in both cases the result in terms of principal exchange
rates is the same: the rate $r_{\textrm{\$}\textrm{\euro}}$ is
changed to $r_{\textrm{\$}\textrm{\euro}}^{new}=
\frac{r_{\textrm{\$}\textrm{\yen}}}{r_{\textrm{\euro}\textrm{\pounds}}}$.

The strong arbitrage
$\Hat\calA_{\textrm{\$}\textrm{\euro}\textrm{\yen}}$ is the second
entry in Table~\ref{starbT} of the possible 12 strong arbitrages.
The meaning of a strong arbitrage is simple. This is an arbitrage
balancing a sub-FX market such as
$\textrm{\$}\textrm{\euro}\textrm{\yen}$ by changing the exchange
rate for a pair such as $Dollar \leftrightarrows Euro$.  We will
use, where convenient, the notation $\Hat\calA^{(n)}$ for the
arbitrage number $n$ from this table.

\begin{table}[!htbp]
\caption{Strong arbitrages}\label{starbT}
\begin{tabular}{llll}
Number& Strong arbitrage& Action& Numbers of arbitrages\\
\hline \\

1& $\Hat\calA_{\textrm{\$}\textrm{\euro}\textrm{\pounds}}$& $
r^{new}_{\textrm{\$}\textrm{\euro}}=
r_{\textrm{\$}\textrm{\pounds}}\cdot
r_{\textrm{\euro}\textrm{\pounds}}^{-1}$ & 1, 7
\\

2& $\Hat\calA_{\textrm{\$}\textrm{\euro}\textrm{\yen}}$  & $
r^{new}_{\textrm{\$}\textrm{\euro}}=
r_{\textrm{\$}\textrm{\yen}}\cdot
r_{\textrm{\euro}\textrm{\yen}}^{-1} $ &2, 8
\\

3& $\Hat\calA_{\textrm{\$}\textrm{\pounds}\textrm{\euro}}$  & $
r^{new}_{\textrm{\$}\textrm{\pounds}}=
r_{\textrm{\$}\textrm{\euro}}\cdot
r_{\textrm{\euro}\textrm{\pounds}}$ &3, 13
\\

4& $\Hat\calA_{\textrm{\$}\textrm{\pounds}\textrm{\yen}}$& $
r^{new}_{\textrm{\$}\textrm{\pounds}}=
r_{\textrm{\$}\textrm{\yen}}\cdot
r_{\textrm{\pounds}\textrm{\yen}}^{-1}$ & 4, 14
\\

5& $\Hat\calA_{\textrm{\$}\textrm{\yen}\textrm{\euro}}$ & $
r^{new}_{\textrm{\$}\textrm{\yen}}= r_{\textrm{\$},
\textrm{\euro}}\cdot r_{\textrm{\euro}\textrm{\yen}}$ &5, 19
\\

6 &$\Hat\calA_{\textrm{\$}\textrm{\yen}\textrm{\pounds}}$&
 $ r^{new}_{\textrm{\$}\textrm{\yen}}= r_{\textrm{\$}\textrm{\pounds}}\cdot r_{\textrm{\pounds}\textrm{\yen}}$
& 6, 20
\\

7& $\Hat\calA_{\textrm{\euro}\textrm{\pounds}\textrm{\$}}$&
  $ r^{new}_{\textrm{\euro}\textrm{\pounds}}= r_{\textrm{\$}\textrm{\pounds}}\cdot r_{\textrm{\$}\textrm{\euro}}^{-1}$
 & 9, 15
\\

8& $\Hat\calA_{\textrm{\euro}\textrm{\pounds}\textrm{\yen}}$&
 $ r^{new}_{\textrm{\euro}\textrm{\pounds}}= r_{\textrm{\euro}\textrm{\yen}}\cdot r_{\textrm{\pounds}\textrm{\yen}}^{-1}$
 & 10, 16
\\

9& $\Hat\calA_{\textrm{\euro}\textrm{\yen}\textrm{\$}}$&
 $ r^{new}_{\textrm{\euro}\textrm{\yen}}= r_{\textrm{\$}\textrm{\yen}}\cdot r_{\textrm{\$}\textrm{\euro}}^{-1}$
  & 11, 21
\\

10& $\Hat\calA_{\textrm{\euro}\textrm{\yen}\textrm{\pounds}}$&
 $ r^{new}_{\textrm{\euro}\textrm{\yen}}= r_{\textrm{\euro}\textrm{\pounds}}\cdot r_{\textrm{\pounds}\textrm{\yen}}$
 & 12, 22
\\

11& $\Hat\calA_{\textrm{\pounds}\textrm{\yen}\textrm{\$}}$&
 $ r^{new}_{\textrm{\pounds}\textrm{\yen}}= r_{\textrm{\$}\textrm{\yen}}\cdot r_{\textrm{\$}\textrm{\pounds}}^{-1}$
 & 17, 23
\\

12& $\Hat\calA_{\textrm{\pounds}\textrm{\yen}\textrm{\euro}}$& $
r^{new}_{\textrm{\pounds}\textrm{\yen}}=
r_{\textrm{\euro}\textrm{\yen}}\cdot
r_{\textrm{\euro}\textrm{\pounds}}^{-1}$ & 18, 24
\end{tabular}
\end{table}

\begin{proposition}
For any  arbitrage chain \eqref{arbsec}, and any initial exchange
rates $\calR$, there exists a chain $\hbA=\Hat\calA_1 \cdots
\Hat\calA_n $ of strong arbitrages such that
$\calR\hbA=\calR{\bA}$. Conversely, for any chain $\hbA=\Hat\calA_1
\cdots \Hat\calA_n $ of strong arbitrages, and any initial exchange
rates $\calR$, there exists a chain of arbitrages such that
$\calR\hbA=\calR{\bA}$.
\end{proposition}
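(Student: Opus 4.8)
The plan is to read off from the last column of Table~\ref{starbT} that each strong arbitrage is merely a ``conditional'' choice between two ordinary arbitrages having the same effect on the principal rates. Concretely, for each $k$ the strong arbitrage $\Hat\calA^{(k)}$ is paired with two ordinary arbitrages $\calA^{(j)},\calA^{(j')}$ whose activation conditions in Table~\ref{tab1} are complementary and whose ``Action'' entries coincide; by definition $\Hat\calA^{(k)}$ applies whichever of the two is active, and when neither is active the principal rate it would change already sits at its balanced value, so $\Hat\calA^{(k)}$ then acts as the identity. Two preliminary observations, both verified by inspecting the two tables, underpin everything: (a) the twelve pairs listed in the last column of Table~\ref{starbT} partition the twenty-four arbitrages of Table~\ref{tab1}, so every ordinary arbitrage lies in exactly one pair; and (b) for each pair the common ``Action'' equals the ``Action'' of the associated strong arbitrage.

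For the first assertion I would induct on the length $n$ of $\bA=\calA_1\cdots\calA_n$, tracking the intermediate ensembles $\calR=\calR^{(0)},\calR^{(1)},\dots,\calR^{(n)}=\calR\bA$ with $\calR^{(i)}=\calR^{(i-1)}\calA_i$. For the step from $i-1$ to $i$, let $\Hat\calA^{(k_i)}$ be the strong arbitrage whose pair contains $\calA_i$. If $\calA_i$ is active on $\calR^{(i-1)}$ then its partner is inactive there, so $\Hat\calA^{(k_i)}$ applies exactly $\calA_i$ and $\calR^{(i-1)}\Hat\calA^{(k_i)}=\calR^{(i-1)}\calA_i=\calR^{(i)}$; I append $\Hat\calA^{(k_i)}$ to the strong chain. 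If $\calA_i$ is inactive on $\calR^{(i-1)}$ then $\calR^{(i)}=\calR^{(i-1)}$ and I append nothing. The resulting strong chain $\hbA$, of length at most $n$, satisfies $\calR\hbA=\calR\bA$, which is the assertion; one cannot in general pad with ``identity'' strong arbitrages to reach length exactly $n$, but that is not required.

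The converse is entirely symmetric. Given $\hbA=\Hat\calA_1\cdots\Hat\calA_n$, set $\calR^{(i)}=\calR^{(i-1)}\Hat\calA_i$ and let $\{\calA^{(j)},\calA^{(j')}\}$ be the pair associated with $\Hat\calA_i$. If one of the two, say $\calA^{(j)}$, is active on $\calR^{(i-1)}$, put $\calA_i:=\calA^{(j)}$; then $\calR^{(i-1)}\calA_i=\calR^{(i-1)}\Hat\calA_i=\calR^{(i)}$. If neither is active, then $\Hat\calA_i$ fixes $\calR^{(i-1)}$ and so does the inactive $\calA^{(j)}$, so $\calR^{(i-1)}\calA^{(j)}=\calR^{(i)}$ again; put $\calA_i:=\calA^{(j)}$. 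Then $\bA=\calA_1\cdots\calA_n$, of length exactly $n$, satisfies $\calR\bA=\calR\hbA$.

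There is no substantive obstacle here; the argument is bookkeeping together with the table inspection of the first paragraph. The one subtlety worth flagging is that ``active'' must always be interpreted relative to the \emph{current} intermediate ensemble $\calR^{(i-1)}$ rather than the original $\calR$, since the set of active arbitrages changes as the chain is applied; phrasing the induction in terms of the intermediate ensembles makes this automatic.
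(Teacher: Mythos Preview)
Your argument is correct and is exactly the elaboration one would expect; the paper itself states this proposition without proof, relying on the reader to see from the definition of strong arbitrages and the last column of Table~\ref{starbT} that each strong arbitrage is a conditional selection between two ordinary arbitrages with identical actions and complementary activation conditions. Your careful handling of the equality case (neither partner active, strong arbitrage acting as identity) and your emphasis that ``active'' must be read relative to the current intermediate ensemble are both appropriate and fill in precisely the details the paper omits.
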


This proposition reduces investigation of the questions from the
previous section to investigation of analogous questions related to
chains of strong arbitrages.

Now we relate each strong arbitrage to a $6\times 6$ matrix
$B(\calA)$ as follows:
{\small\[ B_{\textrm{\$}\textrm{\euro}
\textrm{\pounds}}=B^{(1)} = \left(
\begin{smallmatrix-mod}
      \zer& \zer& \zer& \zer& \zer& \zer \\ -1& 1& \zer& \zer& \zer& \zer  \\ \zer& \zer& 1& \zer& \zer& \zer\\
 1& \zer& \zer& 1& \zer& \zer \\ \zer& \zer& \zer& \zer& 1& \zer \\ \zer& \zer& \zer& \zer& \zer& 1
\end{smallmatrix-mod}
\right),\quad
 B_{\textrm{\$} \textrm{\euro}\textrm{\yen}}=B^{(2)}=
\left(
\begin{smallmatrix-mod}
     \zer& \zer& \zer& \zer& \zer& \zer \\ \zer& 1& \zer& \zer& \zer& \zer \\ \zer& \zer& 1& \zer& \zer& \zer \\
 \zer& \zer& \zer& 1& \zer& \zer \\ -1& \zer& \zer& \zer& 1& \zer \\ 1& \zer& \zer& \zer& \zer& 1
\end{smallmatrix-mod}
\right),
\]
\[
B_{\textrm{\$} \textrm{\pounds} \textrm{\euro}}=B^{(3)}=
\left(
\begin{smallmatrix-mod}
      1& \zer& \zer& 1& \zer& \zer \\ \zer& 1& \zer& 1& \zer& \zer \\ \zer& \zer& 1& \zer& \zer& \zer\\
  \zer& \zer& \zer& \zer& \zer& \zer \\ \zer& \zer& \zer& \zer& 1& \zer \\ \zer& \zer& \zer& \zer& \zer& 1
\end{smallmatrix-mod}
\right),\quad
B_{\textrm{\$} \textrm{\pounds} \textrm{\yen}}=B^{(4)}=
\left(
\begin{smallmatrix-mod}
 1& \zer& \zer& \zer& \zer& \zer \\ \zer& 1& \zer& \zer& \zer& \zer \\ \zer& \zer& 1& -1& \zer& \zer\\
  \zer& \zer& \zer& \zer& \zer& \zer \\ \zer& \zer& \zer& \zer& 1& \zer \\ \zer& \zer& \zer& 1& \zer& 1
\end{smallmatrix-mod}
\right),
\]
\[
B_{\textrm{\$} \textrm{\yen} \textrm{\euro}}=B^{(5)}=
\left
(\begin{smallmatrix-mod}
1& \zer& \zer& \zer& \zer& 1\\ \zer& 1& \zer& \zer& \zer& \zer \\ \zer& \zer& 1& \zer& \zer& \zer\\
  \zer& \zer& \zer& 1& \zer& \zer \\ \zer& \zer& \zer& \zer& 1& 1\\ \zer& \zer& \zer& \zer& \zer& \zer
\end{smallmatrix-mod}
\right),\quad
B_{\textrm{\$} \textrm{\yen} \textrm{\pounds}}=B^{(6)}=
\left
(\begin{smallmatrix-mod}
 1& \zer& \zer& \zer& \zer& \zer \\ \zer& 1& \zer& \zer& \zer& \zer \\ \zer& \zer& 1& \zer& \zer& 1\\
  \zer& \zer& \zer&  1& \zer& 1\\ \zer& \zer& \zer& \zer& 1& \zer \\ \zer& \zer& \zer& \zer& \zer& \zer
\end{smallmatrix-mod}
\right),
\]
\[
B_{\textrm{\euro} \textrm{\pounds} \textrm{\$}}=B^{(7)}=
\left(\begin{smallmatrix-mod}
1& -1& \zer& \zer& \zer& \zer \\ \zer& \zer& \zer& \zer& \zer& \zer \\ \zer& \zer& 1& \zer& \zer& \zer \\
   \zer& 1& \zer& 1& \zer& \zer \\ \zer& \zer& \zer& \zer& 1& \zer \\ \zer& \zer& \zer& \zer& \zer& 1
\end{smallmatrix-mod}\right),\quad
B_{\textrm{\euro} \textrm{\pounds} \textrm{\yen}}=B^{(8)}=
\left(\begin{smallmatrix-mod}
 1& \zer& \zer& \zer& \zer& \zer \\ \zer& \zer& \zer& \zer& \zer& \zer \\ \zer& -1& 1& \zer& \zer& \zer \\
   \zer& \zer& \zer& 1& \zer& \zer \\ \zer& 1& \zer& \zer& 1& \zer \\ \zer& \zer& \zer& \zer& \zer& 1
\end{smallmatrix-mod}\right),
\]
\[
B_{\textrm{\euro} \textrm{\yen} \textrm{\$}}=B^{(9)}=
\left(\begin{smallmatrix-mod}
1& \zer& \zer& \zer& -1& \zer \\ \zer& 1& \zer& \zer& \zer& \zer \\ \zer& \zer& 1& \zer& \zer& \zer \\
   \zer& \zer& \zer& 1& \zer& \zer \\ \zer& \zer& \zer& \zer& \zer& \zer \\ \zer& \zer& \zer& \zer& 1& 1
\end{smallmatrix-mod}\right),\quad
B_{\textrm{\euro} \textrm{\yen} \textrm{\pounds}}=B^{(10)}=
\left(\begin{smallmatrix-mod}
 1& \zer& \zer& \zer& \zer& \zer \\ \zer& 1& \zer& \zer& 1& \zer \\ \zer& \zer& 1& \zer& 1& \zer \\
    \zer& \zer& \zer& 1& \zer& \zer \\ \zer& \zer& \zer& \zer& \zer& \zer \\ \zer& \zer& \zer& \zer& \zer& 1
\end{smallmatrix-mod}\right),
\]
\[
B_{\textrm{\pounds} \textrm{\yen} \textrm{\$}}=B^{(11)}=
\left(\begin{smallmatrix-mod}
1& \zer& \zer& \zer& \zer& \zer \\ \zer& 1& \zer& \zer& \zer& \zer \\ \zer& \zer& \zer& \zer& \zer& \zer \\
   \zer& \zer& -1& 1& \zer& \zer \\ \zer& \zer& \zer& \zer& 1& \zer \\ \zer& \zer& 1& \zer& \zer& 1
\end{smallmatrix-mod}\right),\quad
B_{\textrm{\pounds} \textrm{\yen} \textrm{\euro}}=B^{(12)}=
\left(\begin{smallmatrix-mod}
1& \zer& \zer& \zer& \zer& \zer \\ \zer& 1& -1& \zer& \zer& \zer \\ \zer& \zer& \zer& \zer& \zer& \zer \\
   \zer& \zer& \zer& 1& \zer& \zer \\ \zer& \zer& 1& \zer& 1& \zer \\ \zer& \zer& \zer& \zer& \zer& 1
\end{smallmatrix-mod}\right).
\]}

For any ensemble $\calR = \left(r_{\textrm{\$}\textrm{\euro}},
r_{\textrm{\$}\textrm{\pounds}},r_{\textrm{\$}\textrm{\yen}},
r_{\textrm{\euro}\textrm{\pounds}},
r_{\textrm{\euro}\textrm{\yen}}, r_{\textrm{\pounds}\textrm{\yen}}
\right)$ we denote
\[
\log\calR =\left(\log r_{\textrm{\$}\textrm{\euro}},\ \log
r_{\textrm{\$}\textrm{\pounds}},\ \log
r_{\textrm{\$}\textrm{\yen}},\
 \log r_{\textrm{\euro}\textrm{\pounds}},\ \log r_{\textrm{\euro}\textrm{\yen}},\
 \log r_{\textrm{\pounds}\textrm{\yen}}
   \right).
  \]

\begin{proposition}\label{oldprop}
The equation $\log (\calR \Hat\calA^{(i)}) = (\log \calR) B^{(i)}$
holds for $i=1,\ldots , 12$.
\end{proposition}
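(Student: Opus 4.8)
The plan is to notice that each strong arbitrage acts on the six principal exchange rates as a \emph{monomial map} --- every new rate is a Laurent monomial in the old ones --- and that the coordinatewise logarithm carries any monomial map to the linear map given by its exponent matrix. The proof then reduces to reading off, for each of the twelve strong arbitrages in Table~\ref{starbT}, the exponent matrix of the corresponding monomial map and checking that it is the displayed $B^{(i)}$.

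First I would record the general fact. Suppose a transformation sends $\calR=(r_{1},\dots,r_{6})$ to $\calR'=(r_{1}',\dots,r_{6}')$ with $r_{j}'=\prod_{k=1}^{6}r_{k}^{m_{kj}}$ for an integer matrix $M=(m_{kj})$. Applying $\log$ componentwise gives $\log r_{j}'=\sum_{k}m_{kj}\log r_{k}$, that is, $\log\calR'=(\log\calR)M$ read as a row vector times a matrix. So it suffices to produce, for each $i$, an integer matrix $M^{(i)}$ such that $\calR\Hat\calA^{(i)}$ arises from $\calR$ by the monomial map with exponent matrix $M^{(i)}$, and then to verify $M^{(i)}=B^{(i)}$.

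Next I would describe the shape of $M^{(i)}$ straight from the definition of a strong arbitrage together with Table~\ref{starbT}. Each $\Hat\calA^{(i)}$ rebalances one sub-market (a three-currency FX market) by overwriting a single principal rate --- the one occupying some fixed position $k(i)$ in the ordering of \eqref{excr} --- with the product of the two other rates of the corresponding currency triangle, each raised to the power $+1$ or $-1$, and it leaves the remaining five principal rates unchanged. Hence $M^{(i)}$ is the $6\times 6$ identity matrix with column $k(i)$ replaced by the column that has a zero in row $k(i)$ and entries $\pm 1$ in the two rows indexed by the rates appearing in the ``Action'' entry of Table~\ref{starbT}, the sign being $+1$ for a direct factor and $-1$ for an inverse factor. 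For instance, $\Hat\calA^{(1)}=\Hat\calA_{\textrm{\$}\textrm{\euro}\textrm{\pounds}}$ overwrites $r_{\textrm{\$}\textrm{\euro}}$ via $r^{new}_{\textrm{\$}\textrm{\euro}}=r_{\textrm{\$}\textrm{\pounds}}\cdot r_{\textrm{\euro}\textrm{\pounds}}^{-1}$, which modifies only the column of $M^{(1)}$ belonging to $r_{\textrm{\$}\textrm{\euro}}$.

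Finally I would run the case-by-case comparison: for each $i=1,\dots,12$, write out $M^{(i)}$ from the recipe above and match it entrywise against the displayed $B^{(i)}$. This is the ``by inspection'' step. The only point that needs care --- and the only place an error could creep in --- is the bookkeeping: keeping a fixed dictionary between the positions of the sextuple \eqref{excr} and the matrix indices, and correctly tracking which leg of each currency triangle contributes a positive and which a negative exponent. No real obstacle arises, since every strong arbitrage touches exactly one coordinate and taking logarithms turns its multiplicative update rule into precisely the linear recipe encoded in $B^{(i)}$; the same remark, applied along a chain, then gives $\log(\calR\Hat\calA^{(i_{1})}\cdots\Hat\calA^{(i_{n})})=(\log\calR)B^{(i_{1})}\cdots B^{(i_{n})}$, which is the use to which the proposition is put.
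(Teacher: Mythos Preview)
Your approach is correct and is exactly what the paper means by ``Follows from definitions'': each strong arbitrage in Table~\ref{starbT} replaces a single principal rate by a Laurent monomial in two others, so coordinatewise logarithm converts the action into right-multiplication of the row vector $\log\calR$ by the corresponding integer exponent matrix, and one checks entrywise that this matrix is the displayed $B^{(i)}$. Your explicit description of $M^{(i)}$ as the identity with one column overwritten is precisely the content of that inspection.
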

\begin{proof}
Follows from definitions.
\end{proof}

\subsection{A Special Coordinate System}\label{S-scs}
In the six-dimensional real coordinate space $\bbR^{6}$ we
introduce the vectors
\[
{\bv}_{1}=(1,-1,0,1,0,0), \
{\bv}_{2}=(1,0,-1,0,1,0), \
{\bv}_{3}=(0,1,-1,0,0,1).
\]
By definition for any ensemble $\calR$
\[
\langle {\bv}_{1}, \log \calR \rangle= d_{\textrm{\euro}\textrm{\pounds}}(\calR) , \
\langle {\bv}_{2}, \log \calR \rangle= d_{\textrm{\euro}\textrm{\yen}}(\calR), \
\langle {\bv}_{3}, \log \calR \rangle= d_{\textrm{\pounds}\textrm{\yen}}(\calR),
\]
where $\langle \cdot, \cdot \rangle$ denotes the usual inner
product in $\bbR^{6}$.

Propositions~\ref{balp} and~\ref{oldprop} together imply
\begin{corollary}
The three-dimensional subspace $\langle {\bv}_{1}, {\bv} \rangle =
\langle {\bv}_{2}, {\bv} \rangle
 =\langle {\bv}_{3}, {\bv} \rangle =0$
is invariant with respect to each linear operator ${\bv}\to
{\bv}B^{(i)}$, $i=1,\ldots, 12$.
\end{corollary}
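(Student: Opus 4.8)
The plan is to use the identities, recorded just above, which say that $\langle{\bv}_{1},\log\calR\rangle$, $\langle{\bv}_{2},\log\calR\rangle$, $\langle{\bv}_{3},\log\calR\rangle$ are exactly the three discrepancies $d_{\textrm{\euro}\textrm{\pounds}}(\calR)$, $d_{\textrm{\euro}\textrm{\yen}}(\calR)$, $d_{\textrm{\pounds}\textrm{\yen}}(\calR)$. First I would note that, since $\log$ maps positive sextuples bijectively onto $\bbR^{6}$, the subspace $V:=\{{\bv}:\langle{\bv}_{1},{\bv}\rangle=\langle{\bv}_{2},{\bv}\rangle=\langle{\bv}_{3},{\bv}\rangle=0\}$ is precisely the image of the set of balanced ensembles under $\calR\mapsto\log\calR$: indeed ${\bv}=\log\calR\in V$ iff $\calD(\calR)=0$, which by Proposition~\ref{disc0P} holds iff $\calR$ is balanced. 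By Proposition~\ref{oldprop} one has ${\bv}B^{(i)}=\log(\calR\Hat\calA^{(i)})$ when ${\bv}=\log\calR$, so the assertion ``$V$ is invariant under ${\bv}\mapsto{\bv}B^{(i)}$'' is equivalent to ``$\calR\Hat\calA^{(i)}$ is balanced whenever $\calR$ is balanced''. This reformulation is what I would actually establish, uniformly in $i=1,\dots,12$.

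Next I would settle the reformulation with Proposition~\ref{balp}. By construction a strong arbitrage $\Hat\calA^{(i)}$ overwrites exactly one principal rate by the value that one of the three balance relations \eqref{invss} assigns to it in terms of two of the remaining rates --- for instance $\Hat\calA_{\textrm{\$}\textrm{\euro}\textrm{\pounds}}$ replaces $r_{\textrm{\$}\textrm{\euro}}$ by $r_{\textrm{\$}\textrm{\pounds}}/r_{\textrm{\euro}\textrm{\pounds}}$, which is the first relation in \eqref{invss} solved for $r_{\textrm{\$}\textrm{\euro}}$, and the remaining eleven cases are identical in spirit. If $\calR$ already satisfies \eqref{invss}, this substituted value coincides with the rate already in place, so $\calR\Hat\calA^{(i)}=\calR$ and in particular $\calR\Hat\calA^{(i)}$ is balanced. (Equivalently: on a balanced ensemble each of the two ordinary arbitrages that $\Hat\calA^{(i)}$ selects between has an activation inequality that degenerates into an equality, hence neither is active.) Running Proposition~\ref{oldprop} in the other direction then yields the slightly stronger statement $(\log\calR)B^{(i)}=\log\calR$ for every ${\bv}=\log\calR\in V$, i.e.\ each $B^{(i)}$ restricts to the identity on $V$; invariance follows a fortiori.

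I do not anticipate a genuine obstacle --- this is a corollary of Propositions~\ref{balp} and~\ref{oldprop} --- but the point requiring the most care is the first step, the clean identification of the coordinate subspace $V$ with $\{\log\calR:\calR\text{ balanced}\}$, which relies on surjectivity of $\log$ onto all of $\bbR^{6}$ and not merely on balanced ensembles having vanishing discrepancies. Should one prefer to bypass ensembles entirely, an alternative is pure linear algebra: $V=\operatorname{span}\{{\bv}_{1},{\bv}_{2},{\bv}_{3}\}^{\perp}$, so $V$ is invariant under ${\bv}\mapsto{\bv}B^{(i)}$ exactly when ${\bv}_{j}(B^{(i)})^{\top}\in\operatorname{span}\{{\bv}_{1},{\bv}_{2},{\bv}_{3}\}$ for $j=1,2,3$; one then checks this inclusion for the twelve explicit $6\times 6$ matrices, the $3\times 3$ arrays of expansion coefficients being the matrices $G^{(i)}$ that the proof roadmap promises will govern the discrepancy dynamics. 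The only labour along that route is the bookkeeping for the thirty-six inclusions, which is routine.
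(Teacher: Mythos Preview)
Your proposal is correct and follows essentially the same route as the paper, which simply records that the corollary is an immediate consequence of Propositions~\ref{balp} and~\ref{oldprop}; you have merely spelled out that implication in detail (identifying $V$ with $\{\log\calR:\calR\text{ balanced}\}$ via Proposition~\ref{balp}/\ref{disc0P}, then invoking Proposition~\ref{oldprop} and the trivial fact that strong arbitrages fix balanced ensembles). Your alternative linear-algebra route is a valid independent check but is not the paper's argument.
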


We introduce in $\bbR^{6}$ the new basis
\[
\{{\be}_{1},{\be}_{2},{\be}_{3},{\bv}_{1},{\bv}_{2},{\bv}_{3}\};
\]
here ${\be}_{1}=(1,0,0,0,0,0)$, ${\be}_{2}=(0,1,0,0,0,0)$,
${\bv}_{3}=(0,0,1,0,0,0)$. By the last corollary  in this basis the
matrices of the linear operators ${\bv}\to {\bv}B^{(i)}$ have the
block-triangular form:
\[
D^{(i)}=\left(
\begin{array}{ll}
{\bf 1} & {\bf 0} \\
H^{(i)} & G^{(i)}
\end{array}
\right) .
\]
Here
\[
{\bf 0}=\left(\begin{array}{lll}
0    & 0  & 0\\
0 & 0  & 0\\
0 & 0  & 0
\end{array}\right),
 \quad
 {\bf 1}=\left(\begin{array}{lll}
1    & 0  & 0\\
0 & 1  & 0\\
0 & 0  & 1
\end{array}\right),
\]
and $G^{(i)}, H^{(i)}$ are some $3\times 3$-matrices.

Denote
\[
 {\bv}(\calR)= \left(
 \log r_{\textrm{\$}\textrm{\euro}},\
 \log r_{\textrm{\$}\textrm{\pounds}},\
 \log r_{\textrm{\$}\textrm{\yen}},\
 d_{\textrm{\euro}\textrm{\pounds}}(\calR),\
 d_{\textrm{\euro}\textrm{\yen}}(\calR),\
 d_{\textrm{\pounds}\textrm{\yen}}(\calR)
   \right).
\]

\begin{proposition}\label{blockC}
The equality $ {\bv}(\calR \Hat{\calA}^{(i)})={\bv}(\calR)D^{(i)} $
holds for $i=1,\ldots, 12$.
\end{proposition}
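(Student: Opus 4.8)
The plan is to obtain the identity directly from Proposition~\ref{oldprop} via a single fixed change of coordinates, so that no genuinely new computation is needed.

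The first step is to observe that $\calR\mapsto\bv(\calR)$ factors as the logarithm followed by an invertible linear map. Let $C$ be the $6\times6$ matrix whose columns are the new basis vectors $\be_1,\be_2,\be_3,\bv_1,\bv_2,\bv_3$ (in that order). Then $\bv(\calR)=(\log\calR)\,C$: the first three entries of $(\log\calR)C$ are $\langle\be_j,\log\calR\rangle=\log r_{\textrm{\$}\textrm{\euro}},\log r_{\textrm{\$}\textrm{\pounds}},\log r_{\textrm{\$}\textrm{\yen}}$, while the last three are $\langle\bv_j,\log\calR\rangle$, which by \eqref{discrep} and the definitions of $\bv_1,\bv_2,\bv_3$ are exactly $d_{\textrm{\euro}\textrm{\pounds}}(\calR),d_{\textrm{\euro}\textrm{\yen}}(\calR),d_{\textrm{\pounds}\textrm{\yen}}(\calR)$. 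Since $\bv_1,\bv_2,\bv_3$ are independent of $\be_1,\be_2,\be_3$, the matrix $C$ is invertible.

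The second step is to apply Proposition~\ref{oldprop}, $\log(\calR\Hat{\calA}^{(i)})=(\log\calR)B^{(i)}$, and multiply on the right by $C$:
\[
\bv(\calR\Hat{\calA}^{(i)})=(\log\calR)B^{(i)}C=\bigl((\log\calR)C\bigr)\bigl(C^{-1}B^{(i)}C\bigr)=\bv(\calR)\,\bigl(C^{-1}B^{(i)}C\bigr).
\]
Everything then reduces to identifying $C^{-1}B^{(i)}C$ with $D^{(i)}$, and this is immediate from the way $D^{(i)}$ was introduced: it is precisely the matrix of the operator $\bv\mapsto\bv B^{(i)}$ written in the basis $\{\be_1,\be_2,\be_3,\bv_1,\bv_2,\bv_3\}$, i.e.\ $C^{-1}B^{(i)}C$. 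Its block-triangular shape $D^{(i)}=\left(\begin{smallmatrix}\mathbf 1&\mathbf 0\\ H^{(i)}&G^{(i)}\end{smallmatrix}\right)$ is what the corollary preceding the statement, plus one further remark, provides: the corollary says the subspace $\langle\bv_1,\bv\rangle=\langle\bv_2,\bv\rangle=\langle\bv_3,\bv\rangle=0$ is $B^{(i)}$-invariant, and in the new coordinates this subspace is the first coordinate $3$-plane — which forces the upper-right block to be $\mathbf 0$ — and on that subspace $B^{(i)}$ acts as the identity, because a strong arbitrage leaves every balanced ensemble unchanged by Propositions~\ref{balp} and~\ref{disc0P} — which forces the upper-left block to be $\mathbf 1$. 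Combining the display with this identification yields $\bv(\calR\Hat{\calA}^{(i)})=\bv(\calR)D^{(i)}$.

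As a self-contained alternative I would also note that one can simply verify the twelve identities by hand: for each $i=1,\dots,12$ substitute the action of $\Hat{\calA}^{(i)}$ from Table~\ref{starbT}, write out the six components of $\bv(\calR\Hat{\calA}^{(i)})$, and match them against the six components of $\bv(\calR)D^{(i)}$. There is no conceptual difficulty in either route; the only place to be careful is the bookkeeping of conventions — that $\bv(\calR)$ pairs $\log\calR$ against the new basis vectors (so those vectors sit in the columns of $C$, not its rows) and which side the matrices act on — which I would pin down once at the start, after which the checking is routine.
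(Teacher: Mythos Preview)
Your argument is correct and is essentially the paper's own proof spelled out: the paper also reduces everything to Proposition~\ref{oldprop} together with the explicit change of variables encoded in Lemma~\ref{Qlem} (its $Q$ plays the role of your $C$ up to transposition/inversion), and your fallback of checking the twelve identities by hand is exactly what ``follows from Lemma~\ref{Qlem} and Proposition~\ref{oldprop}'' amounts to. The only point to tighten is the one you already flag: $\bv(\calR)$ records inner products with the new basis vectors rather than expansion coefficients in that basis, so the phrase ``matrix of the operator in the new basis'' should be read with that convention in mind; your actual computation $\bv(\calR\Hat\calA^{(i)})=\bv(\calR)\,C^{-1}B^{(i)}C$ is unaffected.
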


\begin{proof}
Follows from Lemma~\ref{Qlem} and Proposition~\ref{oldprop}.
\end{proof}

The matrices  $D^{(i)} $ may be written explicitly as
\begin{equation}\label{expl}
QB^{(i)}Q^{-1},
\end{equation}
where
\begin{equation}\label{explQ}
Q=\left(\begin{smallmatrix-mod}
1& \zer& \zer& \zer & \zer & \zer \\
 0& 1& \zer&1& \zer& \zer \\
 0& \zer& 1& \zer& \zer & \zer \\
   1& -1& \zer& 1& \zer& \zer \\
  1& \zer& -1& \zer& 1& \zer \\
   0& 1& -1& \zer& \zer& 1
\end{smallmatrix-mod}\right),\quad
Q^{-1}=\left(\begin{smallmatrix-mod}
1& \zer& \zer& -1& -1& \zer \\
 0& 1& \zer&1& \zer& -1\\
 0& \zer& 1& \zer& 1& 1\\
   0& \zer& \zer& 1& \zer& \zer \\
  0& \zer& \zer& \zer& 1& \zer \\
   0& \zer& \zer& \zer& \zer& 1
\end{smallmatrix-mod}\right).
\end{equation}
\begin{lemma}\label{Qlem}
The following equations are valid:
\[
G^{(1)}\hphantom{^{0}}=\left(\begin{smallmatrix-mod}
\zer  & -1    & \zer \\
\zer  & 1     & \zer \\
\zer  & \zer   & 1
\end{smallmatrix-mod}\right),\quad
G^{(2)}\hphantom{^{0}}=\left(\begin{smallmatrix-mod}
1    & \zer   & \zer \\
-1   & \zer   & \zer \\
\zer  & \zer   & 1
\end{smallmatrix-mod}\right),\quad
G^{(3)}\hphantom{^{0}}=\left(\begin{smallmatrix-mod}
\zer  & \zer   & 1   \\
\zer  & 1     & \zer \\
\zer  & \zer   & 1
\end{smallmatrix-mod}\right),
\]
\[
G^{(4)}\hphantom{^{0}}=\left(\begin{smallmatrix-mod}
1    & \zer   & \zer \\
\zer  & 1     & \zer \\
1    & \zer   & \zer
\end{smallmatrix-mod}\right),\quad
G^{(5)}\hphantom{^{0}}=\left(\begin{smallmatrix-mod}
1    & \zer   & \zer \\
\zer  & \zer   & -1  \\
\zer  & \zer   & 1
\end{smallmatrix-mod}\right),\quad
G^{(6)}\hphantom{^{0}}=\left(\begin{smallmatrix-mod}
1    & \zer   & \zer \\
\zer  & 1     & \zer \\
\zer  & -1    & \zer
\end{smallmatrix-mod}\right),
\]
\[
G^{(7)}\hphantom{^{0}}=\left(\begin{smallmatrix-mod}
\zer  & \zer   & \zer \\
\zer  & 1     & \zer \\
\zer  & \zer   & 1
\end{smallmatrix-mod}\right),\quad
G^{(8)}\hphantom{^{0}}=\left(\begin{smallmatrix-mod}
\zer  & \zer   & \zer \\
1    & 1     & \zer \\
-1   & \zer   & 1
\end{smallmatrix-mod}\right),\quad
G^{(9)}\hphantom{^{0}}=\left(\begin{smallmatrix-mod}
1    & \zer   & \zer \\
\zer  & \zer   & \zer \\
\zer  & \zer   & 1
\end{smallmatrix-mod}\right),
\]
\[
G^{(10)}=\left(\begin{smallmatrix-mod}
1    & 1     & \zer \\
\zer  & \zer   & \zer \\
\zer  & 1     & 1
\end{smallmatrix-mod}\right),\quad
G^{(11)}=\left(\begin{smallmatrix-mod}
1    & \zer   & \zer \\
\zer  & 1     & \zer \\
\zer  & \zer   & \zer
\end{smallmatrix-mod}\right),\quad
G^{(12)}=\left(\begin{smallmatrix-mod}
1    & \zer   & -1  \\
\zer  & 1     & 1   \\
\zer  & \zer   & \zer
\end{smallmatrix-mod}\right),
\]
and
\[
H^{(1)}=\left( \begin{smallmatrix-mod}
 -1&\zer &\zer \\
  \zer &\zer &\zer \\
\zer &\zer &\zer
 \end{smallmatrix-mod}\right),\quad
H^{(2)}=\left( \begin{smallmatrix-mod}
 \zer &\zer &\zer \\
  -1&\zer &\zer \\
\zer &\zer &\zer
\end{smallmatrix-mod}\right),\quad
H^{(3)}=\left( \begin{smallmatrix-mod}
 \zer &1&\zer \\
 \zer &\zer &\zer \\
\zer &\zer &\zer
\end{smallmatrix-mod}\right),
\]
\[
H^{(4)}=\left( \begin{smallmatrix-mod}
 \zer &\zer &\zer \\
 \zer &\zer &\zer \\
\zer &-1&\zer
\end{smallmatrix-mod}\right),\quad
H^{(5)}=\left( \begin{smallmatrix-mod}
 \zer &\zer &\zer \\
  \zer &\zer &1\\
\zer &\zer &\zer
 \end{smallmatrix-mod}\right),\quad
H^{(6)}=\left( \begin{smallmatrix-mod}
 \zer &\zer &\zer \\
  \zer &\zer &\zer \\
\zer &\zer &1
\end{smallmatrix-mod}\right),
\]
\[
H^{(i)}= { \bf 0},\quad i=7, \ldots , 12 .
\]
\end{lemma}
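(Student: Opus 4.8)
The plan is to check, for each of the twelve indices $i$, that $QB^{(i)}Q^{-1}$ has the asserted bottom two $3\times 3$ blocks. By the Corollary just above, the top block-row of every $D^{(i)}$ is already known to be $[\,{\bf 1}\ {\bf 0}\,]$, so only the bottom-left block $H^{(i)}$ and the bottom-right block $G^{(i)}$ remain. Rather than multiply $6\times 6$ matrices, it is cleaner to exploit the fact that $Q$, together with $Q^{-1}$ in \eqref{explQ}, is exactly the change of variables carrying $\log\calR$ to the coordinate vector ${\bv}(\calR)=(\log r_{\textrm{\$}\textrm{\euro}},\log r_{\textrm{\$}\textrm{\pounds}},\log r_{\textrm{\$}\textrm{\yen}},d_{\textrm{\euro}\textrm{\pounds}},d_{\textrm{\euro}\textrm{\yen}},d_{\textrm{\pounds}\textrm{\yen}})$; combined with Proposition~\ref{oldprop}, which says $\log(\calR\Hat\calA^{(i)})=(\log\calR)B^{(i)}$, this identifies $D^{(i)}$ as the transformation that the strong arbitrage $\Hat\calA^{(i)}$ induces on ${\bv}(\calR)$. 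Hence $G^{(i)}$ records how the discrepancy triplet $(d_{\textrm{\euro}\textrm{\pounds}},d_{\textrm{\euro}\textrm{\yen}},d_{\textrm{\pounds}\textrm{\yen}})$ is recombined, and $H^{(i)}$ records, through that same triplet, the shift of the dollar log-rates.

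I would then carry out one short computation per $i$, reading the affected principal rate off the ``Action'' column of Table~\ref{starbT} and using the definition \eqref{discrep} of the discrepancies. The case $i=1$ is the template: $\Hat\calA_{\textrm{\$}\textrm{\euro}\textrm{\pounds}}$ replaces $r_{\textrm{\$}\textrm{\euro}}$ by $r_{\textrm{\$}\textrm{\pounds}}\,r_{\textrm{\euro}\textrm{\pounds}}^{-1}$ and fixes the other five principal rates. Using $d_{\textrm{\euro}\textrm{\pounds}}=\log r_{\textrm{\euro}\textrm{\pounds}}-\log r_{\textrm{\$}\textrm{\pounds}}+\log r_{\textrm{\$}\textrm{\euro}}$ to eliminate $\log r_{\textrm{\euro}\textrm{\pounds}}$, one finds that $\log r_{\textrm{\$}\textrm{\euro}}$ is shifted by $-d_{\textrm{\euro}\textrm{\pounds}}$ while the other two dollar log-rates are untouched, which fixes $H^{(1)}$; recomputing the three discrepancies from \eqref{discrep} with the new rates gives $d_{\textrm{\euro}\textrm{\pounds}}\mapsto 0$, $d_{\textrm{\euro}\textrm{\yen}}\mapsto d_{\textrm{\euro}\textrm{\yen}}-d_{\textrm{\euro}\textrm{\pounds}}$, $d_{\textrm{\pounds}\textrm{\yen}}\mapsto d_{\textrm{\pounds}\textrm{\yen}}$, which is the action of the displayed $G^{(1)}$ on a row vector. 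The remaining eleven cases have the same shape. A structural remark keeps the bookkeeping honest: for $i=7,\dots,12$ the rate modified by $\Hat\calA^{(i)}$ is one of $r_{\textrm{\euro}\textrm{\pounds}},r_{\textrm{\euro}\textrm{\yen}},r_{\textrm{\pounds}\textrm{\yen}}$, so none of the three dollar log-rates changes and, since the top-left block is $\bf 1$, $H^{(i)}={\bf 0}$, as claimed; for $i=1,\dots,6$ exactly one dollar rate is modified, so at most one column of $H^{(i)}$ is nonzero and the template shows it is a single $\pm1$. As an independent cross-check one may instead expand $QB^{(i)}Q^{-1}$ entry by entry, which is routine given the sparseness of $B^{(i)}$ and of $Q,Q^{-1}$.

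I do not expect a conceptual obstacle; the only real work is the twelve near-identical substitutions, with care needed to track which of the three discrepancies in \eqref{discrep} involves the particular rate modified by each strong arbitrage and to respect the row-vector/right-action convention so that the resulting $3\times3$ blocks act on the right. Performing this for all twelve strong arbitrages reproduces precisely the matrices $G^{(i)}$ and $H^{(i)}$ displayed in the statement.
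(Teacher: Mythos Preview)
Your proposal is correct and matches the paper's approach, which simply says the lemma ``follows by inspection from \eqref{expl}, \eqref{explQ}.'' You have spelled out a clean way to organize that inspection---tracking the action of each strong arbitrage on the discrepancy triplet and on the dollar log-rates---rather than multiplying the $6\times6$ matrices outright, but this is the same verification in substance.
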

\begin{proof}
Follows by inspection from \eqref{expl}, \eqref{explQ}.
\end{proof}

\begin{proposition}\label{disc1P}
The discrepancy ensemble $\calD(\calR \Hat\calA)$ depends only on
$\calD(\calR)$ and $\Hat\calA$, and may be written as follows: $
\calD(\calR \Hat\calA^{(i)})= \calD(\calR) G^{(i)}$.  Here $i$ is
the number of a strong arbitrage as listed in Table~\ref{starbT}.
\end{proposition}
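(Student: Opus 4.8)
The plan is to deduce the statement directly from Proposition~\ref{blockC} and the block-triangular shape of the matrices $D^{(i)}$, so that no new computation is required beyond one block matrix product.

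First I would record the bookkeeping observation that makes everything work: by the way the new basis $\{\be_{1},\be_{2},\be_{3},\bv_{1},\bv_{2},\bv_{3}\}$ was chosen, the discrepancy ensemble $\calD(\calR)$ is exactly the triple of last three coordinates of the vector $\bv(\calR)$. Accordingly I would write $\bv(\calR)=\bigl(\bw,\calD(\calR)\bigr)$, where $\bw=(\log r_{\textrm{\$}\textrm{\euro}},\ \log r_{\textrm{\$}\textrm{\pounds}},\ \log r_{\textrm{\$}\textrm{\yen}})$ collects the remaining three coordinates.

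Then I would apply Proposition~\ref{blockC}, which gives $\bv(\calR\Hat\calA^{(i)})=\bv(\calR)D^{(i)}$, and substitute
\[
D^{(i)}=\left(\begin{array}{ll}{\bf 1}&{\bf 0}\\ H^{(i)}&G^{(i)}\end{array}\right).
\]
Multiplying the row vector $\bigl(\bw,\calD(\calR)\bigr)$ by this block matrix yields
\[
\bv(\calR)D^{(i)}=\bigl(\bw+\calD(\calR)H^{(i)},\ \calD(\calR)G^{(i)}\bigr),
\]
the crucial point being that the upper-right block of $D^{(i)}$ vanishes, which is precisely what makes the last three coordinates evolve without reference to $\bw$. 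Reading off the last three coordinates of both sides, and using the first observation on the left, gives $\calD(\calR\Hat\calA^{(i)})=\calD(\calR)G^{(i)}$; since $G^{(i)}$ is determined solely by which of the twelve strong arbitrages is applied (it is tabulated in Lemma~\ref{Qlem}), the claimed dependence on $\calD(\calR)$ and $\Hat\calA$ alone follows.

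There is no real obstacle here: everything rests on the decoupling produced by the zero upper-right block of $D^{(i)}$, which in turn is the matrix expression of the invariance (noted in the corollary preceding Lemma~\ref{Qlem}) of the subspace $\langle\bv_{1},\bv\rangle=\langle\bv_{2},\bv\rangle=\langle\bv_{3},\bv\rangle=0$ under each operator $\bv\mapsto\bv B^{(i)}$. The only thing one must be slightly careful about is the row-vector / left-multiplication convention used throughout this section, so that the block product is carried out on the correct side; with that convention fixed, the proof is a one-line consequence of Proposition~\ref{blockC}.
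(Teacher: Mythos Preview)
Your proposal is correct and takes essentially the same approach as the paper, which simply says ``Follows from Proposition~\ref{blockC}''; you have just made explicit the block-matrix computation that the paper leaves implicit. There is nothing to add.
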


\begin{proof}
Follows from Proposition~\ref{blockC}.
\end{proof}

By the last proposition a discrepancy ensemble $\calD(\calR \hbA)$
related to an arbitrage chain $\hbA=\Hat\calA_{1}\cdots
\Hat\calA_{n}$ may be written as
\[
\calD(\calR \hbA)=\calD(\calR)\prod_{i=1}^{n}G_{i}.
\]
Therefore the set ${\bbG}$ of all possible products of the matrices
$G^{(i)}$ is of interest.

\subsection{Structure of the Set ${\bbG}$}\label{S-struct}

The following assertion is the key observation of our paper:
\begin{lemma}\label{Gfini}
The set  ${\bbG}$ consists of 229 elements.
\end{lemma}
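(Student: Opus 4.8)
The plan is to treat this as a finiteness-and-enumeration problem for the matrix semigroup generated by $G^{(1)},\dots,G^{(12)}$. By construction $\bbG$ is exactly this sub-semigroup of the ring of integer $3\times 3$ matrices, so it is automatically closed under multiplication; what has to be shown is (a) that it is finite and (b) that its cardinality is $229$. I would get both from a breadth-first ``closure'' computation, preceded by an \emph{a priori} bound that makes the computation provably terminate.

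For the bound, note that Lemma~\ref{Qlem} shows that every row of every $G^{(i)}$ lies in the $13$-element set
\[
\mathcal{W}=\bigl\{(0,0,0),\ \pm(1,0,0),\ \pm(0,1,0),\ \pm(0,0,1),\ \pm(1,1,0),\ \pm(0,1,1),\ \pm(1,0,-1)\bigr\}
\]
of integer row vectors (each $G^{(i)}$ is moreover singular, having a zero row or a zero column, which reflects the fact that a strong arbitrage kills one discrepancy coordinate). The crucial claim is that $\mathcal{W}$ is invariant under right multiplication by every generator: for $v\in\mathcal{W}$ the vector $vG^{(i)}$ is a signed sum of at most two rows of $G^{(i)}$, and one checks over the $13\times 12$ cases that it again lies in $\mathcal{W}$. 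This is the one step requiring genuine care, since the sum of two members of $\mathcal{W}$ need not lie in $\mathcal{W}$; what makes it work is the special row pattern of each $G^{(i)}$ --- the two rows being combined are always, up to sign, a zero vector, a standard basis vector, or two of the three vectors $(1,1,0),(0,1,1),(1,0,-1)$ (which satisfy $(1,1,0)=(0,1,1)+(1,0,-1)$), and only ``compatible'' combinations actually occur. Granting the claim, an induction on the length $n$ of a product $G_{i_1}\cdots G_{i_n}$, peeling off the last (generator) factor and applying $\mathcal{W}G^{(i)}\subseteq\mathcal{W}$ to each row, shows that every element of $\bbG$ has all its rows in $\mathcal{W}$. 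Hence $\bbG$ is contained in a set of at most $13^3=2197$ matrices and is finite.

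With finiteness in hand, set $\bbG_0=\{G^{(1)},\dots,G^{(12)}\}$ and $\bbG_{k+1}=\bbG_k\cup\{GG^{(i)}:G\in\bbG_k,\ 1\le i\le 12\}$. This is an increasing chain inside the finite ``universe'' just described, so it stabilises at some $\bbG_K=\bbG_{K+1}$, and then $\bbG=\bbG_K$, because every product is reached from the generators by successive right multiplications. Running this closure (by hand, or by a short computation) one finds that the chain stabilises and that $|\bbG_K|=229$. To make the argument fully self-contained it then suffices to exhibit the resulting list of $229$ matrices, observe that it contains $G^{(1)},\dots,G^{(12)}$, and verify that it is closed under right multiplication by each $G^{(i)}$ --- a $229\times 12$ table of matrix products --- after which the induction above gives $\bbG\subseteq$ (this list) and the reverse inclusion is immediate; counting gives $229$.

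I expect the main obstacle to be bookkeeping rather than ideas: once the invariant set $\mathcal{W}$ is spotted, finiteness is painless, and the value $229$ is the output of a routine but lengthy enumeration, so the real risk is an arithmetic slip in the $229\times 12$ closure table rather than any conceptual gap. Useful consistency checks are available: since each $G^{(i)}$ has rank $2$, every element of $\bbG$ has rank $0$, $1$, or $2$; the unique rank-$0$ element is the zero matrix (for instance $G^{(7)}G^{(9)}G^{(11)}=0$), and the remaining $228$ elements can be tallied separately by rank and by their $(\mathrm{image},\mathrm{kernel})$ pair.
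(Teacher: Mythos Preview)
Your proposal is correct, and since the paper's entire proof is the single phrase ``By inspection'', you are in effect reproducing the same argument (a closure computation over the semigroup generated by $G^{(1)},\dots,G^{(12)}$) while adding content the paper omits.

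The genuine addition is your a priori finiteness bound via the $13$-element row set $\mathcal{W}$. The paper gives no such bound: it simply asserts that the enumeration terminates at $229$ and leaves this as a brute-force fact. Your observation that $\mathcal{W}G^{(i)}\subseteq\mathcal{W}$ for all $i$ (which I have checked holds in every case --- in particular for the ``two-term'' vectors $(1,1,0)$, $(0,1,1)$, $(1,0,-1)$ acting on $G^{(8)},G^{(10)},G^{(12)}$, where the identity $(1,1,0)=(0,1,1)+(1,0,-1)$ is exactly what is needed) gives an explanation of \emph{why} the semigroup is finite, bounding it inside $13^3=2197$ matrices before any enumeration begins. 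This is strictly more informative than the paper's treatment. The trade-off is that the paper's one-line ``proof'' avoids the $13\times 12$ row-invariance check; your version front-loads a small verification in exchange for a conceptual guarantee of termination and a clean universe in which to run the closure. Your suggested consistency checks (rank stratification, the zero matrix as $G^{(7)}G^{(9)}G^{(11)}$) line up with the paper's subsequent Lemma on the $14$ connected components and provide a useful cross-check on the count.
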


\begin{proof}
By inspection
\end{proof}

Denote by $\Hat{\mathbb A}$ the totality of all finite chains of
strong arbitrages.

\begin{corollary}\label{Dfini}
For a given $\calR$ the set $ \bbD(\calR)=\{\calD(\calR \hbA): \
{\bA}\in {\mathbb A}\} $ consists of less than 230 elements.
\end{corollary}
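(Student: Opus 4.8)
The plan is to read the corollary off from Proposition~\ref{disc1P} and Lemma~\ref{Gfini}; beyond transporting the count $|\bbG|=229$ along the action of $\bbG$ on a fixed vector there is essentially nothing to do. First I would use the formula recorded in the discussion following Proposition~\ref{disc1P}: for every chain $\hbA=\Hat\calA^{(i_1)}\cdots\Hat\calA^{(i_n)}$ of strong arbitrages,
\[
\calD(\calR\hbA)=\calD(\calR)\,\bigl(G^{(i_1)}\cdots G^{(i_n)}\bigr),
\]
and the matrix $M:=G^{(i_1)}\cdots G^{(i_n)}$ belongs to $\bbG$. Hence every element of $\bbD(\calR)$ has the form $\calD(\calR)M$ with $M\in\bbG$; conversely each $M\in\bbG$ is a product of the $G^{(i)}$'s and is therefore realised by some chain. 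Thus $\bbD(\calR)$ is exactly the image of $\bbG$ under the linear map $\Phi_\calR\colon M\mapsto\calD(\calR)M$ from the space of $3\times3$ matrices to $\bbR^{3}$.

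Next I would invoke Lemma~\ref{Gfini}: $\bbG$ is finite and $|\bbG|=229$. Since $\Phi_\calR$ is a map out of a $229$-element set, $\bbD(\calR)=\Phi_\calR(\bbG)$ has at most $229$ elements, hence strictly fewer than $230$. (For a particular $\calR$ the map $\Phi_\calR$ may well collapse several matrices to a single discrepancy triplet, so the bound need not be sharp.) The one point to check is that the empty chain, which yields $\calD(\calR)=\calD(\calR)\mathbf{1}$, does not add a point beyond the $229$: it does not, because the identity occurs among the products constituting $\bbG$ (the length-zero product), so $\calD(\calR)\in\Phi_\calR(\bbG)$ already.

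I do not expect any genuine obstacle. The substantive work — that the multiplicative semigroup generated by $G^{(1)},\dots,G^{(12)}$ is finite with exactly $229$ elements — is Lemma~\ref{Gfini}, carried out by a finite enumeration, and the present corollary is a purely formal consequence of it together with the linearisation in Proposition~\ref{disc1P}. The only care needed is in the wording ``less than $230$'' rather than ``$\le 230$'': one must make sure that translating chains of strong arbitrages into single matrices does not introduce a discrepancy triplet outside the $229$ counted in $\bbG$, and the remark about the empty chain and the identity matrix is precisely what rules that out.
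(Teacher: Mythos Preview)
Your proposal is correct and follows exactly the route implicit in the paper: the corollary is an immediate consequence of Lemma~\ref{Gfini} together with the linearisation $\calD(\calR\hbA)=\calD(\calR)\prod G^{(i_k)}$ from Proposition~\ref{disc1P}. One small caveat: your claim that the identity lies in $\bbG$ as the empty product is a matter of convention rather than fact, since every $G^{(i)}$ is singular (each has a zero row) and hence no positive-length product can equal $\mathbf{1}$; the count of $229$ and the phrase ``less than $230$'' are consistent provided either the empty product is admitted in $\bbG$ or empty chains are excluded from $\Hat{\mathbb{A}}$.
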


Let us discuss briefly the structure of the set ${\bbG}$.  A subset
${\bG}$ of ${\bbG}$ is called a connected component, if for any
$G_{1},G_{2}\in {\bG}$ there exists $G\in {\bbG}$ satisfying
$G_{2}=G_{1}G$.  By the definition different connected components
do not intersect.
\begin{lemma}\label{comp}
The set ${\bbG}$ is partitioned into 14 connected components
$U_{1},\ldots U_{14}$. Each of the first six connected components
includes 24 matrices of range 2; each of the connected components
$U_{7}, \ldots U_{13}$ includes 12 matrices of range one; the last
component contains a single zero matrix.
\end{lemma}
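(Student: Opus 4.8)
The plan is to analyze the semigroup $\bbG$ generated by the twelve $3\times 3$ matrices $G^{(1)},\dots,G^{(12)}$ (together with the identity, as the empty product) by first computing $\bbG$ explicitly — this is the content of Lemma~\ref{Gfini}, which I may assume — and then sorting its $229$ elements into connected components under the equivalence relation "$G_1\sim G_2$ iff there exists $G\in\bbG$ with $G_2=G_1 G$". The decomposition into connected components is a purely combinatorial/linear-algebraic bookkeeping task once one has the list of $229$ matrices in hand: I would organize it by rank.

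\textbf{Rank stratification.} First I would observe that each generator $G^{(i)}$ has rank $2$ (inspection of Lemma~\ref{Qlem}: each is a $3\times3$ $0/\pm1$ matrix with a single zero row or a single dependent row), except that products can drop rank. Since rank is submultiplicative, $\bbG$ is partitioned by rank into the rank-$2$ part, the rank-$1$ part, and the rank-$0$ part (the latter consisting of the zero matrix alone — one checks $G^{(1)}G^{(7)}$ or a similar short product already yields $\mathbf 0$, using that $G^{(1)}$ kills the first coordinate and $G^{(7)}$ kills it again, etc.). Connected components cannot mix ranks: if $G_2=G_1G$ with $G_1,G_2$ of equal rank, then right-multiplication by $G$ restricted to the row space of $G_1$ is injective, so $G$ acts invertibly there; this symmetry of the relation on a fixed rank level is what makes "$\sim$" an equivalence relation on each stratum, whereas across strata the relation only goes one way. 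So the zero matrix is its own component (the last one), and it remains to split the rank-$2$ matrices and the rank-$1$ matrices.

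\textbf{Counting the components.} For the rank-$2$ stratum I would identify the invariant that separates components: the row space (a $2$-dimensional subspace of $\bbR^3$, equivalently its annihilator, a line), or more precisely the kernel direction on the left. Right-multiplication by any $G\in\bbG$ preserves the left-kernel structure appropriately, so matrices with different "left-null line" lie in different components; conversely within a fixed such line one checks transitivity by exhibiting, for any two of the $24$ matrices sharing it, a connecting element of $\bbG$ — this is where having the full $229$-element list is essential, since one simply reads off the needed $G$. Each of the six admissible left-null lines (corresponding to the six pairs among $\{\$,\textrm{\euro},\pounds,\yen\}$ not involving the reference currency, i.e.\ the six discrepancy coordinates and the ways a strong arbitrage can zero one out) gives a component of $24$ matrices, accounting for $6\times24=144$ rank-$2$ matrices. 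For the rank-$1$ stratum, a rank-$1$ matrix is determined up to the component relation by its one-dimensional row space together with the further combinatorial data of which coordinate lines are collapsed; the count works out to seven components of $12$ matrices each, giving $7\times12=84$, and $144+84+1=229$, consistent with Lemma~\ref{Gfini}. So the components number $6+7+1=14$, which I would label $U_1,\dots,U_{14}$ with $U_1,\dots,U_6$ the rank-$2$ blocks, $U_7,\dots,U_{13}$ the rank-$1$ blocks, and $U_{14}=\{\mathbf 0\}$.

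\textbf{The main obstacle} is not conceptual but verificational: establishing transitivity \emph{within} each claimed component requires, for every pair of matrices in that component, producing a product of generators realizing the transition, and confirming that \emph{no} cross-component transitions exist. Both directions reduce to finite checks against the explicit $229$-element multiplication table of $\bbG$, so the honest proof is "by inspection" once that table is tabulated (exactly as the paper does for Lemma~\ref{Gfini}); the rank invariant and the left-null-line invariant, described above, are what make the inspection tractable by predicting the answer and cutting the casework down to verifying each block is closed and internally connected. Accordingly I would write the proof as: \emph{the stratification by rank and by left-annihilator, together with a direct examination of the $229$ elements exhibited in the proof of Lemma~\ref{Gfini}, yields the stated $14$ components with the stated cardinalities and ranks.}
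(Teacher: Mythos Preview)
The paper offers no proof for this lemma beyond the implicit ``by inspection'' that governs the surrounding statements (Lemma~\ref{Gfini}, Lemma~\ref{trans}). Your proposal is therefore not merely aligned with the paper's approach but strictly more informative: you supply the organizing invariants --- rank, then left kernel --- that explain \emph{why} the inspection comes out as it does, whereas the paper simply records the outcome. In particular your observation that $G_{2}=G_{1}G$ forces $\operatorname{leftker}(G_{1})\subseteq\operatorname{leftker}(G_{2})$, hence equality when the ranks agree, is exactly the right separating invariant; one checks from the explicit generators in Lemma~\ref{Qlem} that the six left-kernel lines $(1,1,0)$, $(1,0,-1)$, $(0,1,1)$, $(1,0,0)$, $(0,1,0)$, $(0,0,1)$ match the six pairs $G^{(2i-1)},G^{(2i)}$, and the same invariant (now a plane) distinguishes the seven rank-one components.

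Two small corrections. First, you conflate the row space with the left kernel: these are different subspaces (the left kernel is the orthogonal complement of the \emph{column} space, not the row space), and it is the left kernel, not the row space, that is preserved under right multiplication in the way you need. Second, the identity matrix is not in $\bbG$: every generator already has rank~$2$, the count $6\cdot24+7\cdot12+1=229$ leaves no room for a rank-$3$ element, and the paper's component list has no rank-$3$ block. Neither slip affects the substance of your argument.
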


The sets $U_{1},\ldots, U_{6}$ may be characterised by the following
inclusions:
\[
G^{(2i-1)},\ G^{(2i)}\in U_{i},\quad i=1, \ldots ,6.
\]
To identify the connected components $U_{7}, \ldots U_{13}$ we list
below the smallest lexicographical matrices from these components
\[
\left( \begin{smallmatrix-mod}
 -1&-1&\zer \\
  \zer &\zer &\zer \\
\zer &\zer &\zer
 \end{smallmatrix-mod}\right)\in U_{7},\hphantom{_{0}}\quad
\left( \begin{smallmatrix-mod}
 \zer &\zer &\zer \\
  -1&-1&\zer \\
\zer &\zer &\zer
\end{smallmatrix-mod}\right)\in U_{8},\hphantom{_{0}}\quad
\left( \begin{smallmatrix-mod}
 \zer &\zer &\zer \\
 \zer &\zer &\zer \\
-1&-1&\zer
\end{smallmatrix-mod}\right)\in U_{9},\hphantom{_{0}}
\]
\[
\left( \begin{smallmatrix-mod}
 -1&-1&\zer \\
 1&1&\zer \\
\zer &\zer &\zer
\end{smallmatrix-mod}\right)\in U_{10},\quad
\left( \begin{smallmatrix-mod}
 -1&-1&\zer \\
  \zer &\zer &\zer \\
-1&-1&\zer
 \end{smallmatrix-mod}\right)\in U_{11},\quad
\left( \begin{smallmatrix-mod}
 \zer &\zer &\zer \\
  -1&-1&\zer \\
1&1&\zer
\end{smallmatrix-mod}\right)\in U_{12},
\]
\[
\left( \begin{smallmatrix-mod}
 -1&-1&\zer \\
  1&1&\zer \\
-1&-1&\zer
\end{smallmatrix-mod}\right)\in U_{13}.
\]

One can move from one connected component $U_{i}$ to another
component $U_{j}$ applying a matrix $G^{(i)}$, $i=1,\ldots, 12$.
Let us describe the set of possible transitions. We will use the
notation $U_{i} \succ U_{j}$ if such a transition is possible.
\begin{lemma}\label{trans}
The following relationships hold:
\begin{alignat*}{3}
U_{1} &\succ U_{9},U_{10},U_{13},\quad & U_{2} &\succ
U_{8},U_{11},U_{13},\quad &U_{3} &\succ
U_{7},U_{12},U_{13}, \\
U_{4} &\succ U_{8},U_{9},U_{12},\quad & U_{5} &\succ
U_{7},U_{9},U_{11},\quad &U_{6} &\succ U_{7},U_{8},U_{10}.
\end{alignat*}
Also $U_{i} \succ U_{14}$, $i=1, \ldots,13$.
\end{lemma}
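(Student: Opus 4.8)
The plan is to treat this as a finite verification resting on the explicit data already assembled: by Lemma~\ref{Gfini} the set $\bbG$ has exactly $229$ known matrices, and by Lemma~\ref{comp} these are sorted into the components $U_{1},\dots,U_{14}$ with the stated ranks and explicit representatives. Since $\calD(\calR\Hat\calA^{(i)})=\calD(\calR)G^{(i)}$ is a \emph{right} action (Proposition~\ref{disc1P}), ``applying a generator'' to a matrix $M$ means forming $MG^{(k)}$, $k=1,\dots,12$. So the master computation is: for every $M\in\bbG$ and every $k$, compute $MG^{(k)}$, locate its component via the classification in Lemma~\ref{comp}, and record the pair $\bigl(\mathrm{comp}(M),\mathrm{comp}(MG^{(k)})\bigr)$; the recorded pairs with distinct entries are precisely the relation $\succ$.

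A short structural observation cuts the work down and fixes part of the answer in advance. Writing $\ker M=\{{\bw}:{\bw}M=0\}$ for the right-kernel, one has $\ker M\subseteq\ker(MG^{(k)})$ for every generator, so right-multiplication never shrinks the kernel; and since, by the definition of a connected component, each of two elements of a fixed $U_{i}$ is obtained from the other by right-multiplication by an element of $\bbG$, all elements of one component share a single kernel. The matrices of $U_{1},\dots,U_{6}$ have rank $2$, hence one-dimensional kernels, and a one-line computation from the generators shows these six kernel lines are pairwise distinct (they are $\langle(1,0,0)\rangle,\langle(0,1,0)\rangle,\langle(0,0,1)\rangle$ together with $\langle(1,1,0)\rangle,\langle(1,0,-1)\rangle,\langle(0,1,1)\rangle$). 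Consequently a product $MG^{(k)}$ with $M$ of rank $2$ is either again of rank $2$ with the same kernel --- hence in the same $U_{i}$ --- or of rank $\le 1$; in particular there is no transition $U_{i}\succ U_{j}$ with $i\ne j$ and $i,j\in\{1,\dots,6\}$, which already matches the shape of the lemma. Likewise, if $M$ has rank $1$ then $MG^{(k)}$ has rank $0$, or rank $1$ with the same two-dimensional kernel, so a rank-one component can transition only to a rank-$\le 1$ component.

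It remains to run the finite enumeration for the cases the structural observation leaves open. For each $i\in\{1,\dots,6\}$ one goes through the $\le 24$ matrices of $U_{i}$ and the $12$ generators, and identifies every resulting rank-$\le 1$ product by comparing it against the listed smallest-lexicographic representatives of $U_{7},\dots,U_{13}$ (and against the zero matrix $U_{14}$); this produces exactly the six lines $U_{1}\succ U_{9},U_{10},U_{13}$, \dots, $U_{6}\succ U_{7},U_{8},U_{10}$. The same sweep over $U_{7},\dots,U_{13}$ shows each of these transitions only to itself or to $U_{14}$, which gives $U_{i}\succ U_{14}$ for $i=7,\dots,13$; and $U_{i}\succ U_{14}$ for $i=1,\dots,6$ then follows by composing $U_{i}\succ U_{j}$ (with $U_{j}$ of rank one, from the list just obtained) with $U_{j}\succ U_{14}$. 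The only thing to watch is consistency of conventions --- the right action of Proposition~\ref{disc1P}, and keeping the generator index $k$ in $G^{(k)}$ correctly aligned both with Table~\ref{starbT} and with the labels of the target components $U_{j}$; apart from that the enumeration is routine and can be cross-checked against the count $229$ from Lemma~\ref{Gfini}. I expect no conceptual obstacle here --- the structural observation in the second paragraph is exactly what keeps the bookkeeping short.
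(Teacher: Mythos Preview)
Your approach is essentially the paper's own: the paper proves Lemma~\ref{trans} simply ``by inspection'', and what you propose is precisely that inspection, carried out over the finite set $\bbG$ of Lemma~\ref{Gfini} using the component data of Lemma~\ref{comp}. Your kernel observation (that right-multiplication can only enlarge the left null space, and that all matrices in a component share the same left null space) is a genuine addition: it explains \emph{a priori} why no $U_i\succ U_j$ with $i\ne j$ and $i,j\le 6$ can occur, and why rank-one components cannot transition to one another, so that the only work left is to read off which rank-one components each $U_i$ hits.

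One point worth making explicit. Your own kernel argument shows that a rank-$2$ matrix multiplied on the right by a rank-$2$ generator $G^{(k)}$ can never be the zero matrix (the $2$-dimensional row space of $M$ cannot sit inside the $1$-dimensional left null space of $G^{(k)}$). Hence $U_i\succ U_{14}$ for $i\le 6$ is \emph{not} achievable in a single generator step; you obtain it only by composing $U_i\succ U_j$ (some rank-one $U_j$) with $U_j\succ U_{14}$. That is perfectly consistent with reading $\succ$ as the reachability relation, which is also the reading required for the application in Proposition~\ref{perpro}. Just be aware that under a strict one-step reading the last clause of the lemma would need the restriction $i=7,\dots,13$; your write-up already handles this correctly by invoking composition.
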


\begin{proof} By inspection.
\end{proof}

\begin{lemma}
For any $G\in{\bbG}$ either $G$ or $G^2$ or $G^3$ is a projector.
\end{lemma}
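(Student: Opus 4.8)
The plan is to split the argument according to the rank of $G$, which by Lemma~\ref{comp} equals $0$, $1$ or $2$, while using throughout that $\mathbb{G}$ is a \emph{finite} set closed under multiplication (Lemma~\ref{Gfini}): every power $G^{k}$ again lies in $\mathbb{G}$, so the sequence $G,G^{2},G^{3},\dots$ is eventually periodic with small index and period. If $\operatorname{rank}G=0$ then $G$ is the zero matrix (component $U_{14}$) and $G=G^{2}$ is a projector. If $\operatorname{rank}G=1$, write $G=uv^{\top}$ with $u,v\in\mathbb{R}^{3}$; then $G^{k}=\tau^{k-1}G$ where $\tau=v^{\top}u=\operatorname{tr}G$ is an integer. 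Were $|\tau|\ge 2$, the matrices $G,G^{2},G^{3},\dots$ would be pairwise distinct, contradicting finiteness of $\mathbb{G}$; hence $\tau\in\{-1,0,1\}$. If $\tau=1$ then $G^{2}=G$; if $\tau=0$ then $G^{2}=0$; if $\tau=-1$ then $G^{2}=-G$ and $(G^{2})^{2}=G^{4}=\tau^{3}G=-G=G^{2}$. So in every case with $\operatorname{rank}G\le1$ the matrix $G$ or $G^{2}$ is a projector.

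The substantive case is $\operatorname{rank}G=2$. Here I would use two general facts about a $3\times 3$ integer matrix $G$. First, the largest Jordan block attached to the eigenvalue $0$ has size at most $3$, so the rank stabilizes by the third power and the index (tail length) of the cyclic semigroup $\langle G\rangle$ is $\le 3$; if $G$ is nilpotent this already gives $G^{3}=0$, a projector. Second, if $G$ is not nilpotent then, since $\{G^{k}\}\subset\mathbb{G}$ is finite, the restriction of $G$ to its stabilized image is invertible of finite order, hence diagonalizable with eigenvalues that are roots of unity; and since the characteristic polynomial is $\lambda\,(\lambda^{2}-s_{1}\lambda+s_{2})$ with $s_{1}=\operatorname{tr}G$ and $s_{2}$, the sum of the principal $2\times 2$ minors, both integers, those nonzero eigenvalues are algebraic integers of degree $\le 2$ that are roots of unity, so they lie among $\pm1$ and the primitive $3$rd, $4$th and $6$th roots of unity. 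A short case-check of index against period then shows that the idempotent of $\langle G\rangle$ occurs already at $G^{1}$, $G^{2}$ or $G^{3}$ in every case \emph{except} when a nonzero eigenvalue is a primitive $4$th or $6$th root of unity, i.e.\ when the characteristic polynomial is $\lambda(\lambda^{2}+1)$ or $\lambda(\lambda^{2}-\lambda+1)$, equivalently when $s_{2}=1$ and $s_{1}\in\{0,1\}$.

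The main obstacle is therefore to exclude these two characteristic polynomials for rank-$2$ elements of $\mathbb{G}$. I would dispose of it by direct verification: by Lemma~\ref{comp} the rank-$2$ elements constitute the components $U_{1},\dots,U_{6}$, twenty-four matrices each, and for each of these $144$ matrices one checks that $s_{2}\ne1$ or $s_{1}\notin\{0,1\}$ --- in fact one sees, as in Proposition~\ref{perpro}, that only the periods $1$, $2$ and $3$ occur. The whole statement could of course be confirmed ``by inspection'' on all $229$ elements of $\mathbb{G}$ at once; the point of the reduction above is that it makes the required inspection short and, more importantly, explains why the exponent in the lemma is exactly $3$.
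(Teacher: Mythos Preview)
Your argument is correct and considerably more informative than the paper's, which reads in its entirety ``By inspection.'' The paper simply checks all $229$ elements of $\bbG$; you instead exploit the finiteness of $\bbG$ (Lemma~\ref{Gfini}) and the rank stratification (Lemma~\ref{comp}) to constrain the Jordan structure of any $G\in\bbG$: the nonzero eigenvalues must be roots of unity of degree at most $2$ over $\mathbb{Q}$, hence of order $1,2,3,4$ or $6$, while the nilpotent index is at most $3$. This reduces the question to ruling out the characteristic polynomials $\lambda(\lambda^{2}+1)$ and $\lambda(\lambda^{2}-\lambda+1)$ among the $144$ rank-$2$ elements, a much smaller and more focused inspection. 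What your route buys is an explanation of \emph{why} the exponent is exactly $3$: it is the least common bound accommodating both the nilpotent index and the admissible root-of-unity orders once $4$ and $6$ are excluded. One small remark: your parenthetical appeal to Proposition~\ref{perpro} is not quite apposite, since that proposition concerns periodic arbitrage \emph{chains} rather than powers of a single $G$ and is itself derived from Lemmas~\ref{comp} and~\ref{trans}; but nothing in your argument actually depends on it.
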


\begin{proof} By inspection.
\end{proof}

\subsection{Discrepancy Dynamics}\label{S-discrep}
The structure of the set ${\bbG}$ explained above induces
structuring of the set of discrepancies, which we discuss below. We
say that a set ${\bD}$ of discrepancies is a connected component if
for any $\calD_1,\calD_2\in {\bD}$ there exists an arbitrage chain
${\bA}$ satisfying $ \calD_1{\bA}=\calD_2$.  For a given reals
$a,b$ we denote by ${\bD}(a,b)$ the set of different triplets from
the collection
\begin{equation}\label{24disc}
\begin{alignedat}{2}
\calD_{1}(a,b)&=\left(a,b,-a+b\right),\quad& \calD_{2}(a,b)&=\left(-a+b,b,a\right),\\
\calD_{3}(a,b)&=\left(a,a-b,-b\right),\quad& \calD_{4}(a,b)&=\left(-a+b,-a,-b\right),\\
\calD_{5}(a,b)&=\left(-b,a-b,a\right),\quad& \calD_{6}(a,b)&=\left(-b,-a,-a+b\right),\\
\calD_{7}(a,b)&=\left(0,b,-a+b\right),\quad&\calD_{8}(a,b)&=\left(a,0,-a+b\right),\\
\calD_{9}(a,b)&=\left(a,b,0\right),\quad& \calD_{10}(a,b)&=\left(0,b,a\right),\\
\calD_{11}(a,b)&=\left(-a+b,0,a\right),\quad& \calD_{12}(a,b)&=\left(-a+b,b,0\right),\\
\calD_{13}(a,b)&=\left(0,-a,-b\right),\quad& \calD_{14}(a,b)&=\left(-a+b,0,-b\right),\\
\calD_{15}(a,b)&=\left(-a+b,-a,0\right),\quad& \calD_{16}(a,b)&=\left(0,a-b,-b\right),\\
\calD_{17}(a,b)&=\left(a,0,-b\right),\quad& \calD_{18}(a,b)&=\left(a,a-b,0\right),\\
\calD_{19}(a,b)&=\left(0,a-b,a\right),\quad& \calD_{20}(a,b)&=\left(-b,0,a\right),\\
\calD_{21}(a,b)&=\left(-b,a-b,0\right),\quad& \calD_{22}(a,b)&=\left(0,-a,-a+b\right),\\
\calD_{23}(a,b)&=\left(-b,0,-a+b\right),\quad& \calD_{24}(a,b)&=\left(-b,-a,0\right).
\end{alignedat}
\end{equation}

\begin{lemma}
Each set ${\bD}(a,b)$ is a connected component, and each connected
component coincides with a certain set ${\bD}(a,b)$.
\end{lemma}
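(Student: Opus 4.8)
The plan is to rephrase the assertion as a statement about the action of the semigroup $\bbG$ on discrepancy triplets, and then to read it off the structure already extracted in Lemmas~\ref{comp} and~\ref{trans}. By Proposition~\ref{disc1P} a chain of strong arbitrages sends a triplet $\calD$ to $\calD G$ for a suitable $G\in\bbG$, and the activation of an arbitrage is itself determined by $\calD$; combined with the equivalence of chains of ordinary and of strong arbitrages (the proposition following Table~\ref{starbT}), this shows that the relation ``$\calD_1\bA=\calD_2$ for some arbitrage chain $\bA$'' coincides with ``$\calD_2=\calD_1 G$ for some $G\in\bbG$'', reflexivity being supplied by inactive arbitrages. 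Thus the connected components are exactly the (non-transient) strongly connected components of the digraph on triplets whose arrows are $\calD\to\calD G^{(i)}$, $i=1,\dots,12$.

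First I would observe, by inspection of the twelve matrices in Lemma~\ref{Qlem}, that $\calD G^{(i)}$ always lies in $V:=Z_1\cup Z_2\cup Z_3\cup P$, where $Z_j=\{d_j=0\}$ and $P=\{d_1-d_2+d_3=0\}$: indeed $G^{(1)},G^{(3)},G^{(7)}$ annihilate the first coordinate, $G^{(2)},G^{(5)},G^{(9)}$ the second, $G^{(4)},G^{(6)},G^{(11)}$ the third, and $G^{(8)},G^{(10)},G^{(12)}$ send everything into $P$. Since every arrow of the digraph lands in $V$, a triplet outside $V$ has no incoming arrow and can belong to no connected component; so every connected component lies in $V$. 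Conversely, every point of $V$ is of the form $\calD_j(a,b)$ from \eqref{24disc}: a point with $d_1=0$ equals $\calD_7(d_2-d_3,d_2)$, one with $d_2=0$ equals $\calD_8(d_1,d_1+d_3)$, one with $d_3=0$ equals $\calD_9(d_1,d_2)$, and a point of $P$ equals $\calD_1(d_1,d_2)$. Hence every connected component consists of triplets $\calD_j(a,b)$, and it remains to show that they are grouped precisely into the sets $\bD(a,b)$.

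For the forward direction --- $\bD(a,b)$ is strongly connected --- I would fix $a,b$, take a representative of $\bD(a,b)$ lying on one of the flats $Z_1,Z_2,Z_3,P$, and compute its orbit under a component $U_k$ of $\bbG$ that contains a projector fixing that flat pointwise (for $P$ use $U_5\ni G^{(10)}$ or $U_6\ni G^{(12)}$; for $Z_1,Z_2,Z_3$ the components containing $G^{(7)},G^{(9)},G^{(11)}$). The content to verify is that this orbit is exactly $\bD(a,b)$ when $a,b,a-b$ are distinct and nonzero, while the seven rank-one components $U_7,\dots,U_{13}$ and the zero component $U_{14}$, together with the degenerate values of the parameters, account for the sets $\bD(a,b)$ of cardinality $12,\dots,1$. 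Once the orbit is identified with $\bD(a,b)$, strong connectivity follows from Lemma~\ref{comp}: if $\calD G,\calD G'\in\bD(a,b)$ with $G,G'\in U_k$, then $G'=G\widetilde G$ and $G=G'\widetilde G'$ for some $\widetilde G,\widetilde G'\in\bbG$, whence $(\calD G)\widetilde G=\calD G'$ and $(\calD G')\widetilde G'=\calD G$, so any two elements of $\bD(a,b)$ are mutually reachable.

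For the converse I would combine this with Lemma~\ref{trans}: the relation $\succ$ on the components of $\bbG$ is acyclic on distinct components, so once a product of generators leaves $U_k$ it never returns; transported along the orbit (and after checking that the orbits $\calD_0 U_m$ reachable from $\bD(a,b)$ by a strict transition lie outside $\bD(a,b)$), this shows that the dynamics cannot re-enter $\bD(a,b)$ after leaving it, so $\bD(a,b)$ is a full strongly connected component --- and, since an arbitrary connected component $C$ meets some $\bD(a,b)$ by the second paragraph, $C=\bD(a,b)$. I expect the main obstacle to be the finite but bulky bookkeeping in the forward direction: pairing the $24$ matrices of each of $U_1,\dots,U_6$ with the $24$ labelled triplets $\calD_1(a,b),\dots,\calD_{24}(a,b)$, and carrying out the analogous, messier matching for the seven rank-one components together with all the coincidences that appear when $a$, $b$, or $a-b$ vanishes; organizing this by first recording, for each $U_k$, which of the flats $Z_1,Z_2,Z_3,P$ its orbit meets keeps the verification tractable.
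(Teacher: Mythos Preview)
The paper's proof is simply ``by inspection'' --- tabulate the action of the twelve $G^{(i)}$ on the $24$ triplets $\calD_j(a,b)$ and read off strong connectivity and maximality. Your approach attempts to shortcut this via the component structure of $\bbG$ (Lemmas~\ref{comp} and~\ref{trans}); that is a natural idea, but it neither saves work nor is quite complete as written.

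Your second paragraph --- that every $\calD G^{(i)}$ lands in $Z_1\cup Z_2\cup Z_3\cup P$ and hence equals some $\calD_j(a',b')$ --- is correct and is exactly the content of Proposition~\ref{colP}; it handles the ``every connected component is some $\bD(a,b)$'' half cleanly. The forward direction (each $\bD(a,b)$ is strongly connected) is also fine in outline, though establishing $\calD_0 U_k=\bD(a,b)$ is itself a $24$-element inspection. The genuine gap is in your maximality argument: you write that ``once a product of generators leaves $U_k$ it never returns; transported along the orbit \ldots the dynamics cannot re-enter $\bD(a,b)$.'' But the map $G\mapsto\calD_0 G$ is many-to-one, so $G\notin U_k$ does \emph{not} imply $\calD_0 G\notin\bD(a,b)$; the SCC structure of $\bbG$ does not automatically descend to the orbit. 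Your parenthetical check (that $\calD_0 U_m$ is disjoint from $\bD(a,b)$ for the relevant $m$) is exactly what is needed --- but it must be the main argument, not an aside, and it must cover every component $U_m$ in the downward closure of $U_k$, not only the immediate $\succ$-successors listed in Lemma~\ref{trans}. Once you grant that check, you are computing $\calD_0 G$ for all $G$ in those $U_m$'s, which is again inspection. In short, the $\bbG$-component machinery reorganises the bookkeeping without reducing it; the direct tabulation the paper alludes to is the more economical route here.
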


\begin{proof}
This statement may be proved by inspection.
\end{proof}

Let us discuss in brief the structure of the sets ${\bD}(a,b)$ for
different values $a,b$.  Clearly, ${\bD}(0,0)$ consists of the
single zero triplet $\calD_{0}=(0,0,0)$.  The connected components
${\bD}(\pm a,0)$, ${\bD}(0,\pm a)$, ${\bD}(a,a)$, ${\bD}(-a,-a)$
coincide and include the following 12 elements:
\begin{equation}\label{12disc}
\begin{alignedat}{2}
\calD_{1}(a)&=a(\hphantom{-}0,\hphantom{-}0,\hphantom{-}1),\quad&
\calD_{2}(a)&=a(-1,\hphantom{-}0,\hphantom{-}1),\\
\calD_{3}(a)&=a(-1,\hphantom{-}0,\hphantom{-}0),\quad&
\calD_{4}(a)&=a(-1,-1,\hphantom{-}0),\\
\calD_{5}(a)&=a(\hphantom{-}0,-1,\hphantom{-}0),\quad&
\calD_{6}(a)&=a(\hphantom{-}0,-1,-1),\\
\calD_{7}(a)&=a(\hphantom{-}0,\hphantom{-}0,-1),\quad&
\calD_{8}(a)&=a(\hphantom{-}1,\hphantom{-}0,-1),\\
\calD_{9}(a)&=a(\hphantom{-}1,\hphantom{-}0,\hphantom{-}0),\quad&
\calD_{10}(a)&=a(\hphantom{-}1,\hphantom{-}1,\hphantom{-}0),\\
\calD_{11}(a)&=a(\hphantom{-}0,\hphantom{-}1,\hphantom{-}0),\quad&
\calD_{12}(a)&=a(\hphantom{-}0,\hphantom{-}1,\hphantom{-}1).
\end{alignedat}
\end{equation}
We use notation ${\bD}(a)$ for this set. Geometrically the set
${\bD}(a)$ represents vertices of a partly distorted truncated
cuboctahedron, or triangular orthobicupola, shown in Fig.~\ref{tc}.
The structure of this component will be explained in more detail in
Section~\ref{S-proofT}. The set ${\bD}(a, -a)$, ${\bD}(a, 2a)$,
${\bD}(a, a/2)$, also consists of 12 elements. Geometrically these
sets ${\bD}(a)$ represent vertices of a distorted truncated
tetrahedron, shown in Fig.~\ref{tt}. Otherwise, a set ${\bD}(a,
b)$, consists of 24 elements, and represents vertices of a
distorted truncated octahedron, shown in Fig.~\ref{to}. The
structure of this component will be explained in more detail in
Section~\ref{knotsSS}.

\begin{figure}[htbp!]
\begin{center}
\hfill\includegraphics*[width=0.25\textwidth]{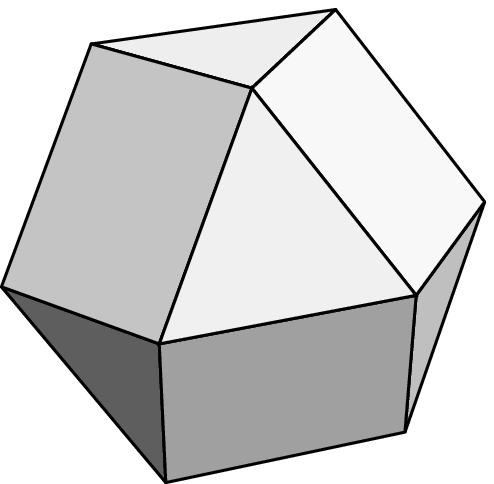}
\hfill
\includegraphics*[width=0.25\textwidth]{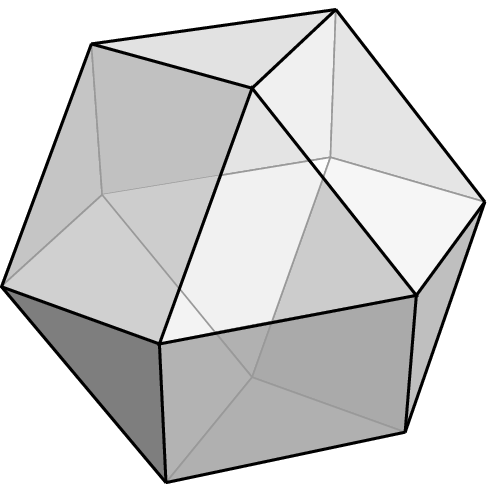}
\hfill~
\caption{Left: the form of a polyhedron with vertices ${\bD}(a)$, $a\not= 0$;
Right: the same polyhedron transparent.\label{tc}}
\end{center}
\end{figure}

\begin{figure}[htbp!]
\begin{center}
\hfill\includegraphics*[width=0.25\textwidth]{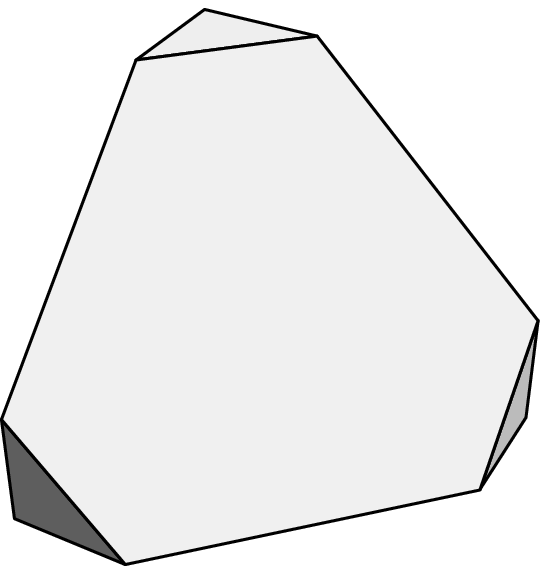}
\hfill
\includegraphics*[width=0.25\textwidth]{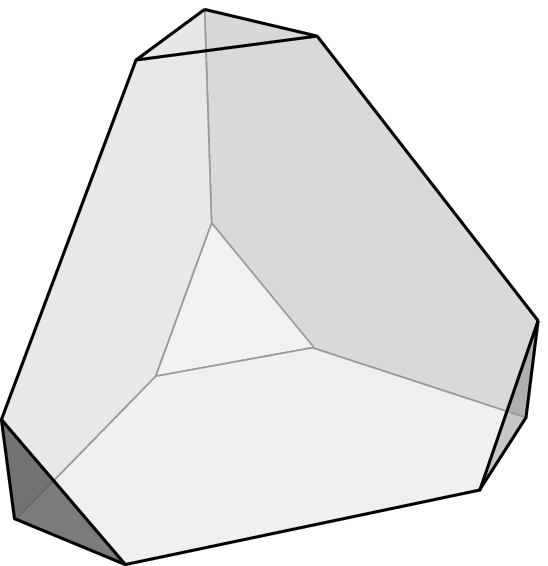}
\hfill~
\caption{Left: the form of  polyhedrons with vertices ${\bD}(a,-a)$,
${\bD}(a,2a)$, or ${\bD}(a,a/2)$, $a\not= 0$;
Right: the same polyhedron transparent.\label{tt}}
\end{center}
\end{figure}

\begin{figure}[htbp!]
\begin{center}
\hfill\includegraphics*[width=0.25\textwidth]{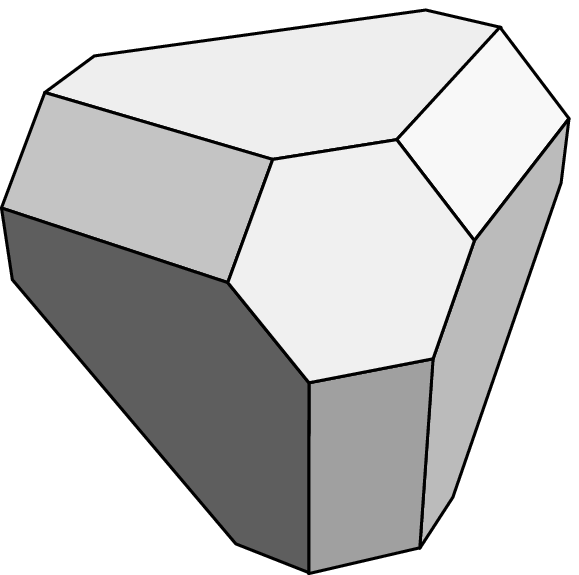}
\hfill
\includegraphics*[width=0.25\textwidth]{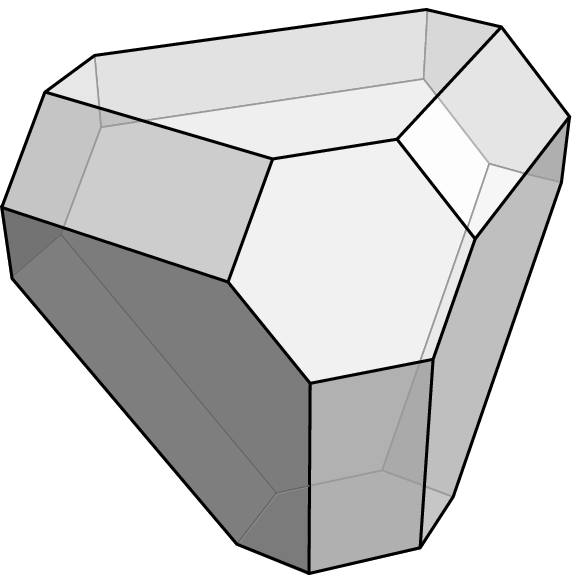}
\hfill~
\caption{Left: a typical form of a generic polyhedron
with vertices
${\bD}(a,b)$;
Right: the same polyhedron transparent.\label{to}}
\end{center}
\end{figure}

We formulate also a corollary of Proposition~\ref{trans}. For a set
${\bD}$ of discrepancies we denote by $G({\bD})$ the collection of
elements of the form $\calD G^{(i)}$, $\calD\in {\bD}$, $i=1,
\ldots , 12$.

\begin{corollary}\label{Dfini2} The equality
\[
G({\bD}(a,b)) ={\bD}(a,b))\bigcup {\bD}(a)\bigcup {\bD}(b)\bigcup {\bD}(a-b),
\]
holds for $a\not= b$.  Also $G({\bD}(a))={\bD}(a)\bigcup (0,0,0)$.
\end{corollary}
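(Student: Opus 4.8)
The plan is to pass, via Proposition~\ref{disc1P}, from arbitrages to the twelve matrices $G^{(1)},\dots,G^{(12)}$ of Lemma~\ref{Qlem}, and then to use Lemma~\ref{trans} to reduce the required check to a few representative computations with the explicit lists \eqref{24disc} and \eqref{12disc}. By Proposition~\ref{disc1P} the map $\calD\mapsto\calD\,G^{(i)}$ is precisely the action of the strong arbitrage $\Hat\calA^{(i)}$ on discrepancy ensembles, so $G(\bD)$ is exactly the set of discrepancy triplets obtainable from a member of $\bD$ by a single strong arbitrage. The first point to record (by inspection of Lemma~\ref{Qlem}) is that every $G^{(i)}$ is a rank-$2$ projector whose image is one of the four planes
\[
\{d_{\textrm{\euro}\textrm{\pounds}}=0\},\quad \{d_{\textrm{\euro}\textrm{\yen}}=0\},\quad \{d_{\textrm{\pounds}\textrm{\yen}}=0\},\quad \{d_{\textrm{\euro}\textrm{\pounds}}-d_{\textrm{\euro}\textrm{\yen}}+d_{\textrm{\pounds}\textrm{\yen}}=0\}
\]
(with $G^{(1)},G^{(3)},G^{(7)}$ onto the first, $G^{(2)},G^{(5)},G^{(9)}$ onto the second, $G^{(4)},G^{(6)},G^{(11)}$ onto the third, and $G^{(8)},G^{(10)},G^{(12)}$ onto the fourth, each along its own kernel direction), and that every one of the $24$ triplets in \eqref{24disc} lies on one of these four planes. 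Hence each of those triplets is fixed by the generator projecting onto ``its'' plane, which already gives the inclusions $\bD(a,b)\subseteq G(\bD(a,b))$ and $\bD(a)\subseteq G(\bD(a))$ for free.

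It remains to understand the images of a triplet under the generators that do \emph{not} fix it. Such an image is a degenerate triplet --- it has a zero entry, or two coinciding entries --- and therefore lies in a $12$-element connected component. Which components arise is dictated by Lemma~\ref{trans}: the component $\bD(a,b)$ plays the role of one of $U_{1},\dots,U_{6}$, and a single generator sends it into the discrepancy analogues of the three rank-$1$ components listed there. On the discrepancy side the seven rank-$1$ components of $\bbG$ collapse to only the three ``scales'' $a$, $b$ and $a-b$: the entries of the triplets \eqref{24disc} all lie in $\{0,\pm a,\pm b,\pm(a-b)\}$ and the projectors have $\{0,\pm 1\}$ entries, so no new scale (in particular no $a+b$) can be produced. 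One then identifies the three scales by representatives, e.g.
\[
\calD_{1}(a,b)\,G^{(1)}=(0,b-a,b-a)=\calD_{6}(a-b),\quad \calD_{1}(a,b)\,G^{(4)}=(b,b,0)=\calD_{10}(b),\quad \calD_{12}(a,b)\,G^{(1)}=(0,a,0)=\calD_{11}(a),
\]
in the notation of \eqref{12disc}. Running through the full $24\times 12$ table --- where Lemma~\ref{trans} has already cut the work down to one representative per orbit --- shows that $\bD(a)$, $\bD(b)$ and $\bD(a-b)$ are each exhausted and that nothing else appears, which is the reverse inclusion $G(\bD(a,b))\subseteq\bD(a,b)\cup\bD(a)\cup\bD(b)\cup\bD(a-b)$. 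Together with the previous paragraph this proves the equality for $a\ne b$.

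The residual identity $G(\bD(a))=\bD(a)\bigcup(0,0,0)$ follows by specialising: putting $b=0$ gives $\bD(a,0)=\bD(a)$, $\bD(0)=\{(0,0,0)\}$ and $\bD(a-0)=\bD(a)$, so the right-hand side of the first identity becomes $\bD(a)\bigcup(0,0,0)$. (Alternatively one argues directly: on $\bD(a)$ a generator either returns an element of $\bD(a)$ or annihilates the triplet, the latter happening precisely when the triplet lies on the generator's kernel line --- which here does occur, e.g.\ $\calD_{10}(a)\,G^{(1)}=a(1,1,0)G^{(1)}=(0,0,0)$ --- and all of $\bD(a)$ together with $(0,0,0)$ is so obtained.)

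The only genuinely laborious step is the bookkeeping in the second paragraph, i.e.\ verifying that the degenerate images really fill out $\bD(a)\cup\bD(b)\cup\bD(a-b)$ and contribute nothing extra; Lemma~\ref{trans} is exactly what makes this tractable, reducing the $288$ products that would otherwise have to be inspected to a handful of orbit representatives.
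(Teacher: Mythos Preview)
Your argument is correct and, in fact, considerably more explicit than what the paper offers: the paper simply introduces Corollary~\ref{Dfini2} with the phrase ``We formulate also a corollary of'' Lemma~\ref{trans} and gives no further justification, leaving the verification as an implicit inspection. Your projector observation (each $G^{(i)}$ is idempotent of rank~$2$, and every element of \eqref{24disc} lies on one of the four image planes) is a clean way to get the inclusion $\bD(a,b)\subseteq G(\bD(a,b))$ that the paper does not spell out, and your specialisation $b=0$ for the second identity is a nice shortcut.

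One point worth tightening: your appeal to Lemma~\ref{trans} is slightly informal, since that lemma concerns transitions between the \emph{matrix} components $U_1,\dots,U_{14}$ of $\bbG$, not between discrepancy components. The translation from one to the other is not automatic (a single $G^{(i)}$ applied to $\bD(a,b)$ is not literally a product in $\bbG$ moving between $U_i$'s). What actually carries the argument is the direct check you describe --- that the $24\times 12$ images land in, and exhaust, $\bD(a,b)\cup\bD(a)\cup\bD(b)\cup\bD(a-b)$ --- and your ``no new scale'' heuristic together with the representative computations. Similarly, the claim that the three $12$-element components are \emph{exhausted} by single-step images (not just hit once) does need the full tabulation or a symmetry argument; you acknowledge this as the laborious bookkeeping step, which is honest and matches the paper's own ``by inspection'' stance.
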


Some discrepancy triplets do not belong to any connected component;
however any element of the form $\calD G^{(i)}$ must belong to a
connected component. More precisely:
\begin{proposition}\label{colP}
The  following inclusions hold:
\begin{alignat*}{2}
(a,b,c)G^{(1,2)}&\in {\bD}(c,-a+b),\quad&(a,b,c)G^{(3,4)}&\in {\bD}(a-c,b),\\
(a,b,c)G^{(5,6)}&\in {\bD}(-b+c,a),\quad&(a,b,c)G^{(7,8)}&\in {\bD}(c,b),\\
(a,b,c)G^{(9,10)}&\in {\bD}(a,-c),\quad&(a,b,c)G^{(11,12)}&\in
{\bD}(a,b).
\end{alignat*}
\end{proposition}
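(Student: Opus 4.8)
The plan is to prove all six inclusions by a direct, finite computation, in the same ``by inspection'' spirit as the neighbouring lemmas. Everything reduces to the twelve explicit $3\times3$ matrices $G^{(1)},\dots,G^{(12)}$ of Lemma~\ref{Qlem} and to the explicit list \eqref{24disc} of the twenty-four triplets that span a connected component ${\bD}(a,b)$.

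First I would record an elementary observation that organises the computation: for every $i\le 12$ the matrix $G^{(i)}$ has an all-zero column. This is visible directly in Lemma~\ref{Qlem}, and it is the linear-algebraic shadow of the fact (Table~\ref{starbT}) that the strong arbitrage $\Hat\calA^{(i)}$ balances one of the three sub-markets and hence annihilates the corresponding discrepancy (see also Proposition~\ref{disc1P}). Consequently every product $(a,b,c)G^{(i)}$ has a vanishing coordinate, so a priori it already has the shape of one of the twelve ``zero-coordinate'' triplets $\calD_{7},\dots,\calD_{24}$ in \eqref{24disc}.

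The core of the argument is then the following routine loop over $i=1,\dots,12$: form the single product $(a,b,c)G^{(i)}$, then match the resulting triplet against the appropriate $\calD_{j}$ and read off the two parameters that make them coincide; those parameters name the connected component ${\bD}(\cdot,\cdot)$ containing the product. For instance, for the first pair one gets $(a,b,c)G^{(1)}=(0,-a+b,c)$ and $(a,b,c)G^{(2)}=(a-b,0,c)$, i.e.\ exactly $\calD_{10}$ and $\calD_{20}$ evaluated at the arguments $c$ and $-a+b$, so both sit in ${\bD}(c,-a+b)$; the other five pairs are treated identically. Two facts streamline the bookkeeping: the two matrices of a pair $\{G^{(2k-1)},G^{(2k)}\}$ turn out to send $(a,b,c)$ into the \emph{same} component ${\bD}(\cdot,\cdot)$, which is consistent with $G^{(2k-1)},G^{(2k)}$ lying in one connected component $U_{k}$ of $\bbG$ (Lemma~\ref{comp}) and gives a useful internal check; and the list \eqref{24disc} displays each ${\bD}(a,b)$ as invariant under interchange of the first and third coordinates, which roughly halves the number of distinct patterns one must recognise.

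I do not expect any real obstacle here. The statement is combinatorial and finite, and the entire proof is the twelve short computations $(a,b,c)G^{(i)}$ followed by pattern-matching against \eqref{24disc}; the only thing that demands care is to carry out those twelve multiplications and identifications without a sign slip and to keep straight which of $\calD_{1},\dots,\calD_{24}$ each image coincides with.
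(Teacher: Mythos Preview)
Your approach is exactly the paper's: the proof there reads ``This assertion may be proved by inspection,'' and your plan---multiply $(a,b,c)$ by each $G^{(i)}$ and match the result against the list \eqref{24disc}---is precisely that inspection spelled out.

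One small correction to your organising observation: it is not true that every $G^{(i)}$ has an all-zero column. The matrices $G^{(8)}$, $G^{(10)}$, $G^{(12)}$ have an all-zero \emph{row} but no zero column, so the products $(a,b,c)G^{(8)}=(b-c,b,c)$, $(a,b,c)G^{(10)}=(a,a+c,c)$, $(a,b,c)G^{(12)}=(a,b,-a+b)$ need not have a vanishing coordinate. They are, respectively, $\calD_{2}(c,b)$, $\calD_{3}(a,-c)$, $\calD_{1}(a,b)$---i.e.\ commuters rather than terminals---so you must match against all of $\calD_{1},\dots,\calD_{24}$, not only $\calD_{7},\dots,\calD_{24}$. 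With that adjustment the twelve computations go through exactly as you describe.
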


\begin{proof} This assertion may be proved by inspection.
\end{proof}

\subsection{Incremental Dynamics}\label{S-incdyn}
For a given sextuple $\calR$ we denote by $\calR'$ the triplet of
the first three components of $\calR$: $
\calR'=\left(r_{\textrm{\$}\textrm{\euro}},
r_{\textrm{\$}\textrm{\pounds}},
r_{\textrm{\$}\textrm{\yen}}\right)$.  Denote further $
\calI(\calR,\Hat\calA)=\log (\calR\Hat\calA)'-\log\calR', $ where
$\Hat\calA$ is a strong arbitrage.

\begin{proposition}
\label{incrP} $\calI(\calR,\Hat\calA)$ depends only on $\Hat\calA$
and $\calD(\calR)$ and may be described as follows:
\begin{alignat*}{3}
\calI(\calR,\Hat\calA^{(1)})&=d(\calR)H^{(1)}=&-d_{\textrm{\euro}\textrm{\pounds}}(\calR)\left(
1,0,0 \right), \\
\calI(\calR,\Hat\calA^{(2)})&=d(\calR)H^{(2)}=&-d_{\textrm{\euro}\textrm{\yen}}(\calR)\left( 1,0,0 \right) , \\
\calI(\calR,\Hat\calA^{(3)})&=d(\calR)H^{(3)}=&\hphantom{-}d_{\textrm{\euro}\textrm{\pounds}}(\calR)\left(0,
1,0 \right),\\
\calI(\calR,\Hat\calA^{(4)})&=d(\calR)H^{(4)}=&-d_{\textrm{\pounds}\textrm{\yen}}(\calR)\left(0, 1,0 \right), \\
\calI(\calR,\Hat\calA^{(5)})&=d(\calR)H^{(5)}=&\hphantom{-}d_{\textrm{\euro}\textrm{\yen}}(\calR)
\left( 0,0,1\right), \\
\calI(\calR,\Hat\calA^{(6)})&=d(\calR)H^{(6)}=&\hphantom{-}d_{\textrm{\pounds}\textrm{\yen}}(\calR)
\left( 0,0,1\right).
\end{alignat*}
Also the equalities
$\calI(\calR,\Hat\calA^{(i)})=d(\calR)H^{(i)}=(0,0,0)$ hold for
$i=7,8,9,10,11,12$.
\end{proposition}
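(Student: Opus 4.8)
The plan is to read the increment $\calI(\calR,\Hat\calA^{(i)})$ off the block-triangular representation established in Proposition~\ref{blockC}. Recall that the special coordinate change sends $\calR$ to the row vector $\bv(\calR)=\bigl(\log\calR',\,\calD(\calR)\bigr)\in\bbR^{6}$, split into two blocks of length three, and that $\bv(\calR\Hat\calA^{(i)})=\bv(\calR)D^{(i)}$ with
\[
D^{(i)}=\left(\begin{array}{ll}{\bf 1}&{\bf 0}\\ H^{(i)}&G^{(i)}\end{array}\right).
\]
Carrying out the block multiplication, the first three entries of $\bv(\calR\Hat\calA^{(i)})$ equal $(\log\calR'){\bf 1}+\calD(\calR)H^{(i)}=\log\calR'+\calD(\calR)H^{(i)}$, while the last three are $\calD(\calR)G^{(i)}$. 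But by the very definition of $\bv$, those first three entries are precisely $\log(\calR\Hat\calA^{(i)})'$. Hence
\[
\calI(\calR,\Hat\calA^{(i)})=\log(\calR\Hat\calA^{(i)})'-\log\calR'=\calD(\calR)H^{(i)},
\]
which already proves that the increment depends only on $\Hat\calA^{(i)}$ and on $\calD(\calR)$ — the qualitative half of the assertion.

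It then remains to substitute the explicit matrices $H^{(i)}$ from Lemma~\ref{Qlem} and expand $\calD(\calR)H^{(i)}$, writing $\calD(\calR)=\bigl(d_{\textrm{\euro}\textrm{\pounds}}(\calR),d_{\textrm{\euro}\textrm{\yen}}(\calR),d_{\textrm{\pounds}\textrm{\yen}}(\calR)\bigr)$. For $i=1$ the only nonzero entry of $H^{(1)}$ is a $-1$ in position $(1,1)$, so $\calD(\calR)H^{(1)}=-d_{\textrm{\euro}\textrm{\pounds}}(\calR)(1,0,0)$; the cases $i=2,\ldots,6$ are identical single-entry computations and reproduce the remaining five displayed formulas (the sign and the placement of the nonzero column in $H^{(i)}$ determining, respectively, the sign and the support of the resulting triplet). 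For $i=7,\ldots,12$ Lemma~\ref{Qlem} gives $H^{(i)}={\bf 0}$, whence $\calI(\calR,\Hat\calA^{(i)})=(0,0,0)$.

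The only step that calls for a little care is the block bookkeeping in the first paragraph: one must be sure that in the product $\bv(\calR)D^{(i)}$ the block $H^{(i)}$ acts on the discrepancy coordinates $\calD(\calR)$ and not on $\log\calR'$, and that the identity block contributes exactly $\log\calR'$, which then cancels against the subtracted term in the definition of $\calI$, leaving $\calD(\calR)H^{(i)}$ with no residual term. Once that identification is made, everything else is a direct table lookup against Lemma~\ref{Qlem}, so there is no genuine obstacle; in effect the whole statement is a one-line consequence of Proposition~\ref{blockC} together with the explicit form of the matrices $H^{(i)}$.
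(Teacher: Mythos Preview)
Your argument is correct and is exactly the approach the paper takes: the paper's entire proof is the single line ``Follows from Corollary~\ref{blockC}'', and what you have written is simply that deduction spelled out in full, together with the explicit substitution of the matrices $H^{(i)}$ from Lemma~\ref{Qlem}.
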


\begin{proof}
Follows from Corollary~\ref{blockC}.
\end{proof}

\subsection{Proof of Theorem~\ref{arbH}}\label{S-proofT}

This proceeds by graphing the detailed dynamics of the arbitrage
discrepancies. In this section we use the shorthand notation
$\calD_{i}$ instead of $\calD_{i}(a)$.

\begin{lemma}
\label{specialC} For any initial exchange rate ensemble belonging
to the list \eqref{list}, and for any arbitrage chain, the
corresponding sequence of discrepancies includes only elements from
the union $\calD_{0} \bigcup {\bD}(a)$, $a=\log\alpha$, see
\eqref{12disc}. The possible transition paths, arising from the
strong arbitrages listed in Table~\ref{tab1}, are given in
Table~\ref{tab3}.

Figure~\ref{GraphLeha1} plots the corresponding graph.
Figure~\ref{GraphLeha1_24} plots a similar graph, where the numbers
of the arbitrages from Table~\ref{tab1} are included, instead of
the numbers of strong arbitrages.
\end{lemma}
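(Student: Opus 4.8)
The plan is to reduce the claim to the linear action recorded in Proposition~\ref{disc1P}: a strong arbitrage $\Hat\calA^{(i)}$ sends the discrepancy triplet $\calD$ to $\calD G^{(i)}$, so the whole evolution of discrepancies is governed by the monoid $\bbG$ generated by the twelve matrices $G^{(i)}$ of Lemma~\ref{Qlem}, whose structure --- and in particular Corollary~\ref{Dfini2} --- is already in hand. First I would pass from ordinary to strong arbitrages one step at a time. Given $\calR$ and an ordinary arbitrage $\calA^{(n)}$: if $\calA^{(n)}$ is inactive then $\calR\calA^{(n)}=\calR$ and the discrepancy does not change; if $\calA^{(n)}$ is active then, by Tables~\ref{tab1} and~\ref{starbT}, it performs precisely the adjustment of the principal exchange rates carried out by the associated strong arbitrage $\Hat\calA^{(m)}$ (this is exactly how the strong arbitrages were defined), so $\calD(\calR\calA^{(n)})=\calD(\calR)\,G^{(m)}$ by Proposition~\ref{disc1P}. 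Thus a single step of any arbitrage chain either fixes $\calD$ or multiplies it on the right by one of the $G^{(i)}$.

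Next I would argue by induction on the length of the chain. For the base step, comparing \eqref{fori} with the list \eqref{12disc} shows that each of the six initial discrepancy triplets $\calD(\bar\calR^{i}_{\alpha})$, $i=1,\dots,6$, already lies in ${\bD}(a)$ with $a=\log\alpha$: explicitly $a(1,1,0)=\calD_{10}(a)$, $a(-1,0,1)=\calD_{2}(a)$, $a(0,-1,-1)=\calD_{6}(a)$, $a(1,0,0)=\calD_{9}(a)$, $a(0,1,0)=\calD_{11}(a)$, $a(0,0,1)=\calD_{1}(a)$. For the inductive step, assume the discrepancy after $k$ steps belongs to $\calD_{0}\bigcup{\bD}(a)$. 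After the next step it is either unchanged, hence still in $\calD_{0}\bigcup{\bD}(a)$, or it equals $\calD\,G^{(m)}$ for some $m$; but $\calD_{0}G^{(m)}=\calD_{0}$, and if $\calD\in{\bD}(a)$ then by Corollary~\ref{Dfini2} we have $\calD\,G^{(m)}\in G({\bD}(a))={\bD}(a)\bigcup\{(0,0,0)\}=\calD_{0}\bigcup{\bD}(a)$. This establishes the first assertion of the lemma.

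It remains to make the transitions explicit. Since, by \eqref{discrep}, the activation condition of every arbitrage in Table~\ref{tab1} amounts to a sign condition on one of $d_{\textrm{\euro}\textrm{\pounds}}$, $d_{\textrm{\euro}\textrm{\yen}}$, $d_{\textrm{\pounds}\textrm{\yen}}$, whether an arbitrage is active --- and, among a pair sharing a strong arbitrage, which one --- is already determined by which of the twelve vertices $\calD_{j}(a)$ the current discrepancy equals. I would therefore just compute, using the explicit matrices of Lemma~\ref{Qlem}, the products $\calD_{j}(a)\,G^{(i)}$ for all $j\in\{1,\dots,12\}$, $i\in\{1,\dots,12\}$ (a self-loop whenever the product returns $\calD_{j}(a)$), and then replace each strong arbitrage by the two ordinary arbitrages listed against it in the last column of Table~\ref{starbT}; this yields the entries of Table~\ref{tab3} and the arrows of the graphs in Figures~\ref{GraphLeha1} and~\ref{GraphLeha1_24}. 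I expect the main effort to be the sheer bulk of this finite verification rather than anything conceptual: once Corollary~\ref{Dfini2} is granted, the containment in $\calD_{0}\bigcup{\bD}(a)$ is immediate, and producing the transition table is a bounded computation.
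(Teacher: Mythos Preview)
Your argument is correct and follows essentially the same route as the paper, which simply records ``By inspection follows from Proposition~\ref{disc1P}.'' You have merely unpacked that inspection: the reduction to the action $\calD\mapsto\calD G^{(i)}$ via Proposition~\ref{disc1P}, the identification of the six initial discrepancies with vertices of ${\bD}(a)$, and the invariance step (for which you invoke Corollary~\ref{Dfini2} rather than redoing the $12\times12$ check) are exactly what the paper's one-line proof is pointing at.
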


\begin{proof}
By inspection follows from Proposition~\ref{disc1P}.
\end{proof}

Ignoring the zero vertex $\calD_0$, the edges that lead to this
vertex and directions of the edges, another, polyhedral,
representation of the graph plotted in Fig.~\ref{GraphLeha1} is
given in Fig.~\ref{GraphLeha2}. The corresponding polyhedron is a
distorted triangular orthobicupola, shown in Fig.~\ref{tc}. The
incidence matrix $I$ of the graph plotted in Fig.~\ref{GraphLeha2}
is as follows:
\[
I=
\left(
\begin{smallmatrix-mod}
1& 1& 0& 0& 1& 0& 0& 0& 1& 0& 0& 1\\
1& 1& 1& 1& 0& 0& 0& 0& 0& 0& 0& 1\\
0& 1& 1& 1& 0& 0& 1& 0& 0& 0& 1& 0\\
0& 1& 1& 1& 1& 1& 0& 0& 0& 0& 0& 0\\
1& 0& 0& 1& 1& 1& 0& 0& 1& 0& 0& 0\\
0& 0& 0& 1& 1& 1& 1& 1& 0& 0& 0& 0\\
0& 0& 1& 0& 0& 1& 1& 1& 0& 0& 1& 0\\
0& 0& 0& 0& 0& 1& 1& 1& 1& 1& 0& 0\\
1& 0& 0& 0& 1& 0& 0& 1& 1& 1& 0& 0\\
0& 0& 0& 0& 0& 0& 0& 1& 1& 1& 1& 1\\
0& 0& 1& 0& 0& 0& 1& 0& 0& 1& 1& 1\\
1& 1& 0& 0& 0& 0& 0& 0& 0& 1& 1& 1
\end{smallmatrix-mod}
\right) .
\]

\begin{table}[!htbp]
\caption{Transition, caused by strong arbitrages from
Table~\ref{tab1}}\label{tab3}
\begin{tabular}
{l|llllllllllll}
 & $ \calD_{1} $ & $ \calD_{2} $ & $ \calD_{3} $ & $ \calD_{4} $ & $ \calD_{5} $ & $ \calD_{6} $ & $ \calD_{7} $ & $ \calD_{8} $ & $ \calD_{9} $ & $ \calD_{10} $ & $ \calD_{11} $ & $ \calD_{12}$ \\
\hline\\
$\Hat\calA^{(1)} $ & $ \calD_{1} $ & $ \calD_{12} $ & $ \calD_{11} $ & $ \calD_{0} $ & $ \calD_{5} $ & $ \calD_{6} $ & $ \calD_{7} $ & $ \calD_{6} $ & $ \calD_{5} $ & $ \calD_{0} $ & $ \calD_{11} $ & $ \calD_{12}$ \\
$\Hat\calA^{(2)} $ & $ \calD_{1} $ & $ \calD_{2} $ & $ \calD_{3} $ & $ \calD_{0} $ & $ \calD_{9} $ & $ \calD_{8} $ & $ \calD_{7} $ & $ \calD_{8} $ & $ \calD_{9} $ & $ \calD_{0} $ & $ \calD_{3} $ & $ \calD_{2}$ \\
$\Hat\calA^{(3)} $ & $ \calD_{1} $ & $ \calD_{0} $ & $ \calD_{7} $ & $ \calD_{6} $ & $ \calD_{5} $ & $ \calD_{6} $ & $ \calD_{7} $ & $ \calD_{0} $ & $ \calD_{1} $ & $ \calD_{12} $ & $ \calD_{11} $ & $ \calD_{12}$ \\
$\Hat\calA^{(4)} $ & $ \calD_{1} $ & $ \calD_{0} $ & $ \calD_{3} $ & $ \calD_{4} $ & $ \calD_{5} $ & $ \calD_{4} $ & $ \calD_{3} $ & $ \calD_{0} $ & $ \calD_{9} $ & $ \calD_{10} $ & $ \calD_{11} $ & $ \calD_{10}$ \\

$\Hat\calA^{(5)} $ & $ \calD_{1} $ & $ \calD_{2} $ & $ \calD_{3} $ & $ \calD_{2} $ & $ \calD_{1} $ & $ \calD_{0} $ & $ \calD_{7} $ & $ \calD_{8} $ & $ \calD_{9} $ & $ \calD_{8} $ & $ \calD_{7} $ & $ \calD_{0}$ \\
$\Hat\calA^{(6)} $ & $ \calD_{5} $ & $ \calD_{4} $ & $ \calD_{3} $ & $ \calD_{4} $ & $ \calD_{5} $ & $ \calD_{0} $ & $ \calD_{11} $ & $ \calD_{10} $ & $ \calD_{9} $ & $ \calD_{10} $ & $ \calD_{11} $ & $ \calD_{0}$ \\
$\Hat\calA^{(7)} $ & $ \calD_{1} $ & $ \calD_{1} $ & $ \calD_{0} $ & $ \calD_{5} $ & $ \calD_{5} $ & $ \calD_{6} $ & $ \calD_{7} $ & $ \calD_{7} $ & $ \calD_{0} $ & $ \calD_{11} $ & $ \calD_{11} $ & $ \calD_{12}$ \\
$\Hat\calA^{(8)} $ & $ \calD_{2} $ & $ \calD_{2} $ & $ \calD_{0} $ & $ \calD_{4} $ & $ \calD_{4} $ & $ \calD_{6} $ & $ \calD_{8} $ & $ \calD_{8} $ & $ \calD_{0} $ & $ \calD_{10} $ & $ \calD_{10} $ & $ \calD_{12}$ \\

$\Hat\calA^{(9)} $ & $ \calD_{1} $ & $ \calD_{2} $ & $ \calD_{3} $ & $ \calD_{3} $ & $ \calD_{0} $ & $ \calD_{7} $ & $ \calD_{7} $ & $ \calD_{8} $ & $ \calD_{9} $ & $ \calD_{9} $ & $ \calD_{0} $ & $ \calD_{1}$ \\
$\Hat\calA^{(10)} $ & $ \calD_{12} $ & $ \calD_{2} $ & $ \calD_{3} $ & $ \calD_{3} $ & $ \calD_{0} $ & $ \calD_{6} $ & $ \calD_{6} $ & $ \calD_{8} $ & $ \calD_{9} $ & $ \calD_{10} $ & $ \calD_{0} $ & $ \calD_{12}$ \\
$\Hat\calA^{(11)} $ & $ \calD_{0} $ & $ \calD_{3} $ & $ \calD_{3} $ & $ \calD_{4} $ & $ \calD_{5} $ & $ \calD_{5} $ & $ \calD_{0} $ & $ \calD_{9} $ & $ \calD_{9} $ & $ \calD_{10} $ & $ \calD_{11} $ & $ \calD_{10}$ \\
$\Hat\calA^{(12)} $ & $ \calD_{0} $ & $ \calD_{2} $ & $ \calD_{2} $
& $ \calD_{4} $ & $ \calD_{6} $ & $ \calD_{6} $ & $ \calD_{0} $ & $
\calD_{8} $ & $ \calD_{8} $ & $ \calD_{10} $ & $ \calD_{12} $ & $
\calD_{12}$
\end{tabular}
\end{table}

\begin{figure}[!htbp]
\begin{center}
\includegraphics*{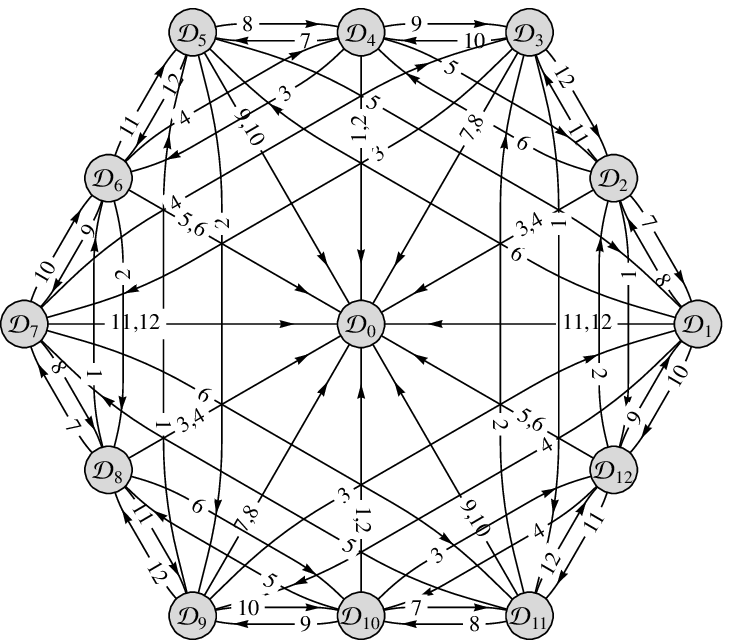}
\caption{Graph of the transitions caused by the strong arbitrages.
\label{GraphLeha1}}
\end{center}
\end{figure}

\begin{figure}[!htbp]
\begin{center}
\includegraphics*{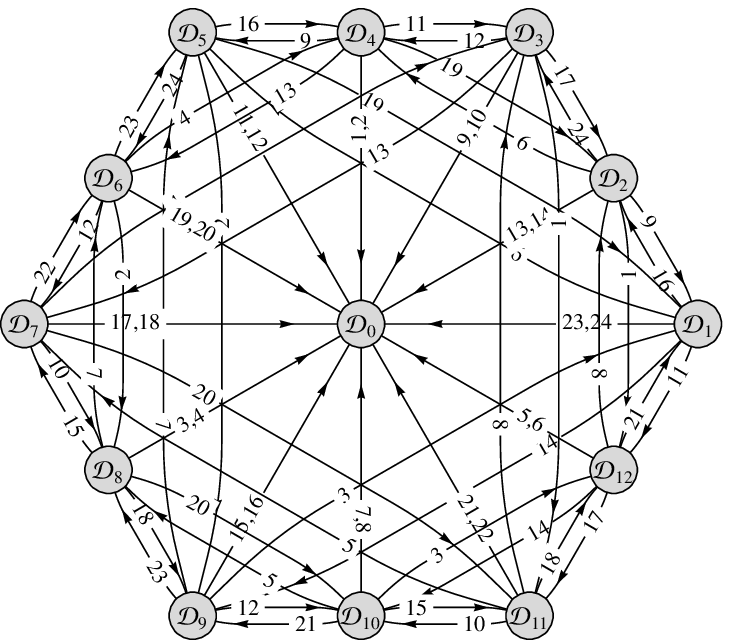}
\caption{The previous graph with the arbitrage numbers, instead of
the strong arbitrage numbers. \label{GraphLeha1_24}}
\end{center}
\end{figure}

\begin{figure}[!htbp]
\begin{center}
\includegraphics*{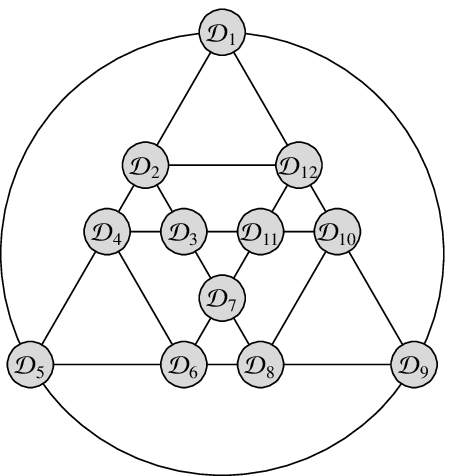}
\caption{The polyhedral representation of the principal
graph.\label{GraphLeha2}}
\end{center}
\end{figure}

Now let us deal with the coupled discrepancies and the incremental
dynamics.
\begin{corollary}\label{specialIC} For any arbitrage chain the
corresponding sequence of increments includes only the zero triplet
$\calI_{0}=(0,0,0)$ or one of the following six triplets:
\begin{alignat*}{3}
\calI_{1}&=a(1,\hphantom{-}0,\hphantom{-}0),\quad&
\calI_{2}&=a({-}1,\hphantom{-}0,\hphantom{-}0),\quad&
\calI_{3}&=a(0,\hphantom{-}1,\hphantom{-}0),\\
\calI_{4}&=a(0,{-}1,\hphantom{-}0),\quad&
\calI_{5}&=a(\hphantom{-}0,\hphantom{-}0,\hphantom{-}1),\quad&
\calI_{6}&=a(0,\hphantom{-}0,{-}1).\\
\end{alignat*}
\end{corollary}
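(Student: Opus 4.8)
The plan is to combine the block-triangular linearisation of Section~\ref{S-scs} with the finiteness of the discrepancy orbit established in Lemma~\ref{specialC}. Throughout, the initial ensemble is one of those in the list~\eqref{list}, so Lemma~\ref{specialC} applies: along the sequence $\calR_0,\calR_1,\calR_2,\ldots$ generated by any arbitrage chain, every discrepancy ensemble $\calD(\calR_k)$ lies in $\calD_0\cup\bD(a)$ with $a=\log\alpha$. Inspecting the twelve triplets in~\eqref{12disc} together with $\calD_0$, one sees at once that each of the three coordinates $d_{\textrm{\euro}\textrm{\pounds}}(\calR_k)$, $d_{\textrm{\euro}\textrm{\yen}}(\calR_k)$, $d_{\textrm{\pounds}\textrm{\yen}}(\calR_k)$ takes a value in the three-element set $\{-a,0,a\}$. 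This is the one fact that the whole corollary rests on.

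Next I would control the increment produced at each step, passing from ordinary to strong arbitrages. At the $k$-th step, if the ordinary arbitrage $\calA_k$ is not active then $\calR_k'=\calR_{k-1}'$ and the increment is the zero triplet $\calI_0$; if $\calA_k$ is active, then comparing the ``Actions'' columns of Tables~\ref{tab1} and~\ref{starbT} shows that it effects on the principal rates exactly the change produced by the strong arbitrage $\Hat\calA^{(m)}$ in whose row the index of $\calA_k$ appears, so the increment equals $\calI(\calR_{k-1},\Hat\calA^{(m)})$. By Proposition~\ref{incrP} this increment is $\calD(\calR_{k-1})H^{(m)}$, and the explicit list given there shows it is either $(0,0,0)$ or $\pm$ one of the coordinates $d_{\textrm{\euro}\textrm{\pounds}}(\calR_{k-1})$, $d_{\textrm{\euro}\textrm{\yen}}(\calR_{k-1})$, $d_{\textrm{\pounds}\textrm{\yen}}(\calR_{k-1})$ times one of the standard basis triplets $(1,0,0)$, $(0,1,0)$, $(0,0,1)$ (for $m=7,\ldots,12$ it is always zero).

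Combining the two observations then finishes the argument: since every reachable discrepancy coordinate lies in $\{-a,0,a\}$, each nonzero increment is $\pm a$ times a standard basis triplet, i.e.\ one of $\calI_1,\ldots,\calI_6$, while all remaining increments equal $\calI_0$. There is no genuine obstacle here — the statement is a bookkeeping corollary of the preceding linearisation — and the only point needing care is the inspection of~\eqref{12disc}: it is precisely the fact that no coordinate of a reachable discrepancy can escape $\{-a,0,a\}$ that collapses the a priori one-parameter family of possible increments down to the six listed vectors.
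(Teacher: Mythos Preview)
Your argument is correct and follows exactly the route the paper takes: the paper simply notes that the corollary ``follows immediately from Corollary~\ref{specialC} and Proposition~\ref{incrP}'', and you have spelled out precisely how those two ingredients combine, including the bookkeeping passage from ordinary to strong arbitrages. There is nothing to add.
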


The dynamics of the increments $\calI$ is conveniently visualised
in Fig.~\ref{GraphLeha3}.
\begin{figure}[!htbp]
\begin{center}
\includegraphics*{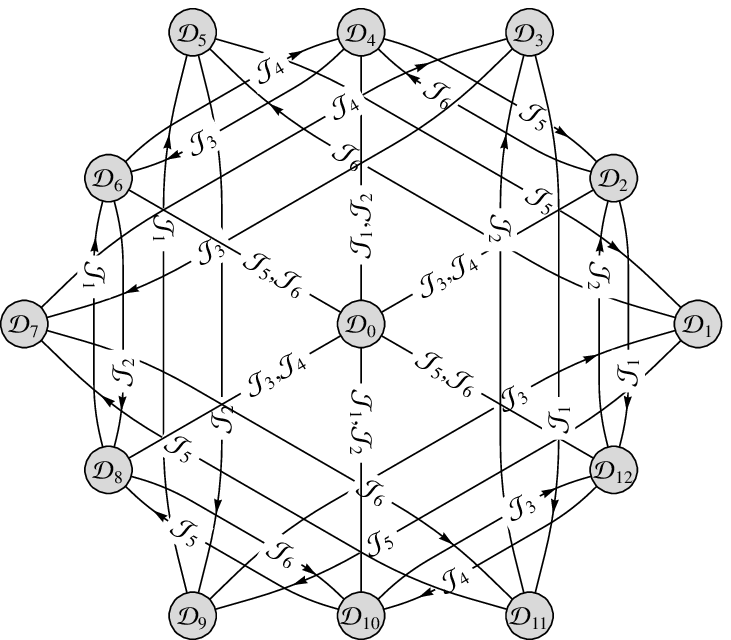}
\caption{The increment dynamics graph\label{GraphLeha3}}
\end{center}
\end{figure}

The correctness of this description of the dynamics of the
increments follows immediately from Corollary~\ref{specialC} and
Proposition~\ref{incrP}. The legitimacy of the algorithms relevant
to Theorem~\ref{arbH} and, therefore, proof of Theorem \ref{arbH}
and Proposition~\ref{arbA2H} follows from Figs.~\ref{GraphLeha1_24}
and~\ref{GraphLeha3}.

We note also that the 24-periodic chain of arbitrage from
Proposition~\ref{32} was also found looking at
Fig.~\ref{GraphLeha1_24} and~\ref{GraphLeha3}. The corresponding
route is quite natural from this perspective, and is given by
\begin{equation}\label{geomrout}
\begin{alignedat}{17}
\calD_{10}&\to&~\calD_{11}&\to&~\calD_{10}&\to&~\calD_{12}&\to&~\calD_{1}&\to& ~\calD_{12}&\to&
~\calD_{2}&\to&~\calD_{3}&\to&\\
\calD_{2}&\to&\calD_{4}& \to& \calD_{5}&\to&\calD_{4}&\to&
\calD_{6}&\to& \calD_{7}&\to& \calD_{6}& \to& \calD_{8}&\to\\
\calD_{9}&\to& \calD_{8}&\to&
\calD_{10}&\to&\calD_{8}&\to& \calD_{6}&\to& \calD_{4}&\to& \calD_{2}&\to&~\calD_{12}&\to&~\calD_{10}.
\end{alignedat}
\end{equation}

\subsection{Commuters, Terminals and Knots}\label{knotsSS}
Now we move to a proof of Theorem~\ref{irratBC} and Proposition
\ref{arbBH}. The case $d_{\textrm{\euro}\textrm{\pounds}}=
d_{\textrm{\euro}\textrm{\yen}}$ has been considered in
Section~\ref{directSS}. Thus we can assume that
$d_{\textrm{\euro}\textrm{\pounds}}\not=
d_{\textrm{\euro}\textrm{\yen}}$.

The focus is again on the dynamics of the exchange rate
discrepancies. The set  of all discrepancies that may be achievable
from $\calD=(a, b, 0)$ contains altogether 61 different elements,
see Corollary~\ref{Dfini2}. The corresponding connected component
${\bD}(a ,b)$, which contains $\calD=(a, b, 0)$, see
\eqref{24disc}, contains 24 elements listed in \eqref{24disc}. To
describe the detailed structure of this set we will introduce a new
notation. The set ${\bD}(a ,b)$,  contains six elements that have
all three components that are non-zero, and we re-denote these
elements by
\begin{alignat*}{3}
C_{1} &=\left(a,b,-a+b\right),& C_{2} &=\left(-a+b,b,a\right),&
C_{3} &=\left(a,a-b,-b\right),\\
C_{4} &=\left(-a+b,-a,-b\right),\quad& C_{5}
&=\left(-b,a-b,a\right),\quad& C_{6} &=\left(-b,-a,-a+b\right).
\end{alignat*}
We call these ensembles \emph{commuters} by way of analogy with
passenger travel.

We call an element with two non-zero components a \emph{terminal},
if $d_1\not= \pm d_2$.  There are altogether 18 terminals in
${\bD}(a,b)$. To each commuter $C_{i}$, $i=1,\ldots, 6$, we relate
three \emph{terminals} $T^{j}_{i}$, $j=1,2,3$, as follows:
\begin{alignat*}{3}
T^{1}_{1}&=\left(0,b,-a+b\right), &
T^{2}_{1}&=\left(a,0,-a+b\right), &
T^{3}_{1}&=\left(a,b,0\right);\\
T^{1}_{2}&=\left(0,b,a\right), & T^{2}_{2}&=\left(-a+b,0,a\right),
&
T^{3}_{2}&=\left(-a+b,b,0\right);\\
T^{1}_{3}&=\left(0,-a,-b\right),&
T^{2}_{3}&=\left(-a+b,0,-b\right),&
T^{3}_{3}&=\left(-a+b,-a,0\right);\\
T^{1}_{4}&=\left(0,a-b,-b\right), &
T^{2}_{4}&=\left(a,0,-b\right),&
T^{3}_{4}&=\left(a,a-b,0\right);\\
T^{1}_{5}&=\left(0,a-b,a\right), & T^{2}_{5}&=\left(-b,0,a\right),
&
T^{3}_{5}&=\left(-b,a-b,0\right);\\
T^{1}_{6}&=\left(0,-a,-a+b\right),\quad&
T^{2}_{6}&=\left(-b,0,-a+b\right),\quad&
T^{3}_{6}&=\left(-b,-a,0\right).
\end{alignat*}

\begin{lemma}
\label{commuL} The equalities
\begin{alignat*}{2}
  C_{i}G^{(7)}&=T^{1}_{i},&\quad
C_{i}H^{(7)}&=(0,0,0),\\
  C_{i}G^{(9)}&=T^{2}_{i},&\quad
 C_{i}H^{(9)}&=(0,0,0),\\
  C_{i}G^{(11)}&=T^{3}_{i},&\quad
 C_{i}H^{(11)}&=(0,0,0)
\end{alignat*}
hold for $i=1,\ldots,6$.  Also the following equalities hold: $
T^{j}_{i}G^{(k)}=C_{i}$, for $i=1,\ldots, 6$, $j=1,2,3$,
$k=8,10,12$.
\end{lemma}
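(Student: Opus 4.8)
The plan is to verify the asserted identities by direct substitution of the explicit $3\times 3$ matrices $G^{(k)}$ and $H^{(k)}$ recorded in Lemma~\ref{Qlem}, organising the computation around two structural facts so that it becomes uniform in the commuter index $i$ rather than a blind list of multiplications. As with the other verification lemmas in this section, the argument is ultimately a finite inspection; the point is to make that inspection transparent.

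First I would dispose of every assertion involving the $H^{(k)}$: by Lemma~\ref{Qlem} we have $H^{(k)}=\mathbf 0$ for all $k\in\{7,\dots,12\}$, so in particular $C_iH^{(7)}=C_iH^{(9)}=C_iH^{(11)}=(0,0,0)$ for every $i=1,\dots,6$ with no computation. Second, I would read off from Lemma~\ref{Qlem} that, acting on a row triplet $(d_1,d_2,d_3)$ on the right, the matrices $G^{(7)},G^{(9)},G^{(11)}$ are coordinate deletions, $(d_1,d_2,d_3)G^{(7)}=(0,d_2,d_3)$, $(d_1,d_2,d_3)G^{(9)}=(d_1,0,d_3)$, $(d_1,d_2,d_3)G^{(11)}=(d_1,d_2,0)$, while $G^{(8)},G^{(10)},G^{(12)}$ are the corresponding reconstructions, $(d_1,d_2,d_3)G^{(8)}=(d_2-d_3,d_2,d_3)$, $(d_1,d_2,d_3)G^{(10)}=(d_1,d_1+d_3,d_3)$, $(d_1,d_2,d_3)G^{(12)}=(d_1,d_2,d_2-d_1)$. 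The one algebraic fact tying the two halves together is that every commuter $C_i=(d_1,d_2,d_3)$ satisfies the linear relation $d_1-d_2+d_3=0$, equivalently $d_1=d_2-d_3$, $d_2=d_1+d_3$, $d_3=d_2-d_1$, which is checked at a glance from the six explicit expressions $C_1=(a,b,-a+b),\dots,C_6=(-b,-a,-a+b)$. With these in hand, the first block $C_iG^{(7)}=T^1_i$, $C_iG^{(9)}=T^2_i$, $C_iG^{(11)}=T^3_i$ is just the observation that the terminals $T^1_i,T^2_i,T^3_i$ are by construction $C_i$ with its first, second, third coordinate replaced by $0$; and the converse block $T^1_iG^{(8)}=T^2_iG^{(10)}=T^3_iG^{(12)}=C_i$ follows because on the commuter plane $d_1-d_2+d_3=0$ the maps $G^{(8)},G^{(10)},G^{(12)}$ restore exactly the coordinate that $G^{(7)},G^{(9)},G^{(11)}$ deleted. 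Running through $i=1,\dots,6$ (and $j=1,2,3$) finishes the proof.

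There is no genuine obstacle; the only points demanding care are the row-vector convention — discrepancy triplets multiply the $3\times 3$ matrices on the right, as fixed by Propositions~\ref{disc1P} and~\ref{incrP} — and the faithful transcription of the entries of $G^{(7)},\dots,G^{(12)}$ from Lemma~\ref{Qlem}. Once the coordinate-deletion and reconstruction reading of these six matrices is secured and the identity $d_1-d_2+d_3=0$ for commuters is noted, each of the individual equalities is a one-line check, and the statement follows by inspection.
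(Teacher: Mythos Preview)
Your proof is correct and is, in substance, the direct verification that the paper leaves implicit (the lemma is stated there without proof, in the same ``by inspection'' spirit as the surrounding results); your organisation around the facts that $G^{(7)},G^{(9)},G^{(11)}$ are coordinate deletions, that $G^{(8)},G^{(10)},G^{(12)}$ rebuild the deleted coordinate via the formulas $d_{1}=d_{2}-d_{3}$, $d_{2}=d_{1}+d_{3}$, $d_{3}=d_{2}-d_{1}$, and that every commuter sits on the plane $d_{1}-d_{2}+d_{3}=0$, makes the inspection uniform in $i$ and is a genuine improvement in clarity over a bare case check.

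One remark on the second block: the phrasing ``$j=1,2,3$, $k=8,10,12$'' in the lemma is ambiguous, and you have proved only the paired reading $(j,k)\in\{(1,8),(2,10),(3,12)\}$. That is in fact the only reading under which the assertion is true --- for instance $T^{1}_{1}G^{(10)}=(0,\,-a+b,\,-a+b)\neq C_{1}$ --- and it is all that the subsequent use of the lemma (free movement between terminals of a knot via the commuter without altering $\calR'$) requires. So your interpretation is the correct one; it would simply be worth saying so explicitly.
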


We group the commuters and terminals in six knots, $K_1, \ldots,
K_6$ as follows:
\[
K_{i}=\left\{ C_{i},T^{1}_{i},T^{2}_{i},T^{3}_{i}\right\},\quad
i=1,\ldots, 6.
\]
Figure~\ref{CommuterLeha} illustrates behaviour at a knot.
\begin{figure}[!htbp]
\begin{center}
\includegraphics*{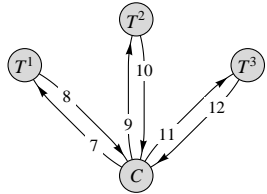}
\caption{The ``commuter--terminals'' graph of a
knot\label{CommuterLeha}}
\end{center}
\end{figure}

\subsection{Travel Between Knots}\label{S-knots}
Departing from a particular terminal, and applying some arbitrages
with numbers $k=7, \ldots, 12$, one can travel to another terminal
belonging to a different knot, simultaneously ``loading some
cargo'' upon the corresponded triplet $\calR'$. Details are given
in the following proposition.

\begin{proposition}
\label{cargoP} The following groups of equalities hold:
{\footnotesize\[ \left\{
\begin{alignedat}{4}
T^{1}_{1}G^{(3)}&=T^{1}_{2},\quad &
T^{1}_{1}H^{(3)}&=(0,a,0),\quad&
T^{1}_{1}G^{(5)}&=T^{2}_{4},\quad &  T^{1}_{1}H^{(5)}&=(0,0,b);\\
T^{2}_{1}G^{(1)}&=T^{1}_{6},\quad &
T^{2}_{1}H^{(1)}&=(-a,0,0),\quad &
T^{2}_{1}G^{(6)}&=T^{3}_{3},\quad &  T^{2}_{1}H^{(6)}&=(0,0,-a);\\
T^{3}_{1}G^{(2)}&=T^{2}_{6},\quad &
T^{3}_{1}H^{(2)}&=(-b,0,0),\quad &
T^{3}_{1}G^{(4)}&=T^{3}_{2},\quad & T^{3}_{1}H^{(4)}&=(0,a-b,0);
\end{alignedat}
\right.\\[2mm]
\]

\[
\left\{
\begin{alignedat}{4}
T^{1}_{2}G^{(2)}&=T^{2}_{5},\quad &  T^{1}_{2}H^{(2)}&=(-b,0,0),\quad &
T^{1}_{2}G^{(4)}&=T^{3}_{1},\quad &  T^{1}_{2}H^{(4)}&=(0,-a,0);\\
T^{3}_{2}G^{(1)}&=T^{2}_{2},\quad &  T^{3}_{2}H^{(1)}&=(a-b,0,0),\quad &
T^{3}_{2}G^{(6)}&=T^{3}_{3},\quad &  T^{3}_{2}H^{(6)}&=(0,0,-a);
\end{alignedat}
\right.\\[2mm]
\]

\[
\left\{
\begin{alignedat}{4}
T^{1}_{3}G^{(2)}&=T^{2}_{4},\quad &  T^{1}_{3}H^{(2)}=(a,0,0),\quad &
T^{1}_{3}G^{(4)}&=T^{3}_{6},\quad &  T^{2}_{3}H^{(4)}=(0,b,0);
\end{alignedat}
\right.\\[2mm]
\]

\[
\left\{
\begin{alignedat}{4}
T^{1}_{4}G^{(2)}&=T^{2}_{3},\quad &  T^{1}_{4}H^{(2)}&=(-a+b,0,0),\quad &
T^{1}_{4}G^{(4)}&=T^{3}_{5},\quad &  T^{1}_{4}H^{(4)}&=(0,b,0); \\
T^{3}_{4}G^{(1)}&=T^{1}_{3},\quad &  T^{3}_{4}H^{(1)}&=(-a,0,0),\quad &
T^{3}_{4}G^{(6)}&=T^{3}_{1},\quad &  T^{3}_{4}H^{(6)}&=(0,0,-b);
\end{alignedat}
\right.\\[2mm]
\]

\[
\left\{
\begin{alignedat}{4}
T^{1}_{5}G^{(2)}&=T^{2}_{2},\quad &  T^{1}_{5}H^{(2)}&=(-a+b,0,0),\quad &
T^{1}_{5}G^{(4)}&=T^{3}_{4},\quad &  T^{1}_{5}H^{(4)}&=(0,-a,0); \\
T^{3}_{5}G^{(1)}&=T^{1}_{2},\quad &  T^{3}_{5}H^{(1)}&=(b,0,0),\quad &
T^{3}_{5}G^{(6)}&=T^{3}_{6},\quad &  T^{3}_{5}H^{(6)}&=(0,0,a);
\end{alignedat}
\right.\\[2mm]
\]

\[
\left\{
\begin{alignedat}{4}
T^{1}_{6}G^{(2)}&=T^{2}_{1},\quad &  T^{1}_{6}H^{(2)}&=(a,0,0),\quad&
T^{1}_{6}G^{(4)}&=T^{3}_{3},\quad &  T^{1}_{6}H^{(4)}&=(0,a-b,0).
\end{alignedat}
\right.
\]}
\end{proposition}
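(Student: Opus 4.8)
The plan is to verify every listed equality by a direct computation inside the linear framework already in place. By Proposition~\ref{disc1P} the new discrepancy triplet is $\calD(\calR\Hat\calA^{(k)})=\calD(\calR)G^{(k)}$, and by Proposition~\ref{incrP} the increment loaded upon $\calR'$ is $\calD(\calR)H^{(k)}$, where the $3\times3$ matrices $G^{(k)}$ and $H^{(k)}$ are written out explicitly in Lemma~\ref{Qlem}. Hence each asserted identity $T^{j}_{i}G^{(k)}=T^{j'}_{i'}$ is simply the product of the explicit row vector $T^{j}_{i}$ (entries in $\{0,\pm a,\pm b,\pm(a-b)\}$) with the explicit matrix $G^{(k)}$, to be compared with the named target terminal; and each identity $T^{j}_{i}H^{(k)}=(\ldots)$ is even easier, since for $k=1,\ldots,6$ the matrix $H^{(k)}$ has a single nonzero entry, so the product merely reads off one coordinate of $T^{j}_{i}$, possibly negated, and drops it into one slot --- this is precisely Proposition~\ref{incrP} applied coordinatewise.

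Concretely I would first record the action of the six relevant matrices on a generic triplet, e.g.\ $(d_{1},d_{2},d_{3})G^{(1)}=(0,d_{2}-d_{1},d_{3})$, $(d_{1},d_{2},d_{3})G^{(2)}=(d_{1}-d_{2},0,d_{3})$, and analogously for $G^{(3)},\ldots,G^{(6)}$ and for $H^{(1)},\ldots,H^{(6)}$; note that each of $G^{(1)},\ldots,G^{(6)}$ annihilates exactly one coordinate. Then I would run each source terminal appearing in the statement through its two designated moves. Proposition~\ref{colP} already pins down which connected component $\bD(\cdot,\cdot)$ the image lies in; since the matrix has annihilated one coordinate, the image automatically has a zero coordinate, so it only remains to match the two surviving entries against the list \eqref{24disc} (rewritten in the $T^{j}_{i}$ notation) in order to name the target terminal and to confirm that its two nonzero entries are not equal up to sign, i.e.\ that the image is a genuine terminal lying in a different knot. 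The labour can be trimmed by using the symmetry permuting the non-dollar currencies $\euro,\pounds,\yen$, which permutes the knots $K_{1},\ldots,K_{6}$ (acting correspondingly on $a,b$ and on signs), so that not all six groups need be checked from scratch; but the brute-force verification is itself short.

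There is no real mathematical difficulty here: the statement is entirely computational and the only genuine work is bookkeeping. The step that demands care is the identification --- the $G^{(k)}$-image of a terminal must be matched to the correct one of the eighteen terminals $T^{j}_{i}$ (and to the correct knot), and the $a\leftrightarrow b$ interchange and the signs must be tracked faithfully throughout. I would present the final argument by assembling a transition table for the generic component $\bD(a,b)$ --- the analogue, for the commuter/terminal/knot layer, of Table~\ref{tab3} --- which turns the customary phrase ``by inspection'' into a fully checkable assertion.
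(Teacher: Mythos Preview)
Your proposal is correct and is essentially the approach the paper itself relies on: the paper states Proposition~\ref{cargoP} without any explicit proof, treating it as a collection of identities to be verified ``by inspection'' from the explicit $G^{(k)}$, $H^{(k)}$ of Lemma~\ref{Qlem} together with the terminal list, exactly as you outline. Your plan to first tabulate the generic action $(d_{1},d_{2},d_{3})G^{(k)}$ and then assemble a transition table is a cleaner and more auditable rendering of that same inspection step; the symmetry observation is a pleasant bonus but not needed.
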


We introduce the ``travel between knots'' directed graph $\Gamma$,
shown in Fig.~\ref{GraphLehaCargo}, as follows. This graph has $6$
vertices that correspond to the knots $K_1, \ldots, K_6$. A knot
$K_i$ is connected by an arrow with another knot $K_j$ if one of
terminals belonging to $K^{j}$ figures in the rows belonging to the
$i$-th subset of equalities from Proposition~\ref{cargoP}. For
instance, the knot $K_{1}$ is connected with
$K_{2},K_{3},K_{4},K_{6}$. Moreover each arrow corresponds to the
three dimensional ``cargo vector(s)'': these vectors are related in
a natural way to the increment vectors in the equalities above. For
instance,  we attach the cargo-vectors $(0,a-b,0)$ and $(0,a,0)$ to
the $K_1\to K_2$ arrow. The incidence matrix of this graph is
written below.
\[
I(\Gamma)=\left(
\begin{array}{cccccc}
 0&1&0&1&0&1\\
 1&0&1&0&1&0\\
 0&0&0&1&0&1\\
 1&0&1&0&1&0\\
 0&1&0&1&0&1\\
 0&1&0&0&0&0
\end{array}
\right)
\]

\begin{figure}[!htbp]
\begin{center}
\includegraphics*{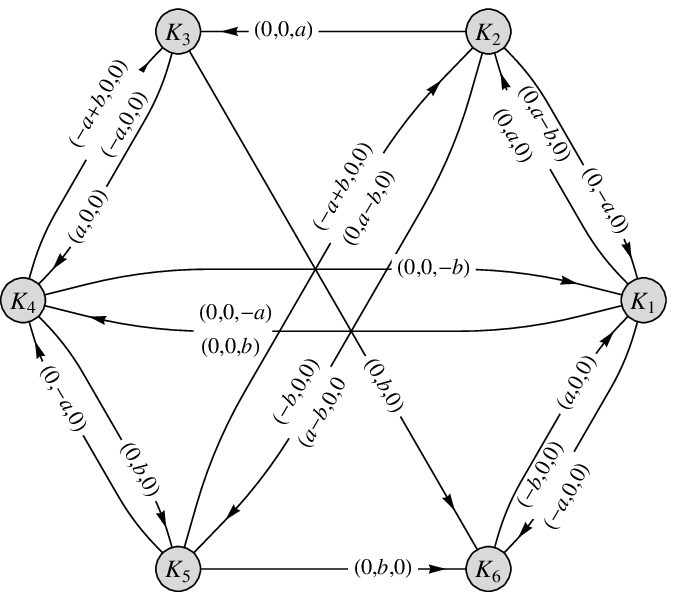}
\caption{The ``travel between knots'' graph
$\Gamma$\label{GraphLehaCargo}}
\end{center}
\end{figure}

\subsection{Finalising the proof  of Theorem~\ref{irratBC}
and Proposition~\ref{arbBH}}\label{fp2SS}
If the single transition $K_{i}\to K_{j}$ is possible we use
$W^{i\to j}$ for the corresponding cargo; we will use $W^{i\to
j}_1,W^{i\to j}_2$ if two transitions are possible. In the latter
case $W^{i\to j}_1$ refers to the upper vector indicated at graph
$\Gamma$.  For instance, $W^{1\to 2}_{1}=(0,a-b,0)$, $W^{1\to
2}_{2}=(0,a,0)$, $W^{2\to 1}=(0,-a,0)$, etc.

\begin{lemma}
\label{CyclesP} For any positive integers $N_{1},N_{2},N_{3}$ there
exists a chain ${\hbA}$ of strong arbitrages such that
$\calR{\hbA}$ has the form
\[
(r_{\textrm{\$}\textrm{\euro}}+m_{1}a-N_{1}b,
r_{\textrm{\$}\textrm{\pounds}}+m_{2}a+N_{2}b,
r_{\textrm{\$}\textrm{\yen}}+m_{3}a-N_{3}b)
\]
where $m_{1},m_{2},m_{3}$ are some positive integer numbers,
\end{lemma}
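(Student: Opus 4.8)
The quantity to be steered is the triplet $\calR'$ of the first three components of the ensemble, or rather its logarithm $\log\calR'$. By Proposition~\ref{incrP} each strong arbitrage shifts $\log\calR'$ by a vector --- call it a \emph{cargo} --- supported on a single coordinate, its size being a component of the current discrepancy $\calD(\calR)$. In Case~B we have $\calD(\calR)=(a,b,0)$, that is, the starting discrepancy is the terminal $T^3_1\in K_1$; so, by Lemma~\ref{commuL} and Proposition~\ref{cargoP}, the whole process is read off the ``travel between knots'' graph $\Gamma$ of Fig.~\ref{GraphLehaCargo}: motion inside a knot is free of cargo, while each edge of $\Gamma$ carries a cargo supported on one coordinate. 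The plan is to build $\hbA$ as a concatenation of a handful of explicit closed walks in $\Gamma$ whose cargoes add up to $(m_1a-N_1b,\,m_2a+N_2b,\,m_3a-N_3b)$.

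Four closed walks suffice; the total cargo of each is obtained by summing the $H$-vectors of Proposition~\ref{cargoP}. First, $\gamma_1$: the self-loop $C_2\to T^3_2\xrightarrow{G^{(1)}}T^2_2\to C_2$ at $K_2$, cargo $(a-b,0,0)$. Second, $\gamma_2$: the walk through $K_5$ given by $T^1_4\xrightarrow{G^{(4)}}T^3_5$ and $T^1_5\xrightarrow{G^{(4)}}T^3_4$, based at $K_4$, cargo $(0,b-a,0)$. Third, $\gamma_3$: the walk through $K_1$ and $K_3$ given by $T^3_4\xrightarrow{G^{(6)}}T^3_1$, $T^2_1\xrightarrow{G^{(6)}}T^3_3$, $T^1_3\xrightarrow{G^{(2)}}T^2_4$, based at $K_4$, cargo $(a,0,-a-b)$. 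Fourth, $\delta$: the walk through $K_3,K_4,K_5$ given by $T^1_6\xrightarrow{G^{(4)}}T^3_3$, $T^1_3\xrightarrow{G^{(2)}}T^2_4$, $T^1_4\xrightarrow{G^{(4)}}T^3_5$, $T^3_5\xrightarrow{G^{(6)}}T^3_6$, based at $K_6$, cargo $(a,a,a)$. The key point is that one pass of $\gamma_1,\gamma_2,\gamma_3$ contributes, respectively, the $b$-vectors $(-1,0,0),(0,1,0),(0,0,-1)$ to the cargo, whereas $\delta$ contributes no $b$ and the $a$-vector $(1,1,1)$.

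These walks must be spliced without disturbing the $b$-coefficients, and this is the only delicate point. The edges of $\Gamma$ with zero $b$-component (cargoes in $\{(\pm a,0,0),(0,\pm a,0),(0,0,\pm a)\}$) let one pass from $K_1$ to any of $K_2,K_3,K_4,K_6$ and from $K_2$ or $K_6$ back to $K_1$, but along them $K_5$ is never reachable and one cannot escape $\{K_3,K_4\}$ once inside. Respecting this, I route from $T^3_1\in K_1$ as follows: go $K_1\to K_6$ and run $\delta^{M}$ (the sole visit to $K_5$, with $\delta$ returning to $K_6$); go $K_6\to K_1\to K_2$ and run $\gamma_1^{N_1}$; go $K_2\to K_3\to K_4$ and run $\gamma_2^{N_2}$, then $\gamma_3^{N_3}$; stop. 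The five connecting hops each use a zero-$b$ edge and together carry cargo $(a,a,-a)$. Summing, the $b$-part of the total cargo is $N_1(-1,0,0)+N_2(0,1,0)+N_3(0,0,-1)=(-N_1,N_2,-N_3)$ on the nose, and the $a$-part is $(M+N_1+N_3+1,\;M-N_2+1,\;M-N_3-1)$; choosing $M=N_2+N_3+2$ makes all three $a$-coefficients positive integers $m_1,m_2,m_3$. The resulting chain $\hbA$ of strong arbitrages then carries $\calR$ to an ensemble whose first three components have the asserted form. The routine work is checking the four cargo totals against Proposition~\ref{cargoP} and tracking which terminal one occupies; the one genuine obstacle is the last splicing step, since the zero-$b$ subgraph of $\Gamma$ fails to be strongly connected ($K_5$ a source, $\{K_3,K_4\}$ a sink), which is precisely what pins down the order of the four walks.
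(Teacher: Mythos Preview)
Your argument is correct. All four cycle cargoes check against Proposition~\ref{cargoP}, the five connecting hops are indeed zero-$b$ edges with total cargo $(a,a,-a)$, and your choice $M=N_2+N_3+2$ makes all three $a$-coefficients positive. One harmless slip: there is no zero-$b$ edge $K_1\to K_4$ (the only edge $T^1_1\xrightarrow{G^{(5)}}T^2_4$ carries cargo $(0,0,b)$), but you never use that hop in your route, so the argument stands.

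The paper's proof is considerably shorter because it avoids the splicing problem altogether. It chooses three cycles \emph{all based at $K_1$}, with $b$-contributions $(-1,0,0)$, $(0,1,0)$, $(0,0,-1)$ and with every $a$-coefficient already nonnegative; since they share a basepoint, one simply concatenates cycle~$i$ exactly $N_i$ times and is done---no zero-$b$ subgraph analysis, no auxiliary $\delta$, no sign management. Your route instead plants the cycles at $K_2$, $K_4$, $K_6$ and pays for this with the structural analysis of which knots are mutually reachable via zero-$b$ edges (your observation that $K_5$ is a source and $\{K_3,K_4\}$ a sink in that subgraph). This is correct and has the merit of exposing more of the geometry of $\Gamma$, but it is working harder than necessary: the whole difficulty you identify---ordering the walks so as not to get trapped before visiting $K_5$---evaporates once all cycles share a basepoint.
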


\begin{proof}
Since the moves from one terminal to another, within a particular
knot, are always possible and do not change $\calR'$ (see Lemma
\ref{commuL}), any route allowed by the graph $\Gamma$ can be
performed, and any combination of corresponding cargo can be
loaded. For the cycle $K_1\to K_{2}\to K_{5}\to K_{6}\to K_{1}$ we
have
\[
W^{1\to 2}_{1}+W^{2\to 5}_{1}+W^{5\to 6}+W^{6\to 1}=( a-b ,a,0).
\]
For the cycle $K_1\to K_{2}\to K_{3}\to K_{6}\to K_{1}$ we have
\[
W^{1\to 2}_{2}+W^{2\to 3}+W^{3\to 6}+W^{6\to 1}=( a ,a+b,a).
\]
For the cycle $K_1\to K_{2}\to K_{3}\to K_{4}\to K_{1}$ we have
\[
W^{1\to 2}_{2}+W^{2\to 3}+W^{3\to 4}+W^{4\to 1}=( a ,a,a-b).
\]
\end{proof}

\begin{corollary}
\label{abc} For any non-negative integers $N_1,N_{2},N_{3}$ and
$M_{1},M_{2},M_{3}$ there exists a chain ${\hbA}$ of strong
arbitrages such that $\calR{\hbA}$ has the form
$(r_{\textrm{\$}\textrm{\euro}}+M_{1}a-N_{1}b,
r_{\textrm{\$}\textrm{\pounds}}
+M_{2}a+N_{2}b,r_{\textrm{\$}\textrm{\yen}}+M_{3}a-N_{3}b)$.
\end{corollary}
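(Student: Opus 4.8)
The plan is to recast the claim as a question about which ``cargo'' vectors can be accumulated by walks in the graph $\Gamma$ of Figure~\ref{GraphLehaCargo}. Since $\calD(\calR)=(a,b,0)=T^{3}_{1}$ sits in the knot $K_{1}$, and since by Lemma~\ref{commuL} the moves between the four terminals of one knot do not alter $\calR'$, every closed walk of $\Gamma$ based at $K_{1}$ is realized by some chain of strong arbitrages, and two such chains may be concatenated. Hence the set $\mathcal{C}\subset\mathbb{Z}^{3}$ of increments $\log(\calR\hbA)'-\log\calR'$ obtainable this way is a sub-semigroup of the module spanned by $ae_{i},be_{i}$ ($i=1,2,3$): it is closed under addition, and it contains the cargo of every closed walk through $K_{1}$ (traversed any number of times, in any combination).

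Next observe that the set of sextuples $(M_{1}a-N_{1}b,\,M_{2}a+N_{2}b,\,M_{3}a-N_{3}b)$ with $M_{i},N_{i}\ge 0$ is precisely the additive semigroup generated by the six vectors $(a,0,0),(0,a,0),(0,0,a),(-b,0,0),(0,b,0),(0,0,-b)$, since $(M_{1}a-N_{1}b,\,M_{2}a+N_{2}b,\,M_{3}a-N_{3}b)=\sum M_{i}\,(a\text{-axis})_{i}+\sum N_{i}\,(\mp b\text{-axis})_{i}$. Thus it suffices to exhibit each of these six vectors as the cargo of a closed walk based at $K_{1}$, after which $\mathcal{C}$ contains the whole target semigroup.

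Three cycles through $K_{1}$ with cargoes $(a-b,a,0)$, $(a,a+b,a)$, $(a,a,a-b)$ are already produced in the proof of Lemma~\ref{CyclesP}. Reading off further closed walks from Proposition~\ref{cargoP} supplies the missing generators: e.g.\ $K_{1}\!\to\! K_{6}\!\to\! K_{3}\!\to\! K_{6}\!\to\! K_{1}$ (using $T^{2}_{1}\!\to\! T^{1}_{6}$, $T^{1}_{6}\!\to\! T^{3}_{3}$, $T^{1}_{3}\!\to\! T^{3}_{6}$, $T^{1}_{6}\!\to\! T^{2}_{1}$, with the intervening within-knot moves from Lemma~\ref{commuL}) has cargo $(-a,0,0)+(0,a-b,0)+(0,b,0)+(a,0,0)=(0,a,0)$, and $K_{1}\!\to\! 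K_{2}\!\to\! K_{1}$ (using $T^{3}_{1}\!\to\! T^{3}_{2}$ then $T^{1}_{2}\!\to\! T^{3}_{1}$) has cargo $(0,a-b,0)+(0,-a,0)=(0,-b,0)$; a cycle $K_{1}\!\to\! K_{6}\!\to\! K_{1}$ gives $(a-b,0,0)$, and analogous walks (together with the three Lemma~\ref{CyclesP} cycles and short non-negative combinations of all of the above) produce the remaining axis vectors $(a,0,0)$, $(0,0,a)$, $(0,b,0)$, $(0,0,-b)$. Given the target non-negative integers, one then writes the desired increment as the obvious non-negative integer combination of the six generators, takes the corresponding concatenation of closed walks as $\hbA$, and concludes that $\calR\hbA$ has the asserted form.

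The main obstacle is the middle paragraph: one must genuinely locate closed walks that shift the $b$-coefficients with a \emph{controllable} $a$-footprint, so that targets with small $M_{i}$ relative to the $N_{j}$ are still reachable --- equivalently, one must verify that the explicit cycles available in $\Gamma$ generate the full cone and not merely a proper sub-cone. This is a finite but slightly delicate search through the arrows of $\Gamma$ and the cargo data of Proposition~\ref{cargoP}; everything else is bookkeeping.
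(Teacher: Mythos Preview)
Your strategy is genuinely different from the paper's, and the difference matters. The paper does \emph{not} stay inside $\bD(a,b)$ and attempt to realise all six axis generators as closed-walk cargoes in $\Gamma$. Instead it runs a two-phase argument: first it invokes Lemma~\ref{CyclesP} to hit the desired $b$-coefficients $N_1,N_2,N_3$ (with some uncontrolled positive $a$-coefficients $m_1,m_2,m_3$), then it performs one further strong arbitrage that drops the discrepancy from $K_1\subset\bD(a,b)$ down into the smaller component $\bD(a)$, and finally it appeals to Corollary~\ref{arbAH} (i.e.\ to the already-proved Theorem~\ref{arbH}) to adjust $m_1,m_2,m_3$ freely to the prescribed $M_1,M_2,M_3$. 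The point is that once you are in $\bD(a)$ the $\pm a$ moves in every coordinate are available, so no delicate semigroup analysis is needed.

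Your route, by contrast, insists on closed walks based at $K_1$, and therefore you must show that the semigroup $\mathcal{C}$ of such cargoes contains each of $(a,0,0),(0,a,0),(0,0,a),(-b,0,0),(0,b,0),(0,0,-b)$. You have not done this. You exhibit $(0,a,0)$ and $(a-b,0,0)$, and then $(0,-b,0)$ --- but $(0,-b,0)$ is the \emph{wrong sign} for the second coordinate (the target has $+N_2b$ there), so it is not one of your six generators; and for $(a,0,0),(0,0,a),(0,b,0),(0,0,-b)$ you give only a promissory ``analogous walks \ldots\ and short non-negative combinations''. With just the five vectors you (and Lemma~\ref{CyclesP}) actually produced, the system of equations forces $M_3=N_2+N_3$, so the target semigroup is \emph{not} yet covered. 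Whether enough further cycles exist to repair this is exactly the question; calling it ``bookkeeping'' understates it --- it is the whole proof. The paper sidesteps this entirely by exiting to $\bD(a)$.
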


\begin{proof}
From the lemma above it follows that we can achieve the state
\[
(r_{\textrm{\$}\textrm{\euro}}+m_{1}a-(N_{1}-1)b,
r_{\textrm{\$}\textrm{\pounds}}
+m_{2}a+(N_{2})b,r_{\textrm{\$}\textrm{\yen}}
+m_{3}a-(N_{3}+1)b,a,b,-a+b).
\]
Then moving to the  terminal $T^{3}_{1}$ and applying arbitrage
$\Hat\calA^{(6)}$ we arrive at
\[
(r_{\textrm{\$}\textrm{\euro}}+(m_{1}-1)a-(N_{1})b,
r_{\textrm{\$}\textrm{\pounds}}
+m_{2}a+(N_{2})b,r_{\textrm{\$}\textrm{\yen}}
+m_{3}a-N_{3}b,0,a,0).
\]
However, from this state we can, by Proposition~\ref{arbAH}, adjust
the numbers $m_1,m_2, m_3$ to the targets $M_1,M_2, M_3$.
\end{proof}

Theorem~\ref{irratBC} and Proposition~\ref{arbBH} follow
immediately from this last corollary.

\section{Concluding Remarks}\label{S-conclude}

The key contribution of this paper is to ask what happens to
arbitrage sequences when the number of goods or assets under
consideration is four, rather than the two, or occasionally three,
usually considered. The model is illustrated with regard to a
foreign exchange market with four currencies and traders, so there
are $C_{4}^{2}=6$ principal exchange rates. Despite abstracting
from various complications -- such as transaction costs, capital
requirements and risk -- that are often invoked to explain the
limits to arbitrage, we find that the arbitrage operations
conducted by the FX traders can generate periodicity or more
complicated behaviour in the ensemble of exchange rates, rather
than smooth convergence to a ``balanced'' ensemble where the law of
one price holds.

We use the fiction of an Arbiter, who knows all   the actual
exchange rates and what a balanced ensemble would be, to bring out
the information problem. FX traders tend to specialise in
particular currencies, so the assumption that the FX traders are
initially aware only of the exchange rates for their own
``domestic'' currencies is not entirely implausible. We show that
the order in which the Arbiter reveals information to individual
traders regarding discrepancies in exchange rate ensembles makes a
key difference to the arbitrage sequences that will be pursued. The
sequences are periodic in nature, and show no clear signs of
convergence on a balanced ensemble of exchange rates. The Arbiter
might know the law of one price exchange rate ensemble, but the
traders have little chance of stumbling onto such an ensemble by
way of their arbitrage operations.

The analysis in the present paper raises several issues to pursue
in future research. An obvious extension is to allow for a larger
number of currencies and ask what happens to the arbitrage
sequences as this number becomes large. One interesting
modification of the analysis would allow the FX traders to learn
that arbitrage sequences tend to be periodic and modify their
arbitrage strategies to take the periodicity into account. Another
modification would allow some arbitrage operations to be pursued
simultaneously, and ask what happens as the limiting case where all
arbitrage operations are exploited simultaneously is approached. An
alternative reformulation of the analysis would be as a Markov
process where the states are sextuples of exchange rates between
the four currencies and the passages between the states reflect the
effects of arbitrage operations pursued. It would be interesting to
see if this could be done without compromising the relative
simplicity of the present formulation. Finally, but by no means
exhaustively, it would be interesting to work with high frequency
data sets to test for the existence of the types of arbitrage
sequences postulated in the present paper.

 \providecommand{\nosort}[1]{} \providecommand{\bbljan}[0]{January}
  \providecommand{\bblfeb}[0]{February} \providecommand{\bblmar}[0]{March}
  \providecommand{\bblapr}[0]{April} \providecommand{\bblmay}[0]{May}
  \providecommand{\bbljun}[0]{June} \providecommand{\bbljul}[0]{July}
  \providecommand{\bblaug}[0]{August} \providecommand{\bblsep}[0]{September}
  \providecommand{\bbloct}[0]{October} \providecommand{\bblnov}[0]{November}
  \providecommand{\bbldec}[0]{December}

\end{document}